\documentclass[prx,twocolumn,english,superscriptaddress,floatfix,longbibliography]{revtex4-2}

\usepackage{natbib}
\usepackage{physics}
\usepackage{xcolor}
\usepackage{amsmath, amsfonts}
\usepackage{amsthm}
\usepackage{siunitx}
\usepackage{graphicx}
\usepackage[english]{babel}
\usepackage[utf8]{inputenc}
\usepackage[autostyle]{csquotes}
\MakeOuterQuote{"}
\usepackage[shortlabels]{enumitem}
\usepackage{dsfont}
\usepackage{comment}
\usepackage{verbatim}
\usepackage{xpatch}
\usepackage{float}
\usepackage{etoolbox}
\usepackage[normalem]{ulem} 
\newtheorem{theorem}{Theorem}
\newtheorem{corollary}{Corollary}[theorem]

\newtheorem{remark}{Remark}[theorem]
\usepackage[most]{tcolorbox}
\usepackage{booktabs}
\usepackage{tabularx}
\usepackage{subcaption}

\date{\today}

\definecolor{darkgreen}{rgb}{0.0, 0.8, 0.0}

\usepackage{etoolbox}
\usepackage{lipsum}

\newcommand{\deltaone}{\delta_1}

\newcommand{\epsATb}{\varepsilon_\mathrm{dep-1}}

\newcommand{\epssrc}{\varepsilon_\mathrm{ind}}

\newcommand{\epsspone}{\varepsilon_\mathrm{sp-1}}
\newcommand{\epssptwo}{\varepsilon_\mathrm{sp-2}}

\newcommand{\epsATc}
{\varepsilon_\mathrm{dep-2}}

\newcommand{\Sindep}
{S_{0,0}}

\newcommand{\Sdep}{S_{\delta_1,\delta_2}}

\newcommand{\Prindep}{\Pr_{\Sindep}}
\newcommand{\Prdep}{\Pr_{\Sdep}}

\newcommand{\Findep}{\mathcal{F}_{0,0}}

\newcommand{\Eindep}{\mathcal{E}_{0,0}}

\newcommand{\Eindepdecoy}{\Eindep^{\mathrm{decoy}}}

\newcommand{\Findepdecoy}{\Findep^{\mathrm{decoy}}}

\usepackage[hidelinks,colorlinks=true,linkcolor=blue,citecolor=blue]{hyperref}	
\usepackage[capitalize]{cleveref}

\crefname{appendix}{Appendix}{Appendices}

\makeatletter
\AddToHook{cmd/appendix/before}{\def\cref@section@alias{appendix}\def\cref@subsection@alias{appendix}}
\makeatother

\definecolor{pink}{RGB}{255,0,255}

\newcommand{\affvqcc}{Vigo Quantum Communication Center, University of Vigo, Vigo E-36310, Spain}
\newcommand{\affuvigo}{Escuela de Ingeniería de Telecomunicación, Department of Signal Theory and Communications, University of Vigo, Vigo E-36310, Spain}
\newcommand{\affatlantic}{atlanTTic Research Center, University of Vigo, Vigo E-36310, Spain}

\begin{document}
\author{Xoel Sixto}
    \email{xsixto@vqcc.uvigo.es}
	\affiliation{\affvqcc} \affiliation{\affuvigo} \affiliation{\affatlantic} 
\author{Guillermo Currás-Lorenzo}
        \affiliation{\affvqcc} \affiliation{\affuvigo} \affiliation{\affatlantic} 
\author{Margarida Pereira}
        \affiliation{\affvqcc} \affiliation{\affuvigo} \affiliation{\affatlantic}
\author{Víctor Zapatero}
	 	\affiliation{\affvqcc} \affiliation{\affuvigo} \affiliation{\affatlantic} 
\author{Álvaro Navarrete}
        \affiliation{\affvqcc} \affiliation{\affuvigo} \affiliation{\affatlantic} 
\author{Marcos Curty}
\affiliation{\affvqcc} \affiliation{\affuvigo} \affiliation{\affatlantic}

\title{Finite-key security analysis of decoy-state QKD with source and detector imperfections}
\begin{abstract}
Decoy-state quantum key distribution (QKD) is the most widely adopted approach for overcoming the limitations of imperfect single-photon sources. However, existing security proofs typically either neglect important device imperfections or rely on assumptions that are difficult to justify in realistic high-speed implementations, such as the independent and identically distributed nature of the emitted signals. In this work, we combine and extend several recent theoretical advances to provide a comprehensive analytical finite-key security proof for decoy-state QKD that simultaneously incorporates multiple practically relevant transmitter and receiver imperfections, including state-preparation flaws, bit and basis side-channel leakage and correlations, setting-independent intensity fluctuations, and detection-efficiency mismatches.
\end{abstract}

\maketitle

\section{Introduction}\label{sec:introduction}

Quantum key distribution (QKD) enables the establishment of symmetric cryptographic keys between two remote parties, conventionally referred to as Alice and Bob \cite{qkd1,qkd2,qkd3}. Unlike classical and post-quantum cryptographic schemes, whose security relies on assumptions regarding computational hardness, QKD guarantees security based on fundamental principles of quantum mechanics, such as the no-cloning theorem \cite{cloning}. Consequently, QKD can, in principle, provide information-theoretically secure communication even in the presence of an eavesdropper (Eve) equipped with unlimited computational capabilities.

Nevertheless, several challenges must still be addressed before QKD can be deployed on a large scale. One of the most critical issues is implementation security, namely bridging the gap between the assumptions underlying current security proofs and the actual behaviour of practical QKD devices \cite{zapatero2024implementation, BSI}. Even small discrepancies between theoretical assumptions and practical realizations may introduce security loopholes exploitable by Eve. For instance, many security analyses assume that Alice’s source emits single-photon states \cite{pereira_2020,Guille_framework}, an assumption that is difficult to fulfil with current technology. In practice, most QKD systems instead rely on laser sources emitting phase-randomized weak coherent pulses (PRWCPs). This mismatch between theoretical models and implementations is well known to enable photon-number-splitting (PNS) attacks \cite{pns3,pns2}, potentially compromising the security of the generated key.

The most efficient and widely adopted approach to overcome this limitation is the decoy-state method \cite{high_loss,decoy,decoy2}. Instead of using a single fixed intensity, Alice randomly modulates the intensity of the PRWCPs sent to Bob. This strategy enables a more accurate characterization of the quantum channel, allowing Alice and Bob to tightly estimate the yield and phase-error rate associated with the single-photon components from the measurement statistics obtained for different intensity settings. As a result, the decoy-state method achieves a secret key rate that scales linearly with the channel transmittance \cite{concise}, matching the performance attainable with ideal single-photon sources. Importantly, this technique also constitutes a key ingredient in other QKD protocols based on laser sources, such as measurement-device-independent (MDI) QKD \cite{MDI} and twin-field (TF) QKD \cite{twin}. Moreover, the decoy-state method has been extensively validated experimentally \cite{exp_decoy1,exp_decoy5,exp_decoy7,exp_decoy8,exp_decoy9,exp_decoy10,exp_decoy11} including in satellite-based QKD links \cite{sat1,sat5} and photonic integrated platforms \cite{exp_decoy12}. Indeed, this technique is currently used in several commercial QKD systems \cite{toshiba, thinkq, idquantique, telecom, qteck}.

Despite these advances, current security proofs for decoy-state QKD still fall short of establishing the security of practical implementations, as they neglect several important device imperfections present in realistic setups, including state-preparation flaws (SPFs) \cite{loss_tolerant, Mizutani_2015, pereira_2019,pereira_2020,Marcomini_2025}, side-channel leakage \cite{tha1,tha2,tha3,tha4,pereira_2019,pereira_2020,Sixto_2025}, and detection-efficiency mismatch \cite{ Tupkary2025phaseerrorrate,misma,Marcomini_2025}. Moreover, the proofs that do incorporate some of these imperfections typically consider restricted attack models or assume that Alice’s emitted signals are independent and identically distributed (IID). In practice, however, this assumption is difficult to satisfy due to the limited bandwidth of optical modulators and pulse correlations have been experimentally observed in several experiments \cite{Fadri_corre,Trefilov_corre,aless, agulleiro2025modelingcharacterizationarbitraryorder}. Although there are security proofs that address such correlations \cite{pereira_2020,Zapatero_2021,sixto2022,Currás-Lorenzo_2024}, no analytical proof for decoy-state QKD currently accounts for them together with side-channel leakage and encoding flaws.

In this work, we address this challenge by introducing a comprehensive analytical security proof for decoy-state QKD under realistic conditions that simultaneously incorporates the key hardware imperfections discussed above. To the best of our knowledge, this constitutes the first analytical framework capable of jointly addressing these practical vulnerabilities at both the transmitter and receiver sides. Specifically, our analysis accounts for SPFs, bit and basis side-channel leakage and encoding correlations, and bounded detector imperfections. To achieve this, it combines and extends several recent theoretical advances, building upon three main pillars.

$(i)$ {\it State-preparation flaws and side-channel leakage}. We build upon the security framework introduced in \cite{Guille_framework}. While the original formalism is restricted to single-photon protocols, we generalize it to make it applicable to decoy-state QKD.

$(ii)$ {\it Bit and basis correlations}. To address the breakdown of the IID assumption, we build upon the approach recently introduced in \cite{curraslorenzo2026rigorousphaseerrorestimationsecurityframework}. This technique decomposes the protocol into sub-protocols free from correlations, allowing the phase-error rate to be estimated independently for each of them, from which a bound for the global protocol can be derived. We adapt this decomposition strategy, originally developed for single-photon sources, to the specific structure of decoy-state QKD with bit and basis correlations.

$(iii)$  {\it Imperfect detectors}. Finally, we incorporate receiver-side vulnerabilities such as detection efficiency mismatch by integrating the mathematical tools introduced in \cite{misma} with the analysis above. 

This paper is organized as follows. In \cref{sec:protocol}, we describe the decoy-state protocol under study and introduce the main assumptions, including the characterization of the emitted states and the detector model. \cref{sec:security} presents our main theoretical result, namely a unified security proof for practical decoy-state QKD in the presence of the vulnerabilities discussed. In \cref{sec:simulations}, we evaluate the resulting secret key rate as a function of the magnitude of the different imperfections considered. Finally, \cref{sec:conclusions} summarizes our findings. For clarity and readability, detailed derivations and auxiliary proofs are deferred to the Appendices.

\section{Protocol description} \label{sec:protocol}

In each round $k\in \{1,2,...,N\}$ of the protocol, Alice  selects a basis $B_{A}\in\{Z,X\}$ with probabilities $p_{Z_A}$ and $p_{X_A} = 1- p_{Z_A}$ and a bit value uniformly at random, corresponding to one of the four possible BB84 states $j\in\{0_{Z},1_Z,0_X,1_X\}$. Also, she selects an intensity setting from the set $\{\mu_{1}, \mu_{2},...,\mu_{\mathcal{P}}\}$ with probabilities $\{p_{\mu_{1}}, p_{\mu_{2}},...,p_{\mu_{\mathcal{P}}}\}$ and sends to Bob a PRWCP encoded with the chosen settings. In reality however, the state generated might deviate from the given prescriptions due to device imperfections.

On the receiver side, Bob employs an active BB84 receiver to measure the incoming signals. Its action can be modelled as probabilistically choosing between two positive operator-valued measures (POVMs), namely $\left\{\Gamma_0^{({\xi})}, \Gamma_1^{({\xi})}, \Gamma_{\perp}^{({\xi})}\right\}$ with probability $p_{{\xi_B}}$ where $\xi\in\{Z,X\}$. Here, $\Gamma_b^{({\xi})}$ represents the POVM element corresponding to the bit value $b\in\{0,1\}$ and
$\Gamma_{\perp}^{(\mathcal{\xi})}$ corresponds to no detection.

\subsection{Generated states}\label{sec:generated}

We shall consider a scenario in which the state generated by Alice in a certain round $k$ depends not only on the setting choice $j_k$ but also on $all$ previous bit and basis setting choices $j_{k-1},...,j_{1}$. In addition, we assume that the emitted PRWCP suffers from setting-independent intensity fluctuations and there is no information leakage about the intensity setting selected. This means that the state in round $k$ can be written as 
\begin{equation}\label{eq:global_state}
\begin{aligned}
&\rho_{\mu_k, j_k \mid j_{k-1}, \ldots, j_1}=\\
&\sum_m p_{m| \mu_k}\left|\psi_{j_k \mid j_{k-1}, \ldots, j_1}^{m}\right\rangle\left\langle\psi_{j_k \mid j_{k-1}, \ldots, j_1}^{m}\right|_{T_k},
\end{aligned}
\end{equation}
where $m$ denotes the photon number, the subscript $T_k$ identifies the state sent to the channel and 
\begin{equation}
\label{eq:pns}
p_{m|\mu_{k}}=\int_{\mu^{L}_k}^{\mu^{U}_k}g_{\mu_k}(\alpha)\frac{e^{-\alpha}\alpha^{m}}{m!}d\alpha.
\end{equation}
In this last equation we consider that the quantities $\mu^{L(U)}_k$ and the probability density function describing the intensity distribution $g_{\mu_k}(\alpha)$ are known, so the above statistics can be computed exactly. Furthermore, for the security proof to apply, the three conditions below should hold. 

{\it (i)} The single-photon components emitted by Alice are $\epsilon_{\rm side}$-close  to some characterized reference states $\{\ket{\phi_{j_k}}_{T_k}\}$, that is
\begin{equation}
\label{eq:close}
\left|\left\langle\phi_{j_k} \Big| \psi_{j_k \mid j_{k-1}=\gamma, j_{k-2}=\gamma, \ldots, j_{1}=\gamma}^{1}\right\rangle_{T_k}\right|^2 \geq 1-\epsilon_{\text {side }},
\end{equation}
where all the previous settings are fixed to a certain $\gamma\in\{0_Z,1_Z,0_X,1_X\}$. The parameter $\epsilon_{\rm side}$ accounts for information leakage about the bit and basis selection. As for the reference states we take, for simplicity, single-photon qubit states of the form
\begin{equation}
\left|\phi_{j_k}\right\rangle_{T_k}=\cos \left(\theta_{j_k}\right)\left|0_Z\right\rangle_{T_k}+\sin \left(\theta_{j_k}\right)\left|1_Z\right\rangle_{T_k},
\end{equation}
where $\theta_{j_k}=\left(1+\delta_{\mathrm{SPF}} / \pi\right) \varphi_{j_k} / 2$, the term characterizing qubit SPFs satisfies $\delta_{\mathrm{SPF}} \in[0, \pi)$ and the parameter $\varphi_{j_k} \in\{0, \pi, \pi / 2,3 \pi / 2\}$ for $j_{k} \in\left\{0_Z, 1_Z, 0_X, 1_X\right\}$. We note, however, that our analysis could be readily adapted to any other characterized SPF model.

Importantly, in the absence of correlations \cref{eq:close} reduces to 
\begin{equation}
\left|\left\langle\phi_{j_k} \Big| \psi_{j_k}^{1}\right\rangle_{T_k}\right|^2 \geq 1-\epsilon_{\text{side}}.
\end{equation}
In the presence of bit and basis correlations, on the other hand, two additional conditions are needed:

{\it (ii)} Let the state 
\begin{equation}
\begin{aligned}
\label{eq:source-replaced_state}
&\left|\Psi_{j_k \mid j_{k-1}, \ldots, j_1}\right\rangle_{I_k M_k T_k}=\\
&\sum_{\mu_k} \sqrt{p_{\mu_k}}\left|{\mu}_k\right\rangle_{I_k} \sum_{m=0}^{\infty} \sqrt{p_{m |\mu_k}}|m\rangle_{M_k}\left|\psi_{j_k \mid j_{k-1}, \ldots, j_1}^{m}\right\rangle_{T_k},
\end{aligned}
\end{equation}
represent a purification of the state described in \cref{eq:global_state} including Alice's intensity setting choice in round $k$. Precisely, $I_k$ and $M_k$ represent the intensity and photon number registers that are kept in Alice's lab. These states must satisfy
\begin{equation}
\label{eq:corre_WCP}
\begin{aligned}
&\left|\left\langle\Psi_{j_k \mid j_{k-1}, \ldots, \tilde{j}_{k-l}, \ldots, j_1} \Big| \Psi_{j_k \mid j_{k-1}, \ldots, j_{k-l}, \ldots, j_1}\right\rangle_{I_k M_k T_k}\right|^2\\
&\geq 1-\epsilon^{\rm WCP}_l, 
\end{aligned}
\end{equation}
for all rounds  $k\in\{1,2,...,N\}$ and $l\in\{1,2,...,k-1\}$, and every pair of bit and basis setting choices in round $k-l$, $\tilde{j}_{k-l}$ and $j_{k-l}$. In \cref{eq:corre_WCP}, $\epsilon_{l}^{\rm WCP}$ denotes the correlation strength with pulse separation $l$.

{\it (iii)} The single-photon components emitted by Alice satisfy
\begin{equation}
\label{eq:close_2}
\left|\left\langle\psi_{j_k \mid j_{k-1},\ldots, \tilde{j}_{k-l},\ldots, j_{1}}^{1} \Big| \psi_{j_k \mid j_{k-1},\ldots, {j}_{k-l}, \ldots, j_{1}}^{1}\right\rangle_{T_k}\right|^2 \geq 1-\epsilon_{l}^{\text {SP }},
\end{equation}
for every $k\in\{1,2,...,N\}$, $l\in\{1,2,...,k-1\}$ and every pair of settings $\tilde{j}_{k-l}$ and ${j}_{k-l}$.
Similarly to condition $(ii)$, here $\epsilon_{l}^{\text {SP }}$ represents the correlation strength associated with the pulse separation $l$, but now for the single-photon component.

In general, one expects that the parameters $\epsilon^{\Xi}_l$, with $\Xi\in\{\rm WCP, SP\}$, are a monotonically decreasing function of $l$ satisfying $\epsilon^{\Xi}_l \to 0$ as $l \to \infty$ \cite{pereira2024}. Based on the experimental results in \cite{agulleiro2025modelingcharacterizationarbitraryorder}, for concreteness, we shall assume an exponential decay of the form
\begin{equation}
\label{eq:corre_model}
\epsilon^{\Xi}_l = \epsilon_1^{\Xi} e^{-C^{\Xi}(l-1)} ,
\end{equation}
where $\epsilon_1^{\Xi}$ denotes the magnitude of nearest-neighbor pulse correlations and $C^{\Xi}$ is a known decay constant. Physically, this model implies that correlations weaken as the separation between rounds increases, vanishing in the limit of infinitely separated rounds. In the following, for simplicity, we shall model $\epsilon_{l}^{\rm WCP}$ and $\epsilon_{l}^{\rm SP}$ according to the same correlation strength, $i.e.,$  $\epsilon_1^{\Xi}\equiv\epsilon_1$ and $C^{\Xi}\equiv C$ $\forall \Xi\in\{{\rm WCP, SP}\}$  which corresponds to a worst-case scenario where all correlations present in the PRWCP state are attributed to the single-photon component. This is a conservative assumption because bit and basis correlations could in principle be distributed among the vacuum, single-photon, and multiphoton components.  Therefore, by assigning the entire correlation effect to the single-photon component, which is the one used for key generation we effectively minimize the secret key rate.

\subsection{Detector model}\label{sec:detector}

For the detector, we adopt the active model proposed in \cite{Tupkary2025phaseerrorrate} in which the dark count rate $d_{\xi_b}$ and the detector efficiency $\eta_{\xi_b}$ of the detector associated with basis $\xi$ and the bit value $b$ are characterized only within finite tolerances, $\Delta_{\rm dc}$ and $\Delta_{\eta}$, as follows
\begin{equation}
\begin{aligned}
& d_{\xi_b} \in\left[d_{\operatorname{dc}}\left(1-\Delta_{\mathrm{dc}}\right), d_{\operatorname{dc}}\left(1+\Delta_{\mathrm{dc}}\right)\right],\\
& \eta_{\xi_b} \in\left[\eta_{\operatorname{det}}\left(1-\Delta_\eta\right), \eta_{\operatorname{det}}\left(1+\Delta_\eta\right)\right].
\end{aligned}
\end{equation}

For this, we introduce two parameters, $\delta_1$ and $\delta_2$ that parametrize the detection efficiency mismatch and quantify the departure from the basis-independent detection efficiency condition.
They can be bounded as \cite{Tupkary2025phaseerrorrate}
\begin{equation}
\begin{aligned}
\delta_1 \leq \max & \left\{\left(1-\frac{1-\left(1-d_{\min }\right)^2}{1-\left(1-d_{\max }\right)^2}\right) \frac{d_{\max }\left(2-d_{\min }\right)}{1-\left(1-d_{\min }\right)^2}\right., \\
& \left.4\left|1-\sqrt{1-\left(1-d_{\min }\right)^2\left(1-r_\eta\right)}\right|\right\},
\end{aligned}
\end{equation}
and
\begin{equation}
\delta_2 \leq \max \left\{1-\frac{1-\left(1-d_{\min }\right)^2}{1-\left(1-d_{\max }\right)^2},\left(1-d_{\min }\right)^2\left(1-r_\eta\right)\right\},
\end{equation}
where
\begin{equation}
\begin{aligned}
& d_{\max }=\max _{\xi \in\{X, Z\}}\left\{d_{\xi_0}, d_{\xi_1}\right\} \leq d_{\mathrm{dc}}\left(1+\Delta_{\mathrm{dc}}\right), \\
& d_{\min }=\min _{\xi \in\{X, Z\}}\left\{d_{\xi_0}, d_{\xi_1}\right\} \geq d_{\mathrm{dc}}\left(1-\Delta_{\mathrm{dc}}\right), \\
& r_\eta=\left(1-\Delta_\eta\right) /\left(1+\Delta_\eta\right) .
\end{aligned}
\end{equation}
We note that other detector models could be incorporated into our security proof as well, provided that bounds on $\delta_1$ and $\delta_2$ can be established. For a more detailed discussion of these parameters, we refer the reader to \cite{Tupkary2025phaseerrorrate,misma}. 

\section{Security proof}\label{sec:security}
Here we provide a  description of our security proof for a finite-key implementation of the decoy-state QKD protocol introduced in the previous section. It follows the variable key-length approach introduced in \cite{Tupkary2025phaseerrorrate} based on entropic uncertainty relations (EUR) and the leftover hashing lemma (LHL) framework \cite{EUR_1,EUR_2,EUR_3,leftover}. Precisely, it can be shown (see Appendix~\ref{app:rigurous}) that a lower bound on the length of the final key $\ell$ is given by 
\begin{equation}
\begin{aligned}
\label{eq:key_l}
\ell \geq&\Big\lfloor\underline{N}_{Z,1}^{\mathrm{det},Z} \left(1-h\left(\overline{e}_{\mathrm{ph},1}\right)\right)-\lambda_{\mathrm{EC}}\\
&-2 \log_2 \left(1 /(2 \varepsilon_{\mathrm{PA}})\right)-\log_2 \left(2 / \varepsilon_{\mathrm{corr}}\right)\Big\rfloor,
\end{aligned}
\end{equation}
being the overall security parameter $\varepsilon_{\mathrm{corr}}+\varepsilon_{\mathrm{sec}}$, where
$\varepsilon_{\mathrm{corr}} $ denotes the failure probability of the error-verification step, and
$\varepsilon_{\mathrm{sec}} = 2\sqrt{\varepsilon} + \varepsilon_{\mathrm{PA}}$, with $\varepsilon$ being the failure probability of the parameter estimation step and $\varepsilon_{\mathrm{PA}}>0$ a freely chosen parameter. In \cref{eq:key_l}, $\underline{N}_{Z,1}^{{\rm det},Z}$ denotes a lower bound on the number of detected single-photon rounds in which both Alice and Bob select the $Z$ basis (the key-generating rounds),  $h(\cdot)$ is the binary entropy function, $\overline{e}_{\mathrm{ph},1}$ represents an upper bound on the phase-error rate, $\lambda_{\rm EC}$ denotes the amount of bits revealed during error correction, and $\log_2(2/\varepsilon_{\mathrm{corr}})$ corresponds to the number of bits disclosed during error verification \footnote{From now on, lower scripts in number of counts like in $N_{Z,1}^{{\rm det},Z}$ denote Alice selection and the super script denotes Bob's selection, while underlines represent lower bounds and overlines denote upper bounds.}.

To lower bound the number of single-photon detected rounds with $Z$-basis match, $ N_{Z,1}^{{\rm det},Z}$, and upper bound the single-photon phase-error rate, $e_{\mathrm{ph},1}$, in the presence of correlations, we extend the approach introduced in \cite{curraslorenzo2026rigorousphaseerrorestimationsecurityframework} to the decoy-state setting (see \cref{app:rigurous}). Specifically, we partition the protocol data into $l_c+1$ groups, where $l_c$ denotes the correlation length, and estimate the relevant statistical bounds independently for each group. These bounds are then combined in a single bound for each relevant quantity to compute the secret key length following \cref{eq:key_l}.

When the bit and basis correlations are unbounded, $i.e.,$ they are of infinite length, we define an effective maximum correlation length as \cite{pereira2024} 
\begin{equation}
    l_c = \left\lceil{\frac{1}{C}\operatorname{ln}\left(\frac{N\epsilon_1}{\zeta^2(1-\sqrt{e^{-C}})^2}\right)}\right\rceil,
\end{equation}
where $C$ is defined in \cref{eq:corre_model} and $\zeta>0$ represents a parameter that can be freely chosen. In this latter case the secrecy parameter takes the form $\varepsilon_{\rm sec}=2\sqrt{\varepsilon+\zeta}+\varepsilon_{\rm PA}$ \cite{pereira2024}. 

Next, we estimate the required bounds on  $ N_{Z,1}^{{\rm det},Z}$ and  $e_{\mathrm{ph},1}$.
 
\subsection{Lower bound on $N_{Z,1}^{{\rm det},Z}$}\label{sec:decoy}

We begin by estimating $\underline{N}_{Z,1}^{{\rm det},Z}$, in any group $w\in \{0,1,...,l_c\}$. For notational simplicity, however, we shall omit the explicit dependence of this parameter on the subgroup index $w$, with the understanding that all estimations below correspond to an individual group. 

We follow the analysis in \cite{Mannalath_decoy} (see also \cite{concise,Attema_2021}) with slight modifications. Precisely, we consider the standard counterfactual scenario in which Alice is assumed to send photon-number states to Bob and the corresponding intensity settings are assigned to the detected events \emph{a posteriori}. 

The expected value of $N^{{\rm det} , Z}_{Z, \mu_p}$, $i.e.$, the number of detected rounds in which both Alice and Bob select the $Z$ basis and the intensity of Alice's emitted pulses is $\mu_p$ reads
\begin{equation}\label{eq:expected}
\mathbb{E}\left[N^{{\rm det},Z}_{Z, \mu_p}\right]=\sum_{m=0}^{\infty} p_{\mu_p \mid m} N^{{\rm det},Z}_{Z, m},
\end{equation}
where $N^{{\rm det},Z}_{Z, m}$ denotes the number of detected events in which Alice and Bob choose the $Z$ basis and Alice emits an $m$-photon pulse. Due to Bayes' rule we have that
\begin{equation}
p_{\mu_p \mid m}=\frac{p_{m \mid \mu_p} p_{\mu_p}}{p_{m}},
\end{equation}
with $p_{m}=\sum_{p=1}^{\mathcal{P}} p_{m \mid \mu_p} p_{\mu_p }$ and where $\mathcal{P}$ denotes the number of intensities.

In the counterfactual scenario, the observed variables $N^{{\rm det},Z}_{Z, \mu_p}$ arise from independent Bernoulli samples of $N^{{\rm det},Z}_{Z}$ $i.e.$,  the detected rounds in which Alice and Bob select the $Z$ basis. Hence, using \cref{th1} from Appendix~\ref{sec:concentration} we obtain that
\begin{equation}
\mathbb{E}\left[N^{{\rm det},Z}_{Z, \mu_p}\right]=N^{{\rm det},Z}_{Z, \mu_p}+\delta^{ Z}_{\mu_p, N^{{\rm det},Z}_Z},
\end{equation}
where $ \delta_{\mu_p, N^{{\rm det},Z}_Z}^{Z,\mathrm{L}} \leq \delta^{Z}_{\mu_p, N^{{\rm det},Z}_{Z}} \leq \delta_{\mu_p, N^{{\rm det},Z}_{Z}}^{Z,\mathrm{U}},$ except with probability $\varepsilon_{\mu_p, N^{{\rm det},Z}_Z}^{\mathrm{L}}+\varepsilon_{\mu_p, N^{{\rm det},Z}_Z}^{\mathrm{U}}$ with
\begin{equation}
\begin{aligned}
\delta_{\mu_p, N^{{\rm det},Z}_{Z}}^{Z,\Delta}=&-N^{{\rm det},Z}_{Z, \mu_p}+\\
&N^{{\rm det},Z}_{Z} \mathcal{F}_{N^{{\rm det},Z}_{Z}, \varepsilon_{\mu_p, N^{{\rm det},Z}_{Z}}^{\Delta}}^{\Delta}\left(N^{{\rm det},Z}_{Z, \mu_p} / N^{{\rm det},Z}_{Z}\right),
\end{aligned}
\end{equation}
for $\Delta\in\{{\rm L, U}\}$ and where the function $\mathcal{F}^{\Delta}_{N, \varepsilon}$ is defined in \cref{eq:F_def}. Furthermore, since $\sum_{p=1}^\mathcal{P}\mathbb{E}\left[N^{{\rm det},Z}_{Z, \mu_p}\right]=N^{{\rm det},Z}_{Z}$
it holds that $\sum_{p=1}^{\mathcal{P}} \delta^{Z}_{\mu_p, N^{{\rm det},Z}_Z}=0$.

To estimate $\underline{N}_{Z, 1}^{{\rm det},Z}$ using linear programming, we need to reduce the number of unknowns in \cref{eq:expected} to a finite set. For this, we  define a threshold photon number $m_{\rm max}$. This results in two separate bounds, namely
\begin{equation}
N^{{\rm det},Z}_{Z, \mu_p}+\delta^{ Z}_{\mu_p, N^{{\rm det},Z}_{Z}} \geq \sum_{m=0}^{m_{\rm max}} p_{\mu_p \mid m} N^{{\rm det},Z}_{Z, m},
\end{equation}
and using \cref{th2} from Appendix~\ref{sec:concentration} for the complementary bound, it follows that
\begin{equation}
\begin{aligned}
\label{eq:20}
&N^{{\rm det},Z}_{Z, \mu_p}+\delta^{Z}_{\mu_p, N^{{\rm det},Z}_{Z}}\leq\\
&\sum_{m=0}^{m_{\rm max}} p_{\mu_p \mid m} N^{{\rm det},Z}_{Z, m}+N^{Z}_{Z}\left(1-\sum_{m=0}^{m_{\rm max}} p_{m}\right)+\delta_{Z,> m_{\rm max}}.
\end{aligned}
\end{equation}

In \cref{eq:20}, the parameter $N_Z^Z$ represents the number of pulses in which Alice and Bob select the $Z$ basis and $\delta_{Z,> m_{\rm max}}^{\mathrm{L}} \leq \delta_{Z,> m_{\rm max}} \leq \delta_{Z,> m_{\rm max}}^{\mathrm{U}}$ with
\begin{equation}
    \delta_{Z, m}^{\Delta}=N^{Z}_{Z} \mathcal{G}_{N^{Z}_{Z}, \varepsilon_{Z, m}^{\Delta}}^{\Delta}\left(p_{m}\right)-p_{m} N^{Z}_{Z} ,
\end{equation}
which holds except with probability $\varepsilon_{Z, >m_{\rm max}}^{\mathrm{L}}+\varepsilon_{Z, >m_{\rm max}}^{\mathrm{U}}$. The function $\mathcal{G}_{N, \varepsilon}^{\Delta}$ in given by \cref{eq:G0,eq:G1}.

Also, by noting that $N^{{\rm det},Z}_{Z, m}$ is upper bounded by the number of $m$-photon pulses sent and measured in the $Z$ basis, $N_{Z, m}^{Z}$, and employing \cref{th2} from Appendix~\ref{sec:concentration} it follows that

\begin{equation}
    0 \leq N^{{\rm det}, Z}_{Z, m} \leq \min \left(p_{m} N_{Z}^{Z}+\delta _{Z, m}, N^{{\rm det}, Z}_{Z}\right),
\end{equation}
where $\delta_{Z, m}^{\mathrm{L}} \leq \delta_{Z, m} \leq \delta_{Z, m}^{\mathrm{U}}$ except with probability $\varepsilon_{Z, m}^{\mathrm{L}}+\varepsilon_{Z, m}^{\mathrm{U}}$.

Finally, as 
\begin{equation}
\sum_{m=0}^{m_{\rm max}}N^{Z}_{Z, m}=\sum_{m=0}^{m_{\rm max}}p_m N^{Z}_Z+\sum_{m=0}^{m_{\rm max}}\delta_{Z, m},
\end{equation}
we have that $\delta_{Z,\leq m_{\rm max}}^{\mathrm{L}} \leq \sum_{m=0}^{m_{\rm max}} \delta_{Z, m} \leq \delta_{Z,\leq m_{\rm max}}^{\mathrm{U}}$ except with probability 
\begin{equation}
 \sum_{m=0}^{m_{\rm max}}(\varepsilon_{Z,m}^{\rm L}+\varepsilon_{Z,m}^{\rm U})+\varepsilon_{Z, \leq m_{\rm max}}^{\mathrm{L}}+\varepsilon_{Z, \leq m_{\rm max}}^{\mathrm{U}}.
\end{equation}
By taking the infinite dimensional sum, we obtain the following constraint 
\begin{equation}
\sum_{m=0}^{m_{\rm max}} \delta_{Z, m}+\delta_{Z,> m_{\rm max}}=0.    
\end{equation}

Now, we have all the necessary constraints to construct 
a linear program (LP) to estimate $\underline{N}_{Z,1}^{{\rm det}, Z}$ that holds except with probability at most
\begin{equation}
\begin{aligned}
&\varepsilon_{Z,1}=\\
&\sum_{p=1}^\mathcal{P}\left(\varepsilon_{\mu_p, N^{{\rm det},Z}_Z}^{\mathrm{L}}+\varepsilon_{\mu_p, N^{{\rm det},Z}_Z}^{\mathrm{U}}\right)+\sum_{m=0}^{m_{\rm max}}\left(\varepsilon_{Z, m}^{\mathrm{L}}+\varepsilon_{Z, m}^{\mathrm{U}}\right)\\
&+\varepsilon_{Z, \leq m_{\rm max}}^{\mathrm{L}}+\varepsilon_{Z, \leq m_{\rm max}}^{\mathrm{U}}+\varepsilon_{Z, >m_{\rm max}}^{\mathrm{L}}+\varepsilon_{Z, >m_{\rm max}}^{\mathrm{U}}.
\end{aligned}
\end{equation}
The LP is given by 
\begin{widetext}
\begin{equation}
\begin{aligned} \label{eq:LP_key}
\underline{N}_{Z,1}^{{\rm det},Z}=\text { min } & N^{{\rm det},Z}_{Z,1} \\
\text { s.t. } & N^{{\rm det},Z}_{Z, \mu_p}+\delta^{Z}_{\mu_p, N^{{\rm det},Z}_Z} \geq \sum_{m=0}^{m_{\rm max}} p_{\mu_p \mid m} N^{{\rm det},Z}_{Z, m} \\
& N^{{\rm det},Z}_{Z, \mu_p}+\delta^{Z}_{\mu_p, N^{{\rm det},Z}_Z} \leq \sum_{m=0}^{m_{\rm max}} p_{\mu_p \mid m} N^{{\rm det},Z}_{Z, m}+N^{Z}_Z\left(1-\sum_{m=0}^{m_{\rm max}} p_{m}\right)+\delta_{Z, >m_{\rm max}} \\
& 0 \leq N^{{\rm det},Z}_{Z, m} \leq \min \left(p_{m} N^{Z}_Z+\delta_{Z, m}, N^{{\rm det},Z}_Z\right) \\
& \delta_{\mu_p, N^{{\rm det},Z}_Z}^{Z,\mathrm{L}} \leq \delta^{ Z}_{\mu_p, N^{{\rm det},Z}_Z} \leq \delta_{\mu_p, N^{{\rm det},Z}_Z}^{Z,\mathrm{U}} \\
& \delta_{Z, >m_{\rm max}}^{\mathrm{L}} \leq \delta_{Z, >m_{\rm max}} \leq \delta_{Z, >m_{\rm max}}^{\mathrm{U}} \\
& \delta_{Z, m}^{\mathrm{L}} \leq \delta_{Z, m} \leq \delta_{Z, m}^{\mathrm{U}} \\
& \sum_{p=1}^\mathcal{P} \delta^{Z}_{\mu_p, N^{{\rm det},Z}_Z}=0 \\
& \delta_{Z,\leq m_{\rm max}}^{ \mathrm{L}} \leq \sum_{m=0}^{m_{\rm max}} \delta_{Z, m} \leq \delta_{Z, \leq m_{\rm max}}^{\mathrm{U}} \\
& \sum_{m=0}^{m_{\rm max}} \delta_{Z, m}+\delta_{Z, >m_{\rm max}}=0 \\
& \forall~1 \leq p \leq \mathcal{P}, \forall~0 \leq m \leq m_{\rm max}.
\end{aligned}
\end{equation}
\end{widetext}
Once $\underline{N}_{Z,1}^{\mathrm{det}, Z}$ has been estimated for all groups $w$, which we shall denote by $\underline{N}_{Z,1}^{\mathrm{det}, Z, w}$ for each group $w$, the overall lower bound on the parameter ${N}_{Z,1}^{\mathrm{det}, Z}$ appearing in \cref{eq:key_l} is given by 
\begin{equation}
\label{eq:sum_key_rounds}
\underline{N}_{Z,1}^{\mathrm{det},Z}= \sum_w \underline{N}_{Z,1}^{\mathrm{det},Z,w}.   
\end{equation}

\subsection{Upper bound on $e_{{\rm ph},1}$
}\label{sec:phase}

To estimate an upper bound on the single-photon phase-error rate, we build upon the security framework introduced in~\cite{Guille_framework} for uncorrelated single-photon protocols and we refer the reader to that work for the detailed derivations. The analysis introduces a fictitious quantum coin $C$ ~\cite{GLLP} that Alice randomly measures in either the $Z$ or $X$ basis. When the coin is measured in the $Z$ basis, the two possible outcomes project the joint state shared between Alice and Bob onto two carefully constructed states with two possible tags $t\in \{{\rm TAR, REF}\}$. The statistics of the target state, $|\Psi_{\rm TAR}\rangle$,
are related to the phase-error rate, while those of the reference state, $|\Psi_{\rm REF}\rangle$, are determined using the data acquired in the actual protocol. Importantly, the quantum coin inequality  bounds the phase-error contribution in terms of the  reference state statistics and the number of times Alice obtains the result $X_C=1$ when measuring the coin in the $X$ basis. This latter quantity reflects the imbalance of the coin and depends on how much the actual emitted states deviate from the ideal ones. Finally, a  fictitious tag-assignment step bridges the coin-round analysis with the actual protocol rounds, allowing the phase-error bound to be extended to all rounds via random sampling arguments \cite{random_sampling}.

Extending this framework to the present setting requires incorporating bit and basis correlations, a decoy-state analysis and detector imperfections.
To incorporate the former, we use the same protocol partitioning strategy introduced before and evaluate the phase-error rate separately within each resulting uncorrelated group. This can be done following the observation that the subset of single-photon emissions forms an effective subprotocol that is indistinguishable from one in which Alice employs an ideal single-photon source and to evaluate the relevant statistics, we employ the decoy-state technique. Ultimately, we incorporate detector imperfections by suitably adapting the results of~\cite{misma}, as detailed in Appendix~\ref{app:detector}. 

The procedure to derive an upper bound on the phase-error rate for each group $w$ is detailed below; however, for simplicity of notation, we again omit the explicit dependence on the group $w$. It is worth emphasizing that our ultimate goal is to estimate the phase-error rate of the full protocol, rather than that of each individual group. For this, in Appendix~\ref{app:rigurous}, we show how to relate the global phase-error rate to those of the $l_c+1$ groups by extending the results of \cite{curraslorenzo2026rigorousphaseerrorestimationsecurityframework}.

In particular, as shown in \cite{misma}, an upper bound on the number of single-photon phase errors in group $w$ is given by
\begin{equation}
\begin{aligned} \label{eq:number_errors}
\overline{N}_{\mathrm{ph},1}=\frac{1}{p_{Z_C} p_{\mathrm{TAR} \mid \mathrm{vir}}}\Big[&\left(\overline{N}_{Z_C=0,1}^{\mathrm{det}}+\Delta_A\right) G_{+}(y, z)\\
&-\underline{N}_{0_Z, \mathrm{TAR}, Z_C,1}^{0_X}+\Delta_A+\Delta_H\Big].
\end{aligned}
\end{equation}
Here  $p_{Z_C}$ represents the probability of measuring the coin system in the $Z$ basis, which can be freely chosen, and
\begin{equation}
    p_{\mathrm{TAR} \mid \mathrm{vir}}=\frac{p_{\mathrm{TAR}}}{p_{Z_A} p_{Z_B}\left(1+c_1 q_{\mathrm{vir} 1}\right)} 
\end{equation}
with
\begin{equation}\label{eq:ptar}
\begin{aligned}
p_{\mathrm{TAR}}=\left(1+c_1 q_{\mathrm{vir} 1}\right) \min \Bigg\{p_{Z_A} p_{Z_B}, \frac{p_{Z_A} p_{X_B}}{2 c_1 q_{\mathrm{vir} 1}}, \frac{p_{Z_A} p_{X_B}}{2 c_2 q_{\mathrm{vir} 1}}, \\
\frac{p_{X_A} p_{X_B}}{2\left(1-q_{\mathrm{vir} 1}\right)}, \frac{p_{X_A} p_{X_B}}{2 c_3 q_{\mathrm{vir} 1}}\Bigg\},
\end{aligned}
\end{equation}
where the quantity $q_{\mathrm{vir} 1}$ is given by
\begin{equation}\label{eq:qvir}
q_{\mathrm{vir} 1}=\frac{1}{2}(1-\text{cos}(\kappa \pi/2))
\end{equation}
with $\kappa=1+\delta_{\rm SPF}/\pi$, and
\begin{equation}\label{eq:c_defs}
\begin{aligned}
c_1 & =\frac{\cos (\kappa \pi / 2)}{\cos (\kappa \pi)-\cos (\kappa \pi / 2)}, \\
c_2 & =\frac{\cos (\kappa \pi / 2)}{\cos (\kappa \pi / 2)-1}, \\
c_3 & =\frac{1+\cos (\kappa \pi / 2)}{\cos (\kappa \pi / 2)-\cos (\kappa \pi)} .
\end{aligned}
\end{equation}
Furthermore, in \cref{eq:number_errors} $\overline{N}_{Z_C=0,1}^{\mathrm{det}}$ denotes an upper bound on the number of detected single-photon rounds in which $Z_C=0$ and the terms
$\Delta_A$ and $\Delta_H$ arise from applying Azuma's \cite{Azuma} and Hoeffding's \cite{Hoeff} inequality respectively (see Appendix~\ref{sec:concentration}). They are
defined as 
\begin{equation}
\begin{aligned}
\Delta_A & =\sqrt{2 \overline{N}_{1}^{\mathrm{det}} \ln \left(1 / \varepsilon_{\mathrm{conc}}\right)}, \\
\Delta_H & =\sqrt{\frac{\hat{N}_{Z,1}^{\mathrm{det},Z} \ln \left(1 / \varepsilon_{\mathrm{conc}}\right)}{2}} ,
\end{aligned}
\end{equation}
with $\overline{N}_{1}^{\mathrm{det}}:=\overline{N}_{Z,1}^{\mathrm{det},Z}+\overline{N}_{1}^{\mathrm{det},X}$, where ${N}_{1}^{\mathrm{det},X}$ represents the number of detected single-photon rounds measured in the $X$ basis  \footnote{Here, we require an upper bound on the number of detected single-photon signals. For that we consider that ${N}_{1}^{\mathrm{det}}:={N}_{Z,1}^{\mathrm{det},Z}+{N}_{1}^{\mathrm{det},X}$. An upper bound on ${N}_{Z,1}^{\mathrm{det},Z}$ can be obtained with the decoy-state analysis in \cref{sec:decoy} while to obtain an upper bound on ${N}_{1}^{\mathrm{det},X}$ we consider that
$\overline{N}_{1}^{\mathrm{det},X}:= \overline{N}_{0_Z,1}^{\mathrm{det},X}
+\overline{N}_{1_Z,1}^{\mathrm{det},X}
+\overline{N}_{0_X,1}^{\mathrm{det},X}
+\overline{N}_{1_X,1}^{\mathrm{det},X} .
$
In our analysis, we substitute the terms of the previous equation with 
\begin{equation}
\overline{N}_{j,1}^{\mathrm{det},X}:=N_{j}^{{\rm det},X}.
\end{equation}
Alternatively, tighter estimates of $\overline{N}_{j,1}^{\mathrm{det},X}$ could be obtained by applying the decoy-state method additional times, increasing the security parameter $\varepsilon_{\rm sec}$. Finally, note that rounds in which Alice prepares an $X$-basis state and Bob performs a $Z$-basis measurement are not used in the security analysis. Consequently, they do not need to be included in Azuma's deviation term.
}  
and
\begin{equation}
    \hat{N}_{Z,1}^{\mathrm{det},Z} = \frac{\overline{N}_{Z,1}^{\mathrm{det},Z}}{1-\delta_2-\gamma_{\rm bin}^{\varepsilon_{\rm conc}}(\underline{N}_{Z,1}^{\mathrm{det},Z}, \delta_2)},
\end{equation}
where the function $\gamma_{\rm bin}^{\varepsilon_{\rm conc}}(N, \delta)$ is defined in \cref{eq:gamma_bin} and $\varepsilon_{\rm conc}$ represents a failure probability.
The function $G_{+}(y,z)$ is defined as 
\begin{equation}
G_{+}(y,z)=
\begin{cases}
\begin{aligned}
&y+(1-z^2)(1-2y)\\
&\quad+2z\sqrt{(1-z^2)y(1-y)},
\end{aligned}
& \substack{0\le y<z^2,0\le z\le1} \\
1, & \text{otherwise},
\end{cases}
\end{equation}
where 
\begin{equation}
\label{eq:z}
z=1-\frac{2 p_{Z_C}\left(\overline{N}_{X_C=1,1}+\Delta_A\right)}{p_{X_C}\left[\overline{N}_{Z_C=0,1}^{\mathrm{det}}+\Delta_A+\max \left(0, \underline{N}_{Z_C=1,1}^{\mathrm{det}}-\Delta_A\right)\right]},
\end{equation}
and
\begin{equation}
\label{eq:y}
y=\frac{\overline{N}_{Z_C=1,1}^{\text {err }}+\Delta_A}{\underline{N}_{Z_C=1,1}^{\text {det }}-\Delta_A}.
\end{equation}
The quantity $\underline{N}_{0_Z, \mathrm{TAR}, Z_C,1}^{0_X}$ denotes a lower bound on the number of single-photon rounds with $0_Z$ encoding and $0_X$ measurement outcome, in which the coin is measured in $Z_C$ and the result corresponds to the tag TAR. In \cref{eq:z}, $\overline{N}_{X_C=1,1}$, represents an upper bound on the number of single-photon rounds in which $X_C=1$, $p_{X_C}$ is the probability of measuring the coin in the $X$ basis and $\underline{N}_{Z_C=1, 1}^{\rm det}$ denotes a lower bound on the number of single photon rounds with outcome $Z_C=1$, while in \cref{eq:y} $\overline{N}_{Z_C=1,1}^{\text {err }}$ represents an upper bound on the number of single-photon erroneous detections with $Z_C=1$.

In Appendix~\ref{app:list} we present the mathematical expressions for the quantities, $\overline{N}_{X_C=1,1}$, $\overline{N}_{Z_C=0,1}^{\mathrm{det}}$, $\underline{N}_{Z_C=1,1}^{\mathrm{det}}$ and $\overline{N}_{Z_C=1,1}^{\mathrm{err}}$, required to evaluate \cref{eq:number_errors}.

As a final step in the computation of the number of phase errors, one must obtain, via the decoy-state analysis, both upper and lower bounds on $N_{j,1}^{{\rm det},b_X}$, \textit{i.e.}, the number of single-photon rounds in which Alice prepares the state $j$ and Bob obtains the $X$-basis outcome $b$, with $b \in \{0,1\}$. This can be accomplished by suitably modifying the linear program in \cref{eq:LP_key}, as detailed in Appendix~\ref{app:LP}.

In the absence of detector imperfections, the desired bound on the single-photon phase-error rate in one group simply reads $\overline{e}_{{\rm ph},1}:=\overline{N}_{{\rm ph},1
}/\underline{N}_{Z,1}^{{\rm det},Z}$. In Appendix~\ref{app:detector} we show that detector imperfections can be incorporated by modifying the previous bound to
\begin{equation}
\label{eq:bound_phase_error}
\begin{aligned}
\overline{e}_{{\rm ph},1}:=\frac{\overline{N}_{{\rm ph},1
    }}{\underline{N}_{Z,1}^{{\rm det},Z}}+ \frac{\delta_1+\gamma_{\rm bin}^{\varepsilon_{\rm conc}}(\underline{N}_{Z,1}^{{\mathrm{det},Z}}, \delta_1)}{1-\delta_2-\gamma_{\rm bin}^{\varepsilon_{\rm conc}}(\underline{N}_{Z,1}^{{\mathrm{det},Z}}, \delta_2)}.
\end{aligned}
\end{equation}

\section{Simulations}\label{sec:simulations}

Before running the quantum communication phase of the protocol, Alice and Bob agree on the total number of emitted signals $N$, together with the correctness and secrecy parameters of the final key $\varepsilon_{\rm corr}$ and $\varepsilon_{\rm sec}$,
the penalty due to privacy amplification, $\varepsilon_{\rm PA}$, and the failure probability of the statistical bounds employed in the security proof. For simplicity, we set the latter to a common value $\varepsilon_{\rm conc}$. Overall, this leads to $\varepsilon_{\rm sec}=2\sqrt{\varepsilon+\zeta}+\varepsilon_{\rm PA}$, where the composition of union bounds yields
\begin{equation}
    \varepsilon= (l_c+1)(20+M)\varepsilon_{\rm conc},
\end{equation}
with $M=9[2(m_{\rm max}+1)+2\mathcal{P}+4]$ representing the security budget associated with the decoy-state estimation. Here $m_{\rm max}$ denotes the threshold photon number used in the linear programs and $\mathcal{P}$ denotes the number of intensities. Lastly, we set $\zeta=\varepsilon_{\rm conc}$ for simplicity.

In the simulations, we employ the channel model introduced in Appendix~\ref{app:channel} and the values reported in \cref{tab:values}.  For the intensity distribution described in \cref{eq:pns},
we assume that the actual intensity is uniformly distributed over the interval
$[\mu_p(1-\delta_{\mu_p}),\,\mu_p(1+\delta_{\mu_p})]$, and we set $\delta_{\mu_p}=10^{-2}$ for the relative deviation of each intensity with respect to its nominal value. The polarization misalignment is fixed to $\theta_{\rm mis}=0.1$, the state preparation error of the source to $\delta_{\rm SPF}=0.063$~\cite{Guille_framework,SPF1,SPF2} and we consider several values of $\epsilon_{\rm side}$. Regarding correlations, we adopt the model introduced in \cref{eq:corre_model} and take $\epsilon_{1}=10^{-4}$ and $C=13$, in alignment with the experimental results observed in \cite{agulleiro2025modelingcharacterizationarbitraryorder}. In addition, we evaluate the uncorrelated case $\epsilon_{1}=0$ for comparison.

With respect to the detectors, we assume a detection efficiency $\eta_{\rm det}=0.73$ \cite{Pittaluga_2021} and a dark count probability of $d_{\rm dc}=10^{-8}$ \cite{Pittaluga_2021}, and evaluate two levels of detector imperfections  $\Delta_{\rm dc}=\Delta_{\eta}=0.02$ in \cref{fig:performance}~(a) and a more pessimistic case $\Delta_{\rm dc}=\Delta_{\eta}=0.05$ in \cref{fig:performance}~(b).
Moreover, we use a block size  $N=10^{12}$ and set the secrecy and correctness parameters to $\varepsilon_{\rm sec}=\varepsilon_{\rm corr} = 10^{-10}$ and  $\varepsilon_{\rm PA}=\varepsilon_{\rm sec}/3$. Lastly, we set the number of bits used for error correction to $\lambda_{\rm EC}=1.16 \times {N}_{Z}^{{\rm det},Z} h(e_{Z})$, where $e_{Z}$ denotes the overall quantum bit error rate in the $Z$ basis arising from the channel model.

\begin{table}[]
\centering
\small
\setlength{\tabcolsep}{4pt}
\renewcommand{\arraystretch}{1.1}
\begin{tabular}{cc}
\hline
\textbf{Parameter} & \textbf{Value} \\
\hline
$\theta_{\rm mis}$ & $0.1$ \\
$\delta_{\mu_p }$& $10^{-2}$\\
$\eta_{\rm det}$ & $0.73$ \\
$d_{\rm dc}$ & $10^{-8}$ \\
$\varepsilon_{\rm sec}=\varepsilon_{\rm corr}$ & $10^{-10}$ \\
$\varepsilon_{\rm PA}$ & $10^{-10}/3$ \\
$\mu_3$ & $5\times 10^{-4}$ \\
$p_{\mu_3}$ & $0.05$ \\
\hline
\end{tabular}
\caption{Fixed parameters used for the simulations in this section and those in Appendix~\ref{app:numerical_sim}. $\theta_{\rm mis}$ denotes the polarization misalignment, $\delta_{\mu_p }$ the relative deviation of each intensity with respect to its nominal value, $\eta_{\rm det}$ the detector efficiency and $d_{\rm dc}$ is the dark count probability. Furthermore, $\varepsilon_{\rm sec}$ and $\varepsilon_{\rm corr}$ represent the secrecy and correctness parameters respectively, and $\varepsilon_{\rm PA}$ the privacy amplification parameter. Finally, $\mu_3$ denotes the weaker intensity in the decoy set which is emitted with probability $p_{\mu_3}$.  }
\label{tab:values}
\end{table}

Finally, in the figures, for each value of the distance, we optimize the basis probabilities $p_{Z_A}=p_{Z_B}$, the probability $p_{Z_C}$, as well as the intensity settings and their probabilities, fixing the lowest out of the three settings to $\mu_3=5 \times 10^{-4}$ and $p_{\mu_3}=0.05$ for simplicity. 

\begin{figure}[]
    \centering
    \includegraphics[width=\columnwidth]{Xoel_3.eps}
    \caption{Finite secret-key rate of a decoy-state BB84 protocol incorporating both source and detector imperfections as a function of the channel loss in dB. In both figures, we fix the state-preparation error parameter to $\delta_{\rm SPF}=0.063$~\cite{Guille_framework,SPF1,SPF2} and consider three values for the magnitude of the side channel leakage, $\epsilon_{\rm side}\in\{10^{-3},10^{-4},10^{-6}\}$ corresponding to the three lines of each type shown in each figure. The two figures differ in the magnitude of the detector imperfections. Figure~(a) corresponds to $\Delta_{\eta}=\Delta_{\rm dc}=0.02$, while figure~(b) illustrates the more pessimistic scenario $\Delta_{\eta}=\Delta_{\rm dc}=0.05$.}\label{fig:performance}
    
\end{figure}

The results are illustrated in \cref{fig:performance}. We find that secret key generation remains possible for side-channel leakage values up to about $\epsilon_{\rm side}=10^{-3}$, even in the presence of correlations and detector imperfections. Our numerical results also reveal a strong dependence of the secret key rate on the parameters $\Delta_{\eta}=\Delta_{\rm dc}$, which characterize the magnitude of the detector imperfections. In Appendix~\ref{app:numerical_sim} we present further simulations in which we evaluate the secret key rate for different values of side channel leakage $\epsilon_{\rm side}$, state preparation flaws $\delta_{\rm SPF}$, correlation strengths $\epsilon_1$, detector tolerances $\Delta_{\rm dc}$ and $\Delta_{\eta}$ and block sizes $N$. As expected, the main parameters limiting performance are side-channel leakage and bit and basis correlations because they effectively increase the dimensionality of the emitted quantum states, making the system vulnerable against unambiguous state discrimination attacks \cite{USD,USD2}. As a result, the magnitude of these imperfections must be really tiny to achieve long transmission distances. Also, as already mentioned, the magnitude of detector tolerances has also a significant impact. On the other hand, state preparation flaws have a much lower impact on the secret key rate, mainly caused by the increase in the quantum bit error rate with the magnitude of the parameter $\delta_{\rm SPF}$ according to our channel model. In \cref{app:numerical_sim}, we also show that positive secret-key rates can still be achieved even for smaller block sizes of $N=10^{10}$ in the presence of all the imperfections considered above given that their magnitude is sufficiently small. 

\section{Conclusions}\label{sec:conclusions}

In this work, we have investigated the security of decoy-state QKD protocols in the finite-key regime in the presence of multiple device imperfections, including state-preparation flaws, bit and basis correlations and information leakage, setting-independent intensity fluctuations and detector imperfections. By incorporating all these effects together within a unified analytical security framework, our analysis provides a comprehensive assessment of practical implementations, where deviations from idealized device model assumptions are unavoidable and typically occur simultaneously. Our results show that secret key generation remains possible even under these non-ideal conditions, although at the cost of a significant performance degradation. This penalty is mainly due to the fact that some of these imperfections increase the dimensionality of the emitted quantum signals making the system vulnerable against unambiguous state discrimination attacks. In addition, the intrinsic conservativeness of the analytical security proof considered, which accommodates several worse-case assumptions at once, might make the parameter estimation procedure rather loose. In this regard, the development of numerical finite-key techniques capable of enabling tighter parameter estimation and more optimized statistical bounds constitutes a promising direction to mitigate the observed secret-key rate penalty and enhance the achievable performance in realistic scenarios \cite{Navarretenumerical}. Furthermore, incorporating new methods for handling correlations that avoid the statistical penalty associated with partitioning the protocol into groups, such as those recently proposed in \cite{nogroups}, could further improve performance.

In addition, our numerical simulations suggest that the achievable secret key rate is strongly affected by detector imperfections. Therefore, improving their analytical treatment or using interference-based QKD protocols \cite{MDI,twin}, which remove the need to trust the detection process, emerge as a promising alternative strategy to improve performance. 

An important direction for future work is the inclusion of decoy-state related imperfections affecting the intensity modulation and the global phase-randomization process, beyond setting-independent intensity fluctuations. These imperfections typically break the counterfactual scenario underlying standard security proofs, rendering our proof inapplicable in its current form. 

Overall, our results contribute to bridging the gap between theoretical security analyses and realistic QKD implementations, and provide further evidence that secure quantum communication remains achievable under experimentally relevant conditions.

\section*{Acknowledgments}

We thank Vaisakh Mannalath for helpful discussions and assistance with the code used for the simulations. This work was supported by the Galician Regional Government (consolidation of Research Units: AtlantTIC); the Spanish Ministry of Science, Innovation and Universities (MICIU); the Fondo Europeo de Desarrollo Regional (FEDER) through the grant No. PID2024-162270OB-I00; the "Hub Nacional de Excelencia en Comunicaciones Cuánticas" funded by the Ministerio para la Transformación Digital y de la Función Pública and the European Union NextGenerationEU; the European Union's Horizon Europe Framework Programme under the Marie Sklodowska-Curie Grant No. 101072637 (Project QSI); the project "Quantum Secure Networks Partnership" (QSNP, grant agreement No 101114043); the European Union under the Project IberianQCI (grant 101249593), and the Programa de Cooperación Interreg VI-A España–Portugal (POCTEP) 2021–2027 through the project QUANTUM IBERIA. A.N. acknowledges financial support from the Xunta de Galicia (Consellería de Educación, Ciencia, Universidades e Formación Profesional) through a Xunta de Galicia Postdoctoral Fellowship (No. ED481B-2025/113). G.C.-L. acknowledges funding from the European Union's Horizon Europe research and innovation programme under the Marie Sklodowska-Curie. X.~S. acknowledges support from an FPI scholarship (PRE2021-097170) granted by the Spanish Ministry of Science and Innovation.

\section*{Author contributions}

M.C. identified the need for the research project. X.S., G.C.-L. and M.P. developed the main security proof, with input from all authors. V.Z., Á.N., M.P. and G.C.-L. developed the code used to perform the numerical simulations. X.S. wrote the manuscript with contributions from M.C., and all authors contributed to its revision and to checking the validity of the results.

\section*{Code availability statement}

The code used for the numerical simulations presented in this study is not publicly available, as it is currently undergoing software registration for intellectual property protection. The code is available upon reasonable request for peer review and academic verification.

\section*{Data availability statement}

No new data was created or analysed in this study.

\section*{Competing interest}

The authors declare no competing interest.

\newpage
\appendix

\begin{widetext}

\section{Extension of \cite{curraslorenzo2026rigorousphaseerrorestimationsecurityframework} to decoy-state protocols } \label{app:rigurous}

Here we show how to extend the techniques developed in \cite{curraslorenzo2026rigorousphaseerrorestimationsecurityframework} to estimate the phase-error rate of decoy-state protocols with bit and basis correlations. This approach allows to determine the phase-error rate of the overall protocol by combining the phase-error rates of the corresponding uncorrelated subprotocols. Precisely, we extend Theorem 2, Corollary 1, and Corollary 2 of \cite{curraslorenzo2026rigorousphaseerrorestimationsecurityframework} to the decoy-state setting, which correspond respectively to \cref{thm:2}, \cref{coro:1} and \cref{coro:2} below.

\subsection{Protocol description and security framework}

We start by reviewing the different steps of the actual protocol and its equivalent source-replaced description. For generality, the following description is given for generic prepare-and-measure protocols rather than for the specific BB84 implementation of the main text, so that the results also apply for instance to  three-state protocols \cite{boileauUnconditionalSecurity2005,loss_tolerant}.

In each round $k\in\{1,2,...,N\}$ of the protocol Alice selects a certain encoding setting $j_k\in\mathcal{J}$ with probability $p_{j_k}$ and an intensity setting $\mu_k$  with probability $p_{\mu_k}$. In the presence of encoding correlations affecting the bit and basis information, the state emitted in round $k$ depends not only on $j_k$ but also on all previous settings, $j^{k-1}_1=j_{k-1}...j_1$. We denote Alice's basis selection in round $k$ as $\tilde{B}_k$, and for simplicity we assume $\tilde{B}_k=\mathrm{key}$ when $j_k \in \{0,1\}$ .

On the receiving side, Bob chooses $\xi_k\in\{{\rm key, test}\}$ with probability $p_{\xi_k}$ and performs the POVM $\vec{\Gamma}_{\xi_k}$. The POVM $\vec{\Gamma}_{\rm key}:=\left\{\Gamma_0^{\rm key}, \Gamma_1^{\rm key}, \Gamma_{\perp}^{\rm key}\right\}$ has three elements, corresponding respectively to bit 0 and 1 and a non-detection denoted by $\perp$,
while $\vec{\Gamma}_{\rm test}$ can be a POVM with any number of outcomes. We consider that Alice and Bob extract their sifted keys from the rounds in which $\tilde{B}_k=\xi_k=\rm key$ and Bob obtains a detection, that is, his outcome is 0 or 1 rather than $\perp$. With this, we define the following protocol:

\begin{tcolorbox}[breakable, colback=brown!1,
  colframe=brown!30,title=Actual protocol]
\begin{enumerate}
\item \textbf{State preparation:} For each round $k \in\{1, \ldots, N\}$, Alice randomly selects a certain encoding setting $j_k$ with probability $p_{j_k}$ and an intensity setting $\mu_k$ with probability $p_{\mu_k}$, and then prepares a quantum state $\rho_{\mu_k,j_k|j_{1}^{k-1}}$. For the BB84 protocol, this state corresponds to \cref{eq:global_state} in the main text. She sends this state to Bob through the quantum channel.

\item \textbf{Eve's action:} Eve applies an arbitrary quantum operation on all the transmitted states and re-sends some output systems $B_1^N$ to Bob, while keeping some ancillary system $E$.
\item \textbf{Bob's measurements:} For each round $k$, Bob chooses $\xi_k \in\{{\rm key, test}\}$ with probability $p_{\xi_k}$ and performs the corresponding POVM; $\vec{\Gamma}_{\text {key }}$ if $\xi_k= {\rm key}$, or $\vec{\Gamma}_{\text {test }}$ if $\xi_k={\rm test}$. Bob records his measurement outcomes.

\item \textbf{Sifting:} Bob announces which rounds resulted in a detection, along with his choice of $\xi_k$ for each detected round.
For the detected rounds, Alice announces the intensity setting $\mu_k$. Also, for the detected rounds with $\xi_k=$ test, and for the detected rounds with $\xi_k=$ key and {$j_k \notin\{0,1\}$}, Alice announces $j_k$. For the detected rounds with $\xi_k=$ key and {$j_k \in\{0,1\}$}, Alice announces $\tilde{B}_k=$ key.

\item \textbf{Bit-error-rate estimation:} Alice and Bob choose a random subset, denoted by $\mathcal{D}_{\text {key}}$, of the detected rounds in which $\tilde{B}_k=\xi_k=\mathrm{key}$, which will be used to generate the sifted key \footnote{Note that in the protocol considered in the main text, Alice and Bob extract key from all detected rounds in which they select the $Z$ basis. For the sake of generality, in this protocol description we consider the possibility that only a random subset of the detected rounds in which $\tilde{B}_k=\xi_k=\mathrm{key}$ is used for sifted key generation.}. They announce which rounds belong to this subset. For the remaining rounds with $\tilde{B}_k=\xi_k={\rm key}$, Alice and Bob announce their bit values to estimate the bit-error rate.

\item \textbf{Sifted key formation:} For each round $k \in \mathcal{D}_{\text {key }}$, Alice's sifted key bit is $j_k$ and Bob's sifted key bit is his measurement outcome, which we denote by $b_k$.

\item \textbf{Variable-length decision:} Let $\vec{n}$ denote all the data announced by Alice and Bob in the previous steps. Using this data, Alice and Bob compute the number of bits $\lambda_{\mathrm{EC}}$ to be revealed for one-way error correction and the length of the final key $\ell$, which includes the estimation of all relevant parameters required to determine $\ell$. Aborting corresponds to $\ell=0$.

\item \textbf{Error correction and verification:} They perform one-way error correction revealing $\lambda_{\mathrm{EC}}$ bits, and verify correctness using a randomly chosen hash function from a two-universal family with output length $\log_{2}(2/\varepsilon_{\mathrm{corr}})$, with one party announcing the hash value for comparison.

\item \textbf{Privacy amplification:} Given that error verification succeeds, they apply a randomly selected two-universal hash function to the sifted key to obtain a final key of length $\ell$.

\end{enumerate}
\end{tcolorbox}

The security of the above protocol is analyzed using the source-replacement technique. In this picture, Alice’s prepare-and-measure procedure is equivalent to one in which she first prepares the entangled state 
\begin{equation}
\label{eq:state_app}
\left|\Psi_N\right\rangle_{A_1^N I_1^N M_1^N T_1^N}=\sum_{j_1^N} \sqrt{\operatorname{Pr}\left[j_1^N\right]}\left|j_1^N\right\rangle_{A_1^N}\bigotimes_{k=1}^N\sum_{\mu_k} \sqrt{p_{\mu_k}}|{\mu}_{k}\rangle_{I_k} \sum_{m_{k}=0}^{\infty} \sqrt{p_{m_k \mid \mu_k}}|m_k\rangle_{M_k}\left|\psi_{j_1^k}^{m_k,k}\right\rangle_{T_k},
\end{equation}
then she sends the photonic systems $T^N_1= T_1...T_N$ through the quantum channel to Bob, and measures the ancillary systems $A_k$, $I_k$ and $M_k$ to determine her setting choices and the photon number for each emitted signal.

Bob's key POVM can be decomposed in two separate steps. First, he applies a filter operation $\{F, \mathbb{I}-F\}$ with $F=\Gamma_0^{\text {key }}+\Gamma_1^{\text {key }}$ that determines whether a detection occurs. After that, he applies  a two-outcome POVM $\left\{G_0, G_1\right\}$ acting on the support of $F$ that determines the bit value conditional on detection (see \cite{curraslorenzo2026rigorousphaseerrorestimationsecurityframework} for more details). This decomposition preserves the measurement statistics and allows Bob’s bit-value measurement to be deferred, facilitating the comparison with the phase-error estimation protocol presented below. Following this argument, we define a source-replaced protocol that is statistically equivalent to the actual prepare-and-measure protocol described above by construction.

\begin{tcolorbox}[breakable, colback=blue!1,
  colframe=teal!30,title=Source-replaced protocol]

\begin{enumerate}
\item \textbf{State preparation:} Alice prepares the global entangled state $\left|\Psi_N\right\rangle_{A_1^N I_1^N M_1^N T_1^N}$  given by \cref{eq:state_app} and sends the photonic systems $T_1^N$ through the quantum channel to Bob, while retaining the registers $A_1^N, I_1^N$ and  $M_1^N$. 

\item \textbf{Eve's action:} Eve applies an arbitrary quantum operation on all the transmitted states $T_{1}^{N}$, and sends the output systems $B_{1}^{N}$ to Bob, possibly retaining an ancillary system $E$.

\item \textbf{Detection and test:} For each round, Bob selects $\xi_k \in \{\text{key}, \text{test}\}$ with probability $p_{\xi_k}$. If $\xi_k = \text{key}$, he applies the filter operation $\{F, \mathbb{I} - F\}$ to determine whether a detection occurs. If $\xi_k = \text{test}$, he performs the measurement $\vec{\Gamma}_{\text{test}}$ (a POVM with an arbitrary number of outcomes prescribed by the actual protocol). Bob announces which rounds result in a detection, together with his choice of $\xi_k$ for the detected rounds.  

\item \textbf{Key/test determining, setting announcement and photon number measurement:} For the detected rounds, Alice proceeds as follows. She measures her system $I_k$ in the basis $\ket{{\mu}_k}_{I_k}$ and announces her results. Also, if $\xi_k=$ test,
Alice measures her system $A_k$ in the $\left\{\left|j_k\right\rangle_{A_k}\right\}_{j_k \in \mathcal{J}}$ basis and announces her outcome $j_k$. If $\xi_k=$ key, she performs a projection onto the subspace spanned by $\left\{|0\rangle_{A_k},|1\rangle_{A_k}\right\}$. If successful, Alice announces $\tilde{B}_k=$ key. If not, Alice measures her system $A_k$ in the $\left\{\left|j_k\right\rangle_{A_k}\right\}_{j_k \in \mathcal{J}}$ basis and announces her outcome $j_k$. Furthermore, she measures system $M_k$, thereby obtaining the value of $m_k$ in every round. Note that this latter measurement does not alter either the statistics or the announcements with respect to the real protocol.
\item \textbf{Bit-error-rate estimation:} Alice and Bob choose a random subset $\mathcal{D}_{\text {key }}$ from the set of  detected rounds in which $\tilde{B}_k=\xi_k=$ key, which will be used to generate the sifted key, and announce this information. For the remaining rounds in which $\tilde{B}_k=\xi_k=$ key, Alice measures $A_k$ in $\left\{|0\rangle_{A_k},|1\rangle_{A_k}\right\}$ and Bob measures his filtered system using $\left\{G_0, G_1\right\}$, a two-outcome POVM that determines the bit value conditional on detection. Both announce their results.
\item \textbf{Sifted-key measurements:} For each round $k \in \mathcal{D}_{\text {key }}$, Alice measures $A_k$ in $\left\{|0\rangle_{A_k},|1\rangle_{A_k}\right\}$ and Bob measures his filtered system using $\left\{G_0, G_1\right\}$.  The sifted key is defined by their respective bit outcomes in these rounds.

\item[7--9.] Same as in the Actual protocol.

\end{enumerate}
\end{tcolorbox}

Finally, before stating the main result, we also define the Phase-error estimation protocol, which is used to prove the security of the final key pair by computing the single-photon phase-error rate.

\begin{tcolorbox}[breakable, colback=magenta!1,
  colframe=magenta!30,title=Phase-error estimation protocol]
\begin{enumerate}
\item [1--5.] Same as in the Source-replaced protocol.
\item [6.] {\bf Single-photon phase-error measurement:} For each round $k \in \mathcal{D}_{\text {key }}$, Alice measures $A_k$ in $\left\{|+\rangle_{A_k},|-\rangle_{A_k}\right\}$, where $| \pm\rangle_{A_k}= \left(|0\rangle_{A_k} \pm|1\rangle_{A_k}\right) / \sqrt{2}$, and Bob measures his filtered system using $\left\{G_{+}, G_{-}\right\}$. We denote the single-photon phase-error rate $e_{{\rm ph,1}}$ as the fraction of events in which their outcomes differ among the rounds in which $m_k=1$.
\end{enumerate}
\end{tcolorbox}

With this latter protocol we can find bounds of the form
\begin{equation}
\label{eq:phase_EUR}
\operatorname{Pr}\left[N_{{\rm key},1}^{\rm det}<\underline{N}_{{\rm key},1}^{\rm det} \cup e_{{\rm ph},1}>g_{\epsilon}(\overrightarrow{n}, N)\right] \leq \epsilon,
\end{equation}
where $N_{{\rm key},1}^{\rm det}$ represents the number of single-photon detected key rounds, ${e}_{{\rm p h},1}$ represents the single-photon phase-error rate, $\overrightarrow{{n}}$ is the  vector containing the announced data before step 6, $N$ is the total number of transmitted rounds, $\epsilon'$ is the failure probability of the bound, and $g_{\epsilon}(\cdot,\cdot)$ is a function that relates all these quantities. Bounds of the form given in \cref{eq:phase_EUR} are sufficient to establish security through either phase-error correction (PEC) techniques \cite{Koashi} or by combining entropic uncertainty relations (EUR) together with the leftover hashing lemma (LHL) \cite{EUR_1,EUR_2,EUR_3}, even in the case of variable-length keys \cite{Tupkary2025phaseerrorrate}. In particular, from the EUR+LHL framework, we have the following result.
\begin{theorem}[Variable-length security of decoy-state-type protocols via EUR]
\label{thm:variable_length_security_decoy}

Suppose that we have the guarantee that, in the phase-error estimation protocol described above, \cref{eq:phase_EUR} holds, where $\underline{N}_{{\rm key},1}^{\rm det}$ and $g_{\epsilon}(\overrightarrow{n}, N)$ are functions of the observed data vector $\overrightarrow n$. Let $\lambda_\mathrm{EC}$ be a function of $\overrightarrow n$ that determines the number of bits disclosed during error correction, and let
\begin{equation}
\label{eq:final_key_length}
    \ell = \max\Big(0,\underline{N}_{{\rm key},1}^{\rm det} \big[1-h\big(g_{\epsilon}(\overrightarrow{n}, N)\big)\big] - \lambda_\mathrm{EC} - 2 \log_2 \frac{1}{2\varepsilon_\mathrm{PA}}-\log_2 \frac{2}{\varepsilon_\mathrm{corr}}\Big),
\end{equation}
be a function of $\overrightarrow n$ that determines the length of the final key, where $h(x)$ is the binary entropy function for $x\leq 1/2$ and $h(x) = 1$ otherwise. Then, if Alice and Bob run the \emph{Actual protocol (source replaced)} using this choice of $\lambda_\mathrm{EC}$ and $\ell$, the output key is $(\varepsilon_{\mathrm{corr}}+\varepsilon_{\mathrm{sec}})$-secure, with $\varepsilon_{\mathrm{sec}} = 2\sqrt{\epsilon} + \varepsilon_{\mathrm{PA}}$. 
\end{theorem}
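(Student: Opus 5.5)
The plan follows the standard EUR+LHL route for variable-length keys (as in \cite{Tupkary2025phaseerrorrate}), adapted so that only the single-photon part of the key rounds carries the entropy. Correctness and secrecy are handled separately.

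\textbf{Correctness.} Error verification uses a two-universal hash with output length $\log_2(2/\varepsilon_{\rm corr})$. For any realized pair of keys, the probability that they differ but the hashes agree is at most $\varepsilon_{\rm corr}/2\le\varepsilon_{\rm corr}$. This gives $\varepsilon_{\rm corr}$-correctness for every value of $\ell$, independently of the choice of $\lambda_{\rm EC}$.

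\textbf{Secrecy.} Because the Source-replaced protocol is statistically equivalent to the Actual protocol, we may work in the source-replaced picture. There, Alice's sifted key comes from $Z$-type measurements of $A_k$ for $k\in\mathcal{D}_{\rm key}$.

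1. Condition on the announced data $\vec n$ and on the photon-number outcomes $m_k$. The $M_k$ measurement commutes with everything and does not change the announcements, so this conditioning is free.

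2. Split the key register into $A_{\rm key,1}$ (rounds with $m_k=1$) and $A_{\rm rest}$. Apply the chain rule for smooth min-entropy, discarding $A_{\rm rest}$, to get
\begin{equation*}
H_{\min}^{\bar\epsilon}(A_{\rm key}\mid E')\ge H_{\min}^{\bar\epsilon}(A_{\rm key,1}\mid E').
\end{equation*}
Here $E'$ collects Eve's system together with all registers and announcements outside $A_{\rm key,1}$.

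3. Apply the EUR for smooth entropies to the $N_{{\rm key},1}^{\rm det}$ qubit registers, with the $Z/X$ complementary pair on $A_k$. This gives
\begin{equation*}
H_{\min}^{\bar\epsilon}(Z_{\rm key,1}\mid E')\ge N_{{\rm key},1}^{\rm det}-H_{\max}^{\bar\epsilon}(X_{\rm key,1}\mid B_{\rm key,1}),
\end{equation*}
where $B_{\rm key,1}$ denotes Bob's filtered systems, measured with $\{G_\pm\}$ in the Phase-error estimation protocol.

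4. Bound the max-entropy term using \cref{eq:phase_EUR}, applied event-wise to each $\vec n$. On the event $e_{{\rm ph},1}\le g_\epsilon$ and $N_{{\rm key},1}^{\rm det}\ge\underline N_{{\rm key},1}^{\rm det}$, standard typicality gives $H_{\max}^{\bar\epsilon}\le N_{{\rm key},1}^{\rm det}h(g_\epsilon)$ with $\bar\epsilon=\sqrt{\epsilon}$. The complementary event has probability at most $\epsilon$, and it is absorbed into the smoothing: we construct the subnormalized state restricted to the good event, which is $\sqrt{\epsilon}$-close in purified distance.

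5. The binary entropy is capped at $1$, so $N(1-h)$ is monotone increasing in $N$. We may therefore replace $N_{{\rm key},1}^{\rm det}$ by $\underline N_{{\rm key},1}^{\rm det}$.

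6. Subtract the leakage from error correction, $\lambda_{\rm EC}$, and from verification, $\log_2(2/\varepsilon_{\rm corr})$, using the chain rule for classical leakage.

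7. Apply the LHL with output length $\ell$ as in \cref{eq:final_key_length}. The distance from ideal is at most $2\bar\epsilon+\tfrac12\sqrt{2^{\ell-H_{\min}^{\bar\epsilon}}}$. The $-2\log_2(1/(2\varepsilon_{\rm PA}))$ term makes the second piece at most $\varepsilon_{\rm PA}$, which yields $\varepsilon_{\rm sec}=2\sqrt\epsilon+\varepsilon_{\rm PA}$.

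\textbf{Main obstacle.} The difficulty is the variable-length aspect: $\ell$ and $\lambda_{\rm EC}$ depend on $\vec n$, so the EUR cannot be applied to a single fixed event. I would follow \cite{Tupkary2025phaseerrorrate} directly. Decompose the output state as $\sum_{\vec n}\Pr[\vec n]\rho_{\vec n}$ and bound the secrecy as a sum over $\vec n$ of $\Pr[\vec n]$ times the LHL bound for each conditional state. Then use a single global smoothing built from the one event in \cref{eq:phase_EUR}, which holds jointly over all $\vec n$. This is what makes the total failure contribution $2\sqrt\epsilon$, rather than a union bound over the possible values of $\vec n$.

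The decoy-specific part is comparatively light. The entropy is only claimed for single-photon rounds, the $m_k$ labels are determined by commuting measurements, and the multi-photon and vacuum key bits are simply handed to Eve through the chain rule in step 2. So once \cref{eq:phase_EUR} is granted, no further decoy analysis is needed.
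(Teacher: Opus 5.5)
Your plan is correct and takes essentially the same approach as the paper, which proves this theorem only by citing Theorem 1 of \cite{curras-lorenzoSecurityDecoystate2026} and Theorem 3 of \cite{Tupkary2025phaseerrorrate}; those rest on the argument you outline (photon-number tagging via $M_k$, EUR on the single-photon key rounds with the other key bits handed to Eve, leakage chain rule, and LHL). One step needs tightening: in step 4, a uniform $\bar\epsilon=\sqrt{\epsilon}$ for each $\vec n$ is not justified, and the subnormalized restriction to the good event is only $\sqrt{2p-p^2}$-close for bad-event probability $p$, so you should instead smooth each $\vec n$ with the normalized state conditioned on the good event, which is $\sqrt{\Pr[\mathrm{bad}\mid\vec n]}$-close in purified distance, and then use concavity of the square root, $\sum_{\vec n}\Pr[\vec n]\,2\sqrt{\Pr[\mathrm{bad}\mid\vec n]}\le 2\sqrt{\Pr[\mathrm{bad}]}\le 2\sqrt{\epsilon}$, which is where the $2\sqrt{\epsilon}$ actually comes from.
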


\begin{proof}
    Essentially identical to \cite[Theorem 1]{curras-lorenzoSecurityDecoystate2026}, see also \cite[Theorem 3]{Tupkary2025phaseerrorrate}.
\end{proof}

\subsection{Technical results}

Having established that suitable bounds on the number of single-photon key rounds and the single-photon phase-error rate suffice to guarantee security, we now show how to bound the latter in the presence of bit and basis correlations of range $l_c$. For that, we partition the protocol rounds into $(l_c+1)$ groups, where rounds within each group are separated by at least $l_c$ positions and thus uncorrelated. This allows us to compute bounds on the single-photon phase-error rate of each group independently. Crucially, these per-group bounds can be combined directly into a single overall phase-error bound, avoiding the need to perform separate privacy amplification steps nor to invoke composability arguments. As highlighted in the beginning of the section, the results below generalize respectively Theorem 2, Corollary 1, and Corollary 2 of \cite{curraslorenzo2026rigorousphaseerrorestimationsecurityframework} to the decoy-state setting.

\begin{theorem} 
\label{thm:2}
Consider a prepare-and-measure decoy-state QKD protocol with an uncorrelated source in which Alice emits a global state
\begin{equation}
\label{eq:uncorrelated_state_general}
\left|\Psi_N\right\rangle_{A_1^N I_1^N M_1^N T_1^N}=\sum_{j_1^N} \sqrt{\operatorname{Pr}\left[j_1^N\right]}\left|j_1^N\right\rangle_{A_1^N}\left|\Psi_{j_1^N}\right\rangle_{I_1^N M_1^N T_1^N},
\end{equation}
where $j_1^N=j_1...j_N$, $\operatorname{Pr}\left[j_1^N\right]=\prod_{k=1}^{N}p_{j_k}$,  and
\begin{equation}
\label{eq:uncorrelated_state}
\left|\Psi_{j_1^N}\right\rangle_{I_1^N M_1^N T_1^N}=\bigotimes_{k=1}^N\sum_{\mu_k} \sqrt{p_{\mu_k}}|{\mu}_{k}\rangle_{I_k} \sum_{m_{k}=0}^{\infty} \sqrt{p_{m_k \mid \mu_k}}|m_k\rangle_{M_k}\left|\psi_{j_k}^{m_k,k}\right\rangle_{T_k}.
\end{equation}
Suppose that there exists an admissibility set $S^{N}$ for each $N$, such that as long as $\Big\{\left|\Psi_{j_1^N}\right\rangle_{I_1^N M_1^N T_1^N}\Big\}_{j_{1}^{N}\in \mathcal{J}^{N}}\in S^{N}$ for any sequence of setting choices $j_{1}^{N}$ and for any attack performed by Eve, the following condition holds,  
\begin{equation}
\label{eq:phase_error_basic}
\operatorname{Pr}\left[N_{{\rm key},1}^{{\rm det}}<\underline{N}_{{\rm key},1}^{{\rm det}} \cup e_{{\rm ph},1}>g_{\epsilon}(\overrightarrow{n}, N)\right] \leq \epsilon,
\end{equation}
for a certain $\underline{N}_{{\rm key},1}^{{\rm det}}$ and  $g_{\epsilon}(\cdot,\cdot)$ and where $\overrightarrow{n}$ is a vector containing the announced data (before post-processing).

Besides,  consider the analogous protocol with a source that suffers from correlations in bit and basis settings. That is, in this scenario Alice emits the state 
\begin{equation}
\begin{aligned}
\label{eq:state_correlated}
&\left|\Psi_N^{\prime}\right\rangle_{A_1^N I_1^N M_1^N T_1^N}=\sum_{j_1^N} \sqrt{\operatorname{Pr}\left[j_1^N\right]}\left|j_1^N\right\rangle_{A_1^N}\left|\Psi_{j_1^N}^{\prime}\right\rangle_{I_1^N M_1^N T_1^N},
\end{aligned}
\end{equation}
where 
\begin{equation}
\begin{aligned}
&\left|\Psi_{j_1^N}^{\prime}\right\rangle_{I_1^N M_1^N T_1^N}=\bigotimes_{k=1}^N\sum_{\mu_k}\sqrt{p_{\mu_k}}|{\mu}_{k}\rangle_{I_k} \sum_{m_{k}=0}^{\infty} \sqrt{p_{m_k \mid \mu_k}}|m_k\rangle_{M_k}\ket{\psi_{ j_{k-l_c}^{k}}^{m_k,k}}_{T_k}.
\end{aligned}
\end{equation}
In this latter protocol with correlated states let us partition the rounds into $(l_c+1)$ sets or subprotocols according to $P_{w}=\{k:k\equiv w \text{ mod } l_c+1\}$ and define the complementary sets $P_{\bar{w}}=\{k:k\not\equiv w \text{ mod } l_c+1\}$. Then, for each $w$ define the subsequences $j_{P_{w}}$ and $j_{P_{\bar{w}}}$ indexed by $P_{w}$ and $P_{\bar{w}}$ respectively. Suppose that for every $w$ and $j_{P_{\bar{w}}}$ there exists an isometry $V^{j_{P_{\bar{w}}}}_{T_{P_w} \rightarrow I_{P_{\bar{w}}} M_{P_{\bar{w}}} T_1^N} : \mathcal{H}_{T_{P_w}}\rightarrow \mathcal{H}_{I_{P_{\bar{w}}} M_{P_{\bar{w}}}T_{1}^{N}}$ such that
\begin{equation}
\label{eq:isomety_existance}
    \Big\{(V^{j_{P_{\bar{w}}}}_{T_{P_w} \rightarrow I_{P_{\bar{w}}} M_{P_{\bar{w}}} T_1^N})^{\dagger}  \left|\Psi_{j_1^N}^{\prime}\right\rangle_{I_1^N M_1^N T_1^N}\Big\}_{j_{P_w}\in \mathcal{J}^{N,w}}\in S^{N,w},
\end{equation}
where $S^{N,w}$ represents the admissibility set of size $N^{w}\equiv |P_{w}|$.

Then,
\begin{equation}
\operatorname{Pr}\left[N_{{\rm key},1}^{{\rm det}}<\sum_{w}\underline{N}_{{\rm key},1}^{{\rm det},w} \cup e_{{\rm ph},1}>\max_{w} g_{\epsilon}(\overrightarrow{n}^{w},N^{w})\right]\leq(l_c+1)\epsilon.
\end{equation}
\end{theorem}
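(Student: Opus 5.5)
The plan is to reduce the correlated protocol, one group at a time, to an instance of the uncorrelated protocol of size $N^w=|P_w|$. Then we apply the hypothesis \cref{eq:phase_error_basic} to each group and combine the $(l_c+1)$ per-group statements with a union bound and simple counting.

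\textbf{Step 1: reduction for a fixed $w$.} Fix $w$ and condition on the settings $j_{P_{\bar w}}$ of the rounds outside $P_w$. In the source-replaced picture these settings come from measuring $A_{P_{\bar w}}$, which commutes with everything Alice does on $A_{P_w}$. Conditioned on $j_{P_{\bar w}}$, the global state is $\sum_{j_{P_w}}\sqrt{\Pr[j_{P_w}]}\,|j_{P_w}\rangle_{A_{P_w}}|\Psi'_{j_1^N}\rangle$. This uses that $\Pr[j_1^N]$ factorizes, so $\Pr[j_{P_w}]=\prod_{k\in P_w}p_{j_k}$.

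By \cref{eq:isomety_existance}, $|\Psi'_{j_1^N}\rangle=V^{j_{P_{\bar w}}}|\tilde\Psi_{j_{P_w}}\rangle$, where the family $\{|\tilde\Psi_{j_{P_w}}\rangle\}$ lies in $S^{N,w}$. Strictly, the hypothesis only gives $V^\dagger|\Psi'\rangle\in S^{N,w}$, so I will read it as saying that $|\Psi'\rangle$ lies in the range of $V$. This is natural, because rounds in $P_w$ are at least $l_c+1$ apart and their states depend only on their own $j_k$ once $j_{P_{\bar w}}$ is fixed.

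Since $V^{j_{P_{\bar w}}}$ does not depend on $j_{P_w}$, applying it can be absorbed into Eve's attack. Specifically, it maps the systems $I_{P_w}M_{P_w}T_{P_w}$, structured as in \cref{eq:uncorrelated_state}, to everything Eve and the announcements see, including the $I,M$ registers of the other rounds. Those registers are measured and announced, so I treat them as part of Eve's side information and of the classical data. The whole correlated protocol therefore equals an uncorrelated protocol on the $N^w$ rounds of $P_w$ with an admissible source and some attack by Eve.

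\textbf{Step 2: apply the hypothesis per group.} The hypothesis holds for any attack and any admissible family. Hence, conditioned on each $j_{P_{\bar w}}$, and therefore also after averaging,
\[
\Pr\!\left[N^{{\rm det},w}_{{\rm key},1}<\underline N^{{\rm det},w}_{{\rm key},1}\cup e^w_{{\rm ph},1}>g_\epsilon(\overrightarrow n^w,N^w)\right]\le\epsilon .
\]
Here the group quantities (counts, phase errors, data $\overrightarrow n^w$) are those of the rounds in $P_w$. One must check that they coincide with the restriction of the global random variables, since the steps of the phase-error protocol act round by round.

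\textbf{Step 3: combine the groups.} A union bound over $w$ gives failure probability at most $(l_c+1)\epsilon$. Outside the failure event we have $N^{\rm det}_{{\rm key},1}=\sum_w N^{{\rm det},w}_{{\rm key},1}\ge\sum_w\underline N^{{\rm det},w}_{{\rm key},1}$. For the phase errors,
\[
e_{{\rm ph},1}=\frac{\sum_w e^w_{{\rm ph},1}N^{{\rm det},w}_{{\rm key},1}}{\sum_w N^{{\rm det},w}_{{\rm key},1}}\le\max_w g_\epsilon(\overrightarrow n^w,N^w),
\]
because a weighted average is bounded by its maximum. This gives the claimed bound.

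\textbf{Main obstacle.} The hard part is Step 1. We must justify rigorously that the isometry, together with the non-group intensity and photon-number registers that Alice later measures and announces, can be absorbed into Eve's operation without changing the joint distribution of the group-$w$ announcements, phase errors and photon numbers. In particular, the decoy counterfactual structure in $I_{P_w}M_{P_w}$ must be left untouched. I would follow the argument of Theorem 2 in \cite{curraslorenzo2026rigorousphaseerrorestimationsecurityframework} and add the intensity and photon registers as extra untouched tensor factors.
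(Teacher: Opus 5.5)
Your proposal is correct and follows essentially the same route as the paper. The paper conditions on $j_{P_{\bar w}}$ through an auxiliary $w$-th PEEP (bit measurements on $P_{\bar w}$, phase measurements on $P_w$) whose group-$w$ marginals coincide with those of the full PEEP, which is the check you flag at the end of Step~2; it then absorbs $V^{j_{P_{\bar w}}}$ and the traced-out $I_{P_{\bar w}}M_{P_{\bar w}}B_{P_{\bar w}}$ into $\Phi''' = \Phi''\circ V$, applies the uncorrelated guarantee, averages, and finishes with the union bound and weighted-average-$\le$-max step, while your reading that $|\Psi'_{j_1^N}\rangle$ lies in the range of $V^{j_{P_{\bar w}}}$ is exactly what the paper tacitly uses when it inserts $VV^\dagger$ in front of $|\Psi''_{j_{P_{\bar w}}}\rangle$.
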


\begin{proof}

Let us first consider the protocol in the absence of correlations, in which Alice generates the state given by \cref{eq:uncorrelated_state}.  Eve applies a global isometry $V_{T_{1}^N\rightarrow B_{1}^{N}E}$ and sends systems $B_{1}^{N}$ to Bob, while keeping system $E$. We are interested in the state shared by Alice and Bob after Eve’s attack, that is, after step two of the source-replaced protocol, and thus we define a completely-positive trace-preserving (CPTP) map $\Phi_{T_{1}^N\rightarrow B_{1}^{N}}$ consisting of first applying the global isometry and then tracing out system $E$. The state shared by Alice and Bob after Eve’s attack reads
\begin{equation}
    \rho_{A_1^N I_1^N M_1^N B_1^N}= \Phi_{T_{1}^N\rightarrow B_{1}^{N}}\Big( \ket{\Psi_N}\bra{\Psi_N}_{A_1^N I_1^N M_1^N T_1^N} \Big).
\end{equation}

Next, Alice and Bob perform measurements on their systems through which they learn $\overrightarrow{n}$, $e_{\text{ph},1}$ and $N^{\text{det}}_{\text{key},1}$ . 

We can define a simple two-outcome POVM $\left\{M_{A_1^N I_1^N M_1^N B_1^N}^{1}, M_{A_1^N I_1^N M_1^N  B_1^N}^{2}\right\}$ that only checks whether $N_{{\rm key},1}^{{\rm det}}<\underline{N}_{{\rm key},1}^{{\rm det}}\cup e_{\text{ph},1} > g_{\epsilon}(\overrightarrow{{n}},N)$ or not. 
Using this, we can restate the bound in \cref{eq:phase_error_basic} as the following guarantee: if Alice generates the global state given by \cref{eq:uncorrelated_state} and such state is in $S^{N}$, then, for any CPTP map $\Phi_{T_1^N \rightarrow B_1^N}$,
\begin{equation}
\label{eq:target}
\operatorname{Tr}\left[M_{A_1^N I_1^N M_1^N B_1^N}^{1} \Phi_{T_1^N \rightarrow B_1^N}\left(|\Psi_N\rangle\langle\Psi_N|_{A_1^N I_1^N M_1^N T_1^N}\right)\right] \leq \epsilon .
\end{equation}

Now, let's consider an analogous protocol with correlated bit and basis sources, whose global state is given by \cref{eq:state_correlated}.
The strategy to lower bound the number of single-photon key rounds and upper bound the phase-error rate in the correlated scenario is, as already mentioned, to partition the protocol rounds into ($l_c+1$) subprotocols $P_w$ indexed by $w \in\left\{0,1, \ldots, l_c\right\}$, and finding suitable bounds in each subprotocol $P_w$ independently. To achieve this, we show below that each subprotocol satisfies the conditions of the uncorrelated scenario when conditioning on any value of the settings $j_{P_{\bar{w}}}$ of the complementary set $P_{\bar{w}}$.  As a tool, we introduce a modified protocol in which Alice and Bob perform the phase-error measurements for the sifted key rounds in $P_{w}$, but perform the actual bit measurements for the rounds in $P_{\bar{w}}$. Below we show the changes between the actual source-replaced protocol (see \cite{curraslorenzo2026rigorousphaseerrorestimationsecurityframework} for more details) and the new phase-estimation protocol for the $w$-subprotocol.

\begin{tcolorbox}[breakable, colback=red!1,
  colframe=gray!30,title=$w$-th phase-error estimation protocol (PEEP) (defined for each $w$)]
\begin{enumerate}
\item[1--5.] Same as in the Source-replaced protocol.
\item[6.] \textbf{Measurements in sifted-key rounds:} Define the set of rounds $P_w=\left\{k: k \equiv w\left(\bmod l_c+1\right)\right\}$ and its complement $P_{\bar{w}}=\left\{k: k \not \equiv w\left(\bmod l_c+1\right)\right\}$.

(a) Bit measurements for rounds in $P_{\bar{w}}$ : For each round $k \in \mathcal{D}_{\text {key }} \cap P_{\bar{w}}$, Alice measures $A_k$ in $\left\{|0\rangle_{A_k},|1\rangle_{A_k}\right\}$ and Bob measures his filtered system using $\left\{G_0, G_1\right\}$.

(b) Phase-error measurements for rounds in $P_w$ : For each round $k \in \mathcal{D}_{\text {key }} \cap P_w$, Alice measures $A_k$ in $\left\{|+\rangle_{A_k},|-\rangle_{A_k}\right\}$ and Bob measures his filtered system using $\left\{G_{+}, G_{-}\right\}$. We denote the single-photon phase-error rate of the $w$-th partition $e_{\text{ph},1}^{w}$ as the fraction of single-photon events in which their outcomes differ and $m_k=1$. Note that this is well-defined, as in step 4 Alice has already measured the register $M_k$ for all rounds.
\end{enumerate}
\end{tcolorbox}

To lower bound the number of single-photon key rounds $N_{{\rm key},1}^{{\rm det}, w}$ and upper bound the single-photon phase-error rate $e_{\text{ph,1}}^{w}$ of each subprotocol $P_w$, we consider a similar protocol to the one introduced above but in which  Alice and Bob perform
their actions in a different order. First, Alice generates the global state in \cref{eq:state_correlated} and then Eve applies a certain global isometry, which can be seen as a CPTP map $\Phi^{'}_{T_1^N\rightarrow B_1^N}$ after tracing out Eve's register. Next, Alice performs all her bit and basis measurements for the rounds in $P_{\bar{w}}$ learning her setting choices $j_{P_{\bar{w}}}$. Then Alice
and Bob perform their measurements for the rounds in $P_w$ learning $\overrightarrow{n}^{w}$,  $e_{\text{ph,1}}^{w}$ and $N_{{\rm key},1}^{{\rm det}, w}$. Finally, Bob performs all his measurements for the rounds in $P_{\bar{w}}$, while Alice measures registers $I_k$ and $M_k$ for these rounds. For this modified protocol the unnormalized state shared by Alice and Bob after Alice measures the bit and basis values in $P_{\bar{w}}$ conditional on obtaining a setting choice sequence $j_{P_{\bar{w}}}$ after tracing out all the systems for the $P_{\bar{w}}$ rounds reads
\begin{equation}
\label{eq:unn_state}
\begin{aligned}
    \tilde{\rho}^{', j_{P_{\bar{w}}}}_{A_{P_w} I_{P_w} M_{P_w} B_{P_w}}  & = \operatorname{Tr}_{I_{P_{\bar{w}}} M_{P_{\bar{w}}} B_{P_{\bar{w}}}}  \Big[\bra{j_{P_{\bar{w}}}}_{A_{P_{\bar{w}}}} \Phi^{'}_{T_1^N\rightarrow B_1^N}\Big( \ket{\Psi^{'}_N}\bra{\Psi^{'}_N}_{A_1^N I_1^N M_1^N T_1^N}\Big)\ket{j_{P_{\bar{w}}}}_{A_{P_{\bar{w}}}}\Big] \\ 
    & = \text{Pr}[j_{P_{\bar{w}}}] \operatorname{Tr}_{I_{P_{\bar{w}}} M_{P_{\bar{w}}} B_{P_{\bar{w}}}}  \Big[ \Phi^{'}_{T_1^N\rightarrow B_1^N}\Big( \ket{\Psi^{''}_{j_{P_{\bar{w}}}}}\bra{\Psi^{''}_{j_{P_{\bar{w}}}}}_{A_{P_w} I_1^N M_1^N T_1^N}\Big)\Big]\\
    & = \text{Pr}[j_{P_{\bar{w}}}] \Phi^{''}_{I_{P_{\bar{w}}} M_{P_{\bar{w}}} T_1^N\rightarrow B_{P_w}}\Big( \ket{\Psi^{''}_{j_{P_{\bar{w}}}}}\bra{\Psi^{''}_{j_{P_{\bar{w}}}}}_{A_{P_w} I_1^N M_1^N T_1^N}\Big),
\end{aligned}
\end{equation}
where
\begin{equation}
    \Phi^{''}_{I_{P_{\bar{w}}} M_{P_{\bar{w}}} T_1^N\rightarrow B_{P_w}} = \operatorname{Tr}_{I_{P_{\bar{w}}} M_{P_{\bar{w}}} B_{P_{\bar{w}}}} \Phi^{'}_{T_1^N\rightarrow B_1^N},
\end{equation}
and 
\begin{equation}
\ket{\Psi^{''}_{j_{P_{\bar{w}}}}}_{A_{P_w} I_1^N M_1^N T_1^N}  =\sum_{j_{P_w}} \sqrt{\operatorname{Pr}\left[j_{P_w}\right]}\left|j_{P_w}\right\rangle_{A_{P_w}}\left|\Psi_{j_1^N}^{\prime}\right\rangle_{I_1^N M_1^N T_1^N}.
\end{equation}
The normalized state conditional on the outcome $j_{P_{\bar{w}}}$ can thus be written as
\begin{equation}
\label{eq:state_prime}
    {\rho}^{', j_{P_{\bar{w}}}}_{A_{P_w} I_{P_w} M_{P_w} B_{P_w}} = \Phi^{''}_{I_{P_{\bar{w}}} M_{P_{\bar{w}}} T_1^N\rightarrow B_{P_w}}\Big( \ket{\Psi^{''}_{j_{P_{\bar{w}}}}}\bra{\Psi^{''}_{j_{P_{\bar{w}}}}}_{A_{P_w} I_1^N M_1^N T_1^N}\Big).
\end{equation}
Now, consider the measurements performed by Alice and Bob on the rounds in $P_w$, through which they learn $\overrightarrow{n}^{w}$,  $e_{\text{ph},1}^{w}$ and $N_{{\rm key},1}^{{\rm det}, w}$. Again, we can define a simple two-outcome POVM $\left\{M_{A_{P_w} I_{P_w} M_{ P_w}B_{P_w}}^{1,w}, M_{A_{P_w} I_{P_w} M_{ P_w}B_{P_w}}^{2,w}\right\}$ that only checks whether $N_{{\rm key},1}^{{\rm det}, w}<\underline{N}_{{\rm key},1}^{{\rm det}, w}\cup{e}_{\text{ph},1}^{w} > g_{\epsilon}(\overrightarrow{n}^{w}, N^{w})$ or not. 
By doing so, we can write
\begin{equation}
\label{eq:not_equivalent}
\operatorname{Pr}\left[N_{{\rm key},1}^{{\rm det}, w}<\underline{N}_{{\rm key},1}^{{\rm det}, w}\cup e_{\text{ph},1}^{w}>g_{\epsilon}(\overrightarrow{{n}}^{w}, N^{w})|j_{P_{\bar{w}}}\right]=\operatorname{Tr}\left[{M}_{A_{P_w} I_{P_w} M_{P_w} B_{P_w}}^{1,w} \rho_{A_{P_w} I_{P_w} M_{P_w} B_{P_w}}^{', j_{P_{\bar{w}}}}\right] .
\end{equation}
We have to show that \cref{eq:not_equivalent} can be rewritten in such a way that it becomes equivalent to \cref{eq:target} from the uncorrelated protocol.

By assumption (see \cref{eq:isomety_existance}), there exists an isometry $V^{j_{P_{\bar{w}}}}_{T_{P_w} \rightarrow I_{P_{\bar{w}}} M_{P_{\bar{w}}} T_1^N}$ such that $\Big\{(V^{j_{P_{\bar{w}}}}_{T_{P_w} \rightarrow I_{P_{\bar{w}}} M_{P_{\bar{w}}} T_1^N})^{\dagger}  \left|\Psi_{j_1^N}^{\prime}\right\rangle_{I_1^N M_1^N T_1^N}\Big\}_{j_{P_w}\in \mathcal{J}^{N,w}}\in S^{N,w}$. Let us rewrite \cref{eq:state_prime} as,
\begin{equation}
\begin{aligned}
    &{\rho}^{', j_{P_{\bar{w}}}}_{A_{P_w} I_{P_w} M_{P_w} B_{P_w}}= \Phi^{''}_{I_{P_{\bar{w}}} M_{P_{\bar{w}}} T_1^N\rightarrow B_{P_w}}\Big( \ket{\Psi^{''}_{j_{P_{\bar{w}}}}}\bra{\Psi^{''}_{j_{P_{\bar{w}}}}}_{A_{P_w} I_1^N M_1^N T_1^N}\Big)\\
    &=\Phi^{''}_{I_{P_{\bar{w}}} M_{P_{\bar{w}}} T_1^N\rightarrow B_{P_w}}\Big( V^{j_{P_{\bar{w}}}}_{T_{P_w} \rightarrow I_{P_{\bar{w}}} M_{P_{\bar{w}}} T_1^N} (V^{j_{P_{\bar{w}}}}_{T_{P_w} \rightarrow I_{P_{\bar{w}}} M_{P_{\bar{w}}} T_1^N})^{\dagger}   \ket{\Psi^{''}_{j_{P_{\bar{w}}}}}\bra{\Psi^{''}_{j_{P_{\bar{w}}}}}_{A_{P_w} I_1^N M_1^N T_1^N} V^{j_{P_{\bar{w}}}}_{T_{P_w} \rightarrow I_{P_{\bar{w}}} M_{P_{\bar{w}}} T_1^N} (V^{j_{P_{\bar{w}}}}_{T_{P_w} \rightarrow I_{P_{\bar{w}}} M_{P_{\bar{w}}} T_1^N})^{\dagger}\Big)\\
    &=\Phi^{''}_{I_{P_{\bar{w}}} M_{P_{\bar{w}}} T_1^N\rightarrow B_{P_w}}\Big(V^{j_{P_{\bar{w}}}}_{T_{P_w} \rightarrow I_{P_{\bar{w}}} M_{P_{\bar{w}}} T_1^N}  \ket{\Psi^{'''}_{j_{P_{\bar{w}}}}}\bra{\Psi^{'''}_{j_{P_{\bar{w}}}}}_{A_{P_w} I_{P_w} M_{P_w} T_{P_w}}(V^{j_{P_{\bar{w}}}}_{T_{P_w} \rightarrow I_{P_{\bar{w}}} M_{P_{\bar{w}}} T_1^N})^{\dagger} \Big)\\
    &= \Phi^{'''}_{T_{P_w}\rightarrow B_{P_w}}\Big( \ket{\Psi^{'''}_{j_{P_{\bar{w}}}}}\bra{\Psi^{'''}_{j_{P_{\bar{w}}}}}_{A_{P_w} I_{P_w} M_{P_w} T_{P_w}}\Big),
\end{aligned}
\end{equation}
where 
\begin{equation}
\label{eq:triple_operator}
    \Phi^{'''}_{T_{P_w}\rightarrow B_{P_w}} (\sigma)= \Phi^{''}_{I_{P_{\bar{w}}} M_{P_{\bar{w}}} T_1^N\rightarrow B_{P_w}}\Big( V^{j_{P_{\bar{w}}}}_{T_{P_w} \rightarrow I_{P_{\bar{w}}} M_{P_{\bar{w}}} T_1^N}  \sigma(V^{j_{P_{\bar{w}}}}_{T_{P_w} \rightarrow I_{P_{\bar{w}}} M_{P_{\bar{w}}} T_1^N})^{\dagger}\Big),
\end{equation}
and 
\begin{equation}
\label{eq:triple}
\begin{aligned}
    \ket{\Psi^{'''}_{j_{P_{\bar{w}}}}}_{A_{P_w} I_{P_w} M_{P_w} T_{P_w}} &= (V^{j_{P_{\bar{w}}}}_{T_{P_w} \rightarrow I_{P_{\bar{w}}} M_{P_{\bar{w}}} T_1^N})^{\dagger}   \ket{\Psi^{''}_{j_{P_{\bar{w}}}}}_{A_{P_w} I_1^N M_1^N T_1^N} \\
    &=\sum_{j_{P_w}} \sqrt{\operatorname{Pr}\left[j_{P_w}\right]}\left|j_{P_w}\right\rangle_{A_{P_w}}(V^{j_{P_{\bar{w}}}}_{T_{P_w} \rightarrow I_{P_{\bar{w}}} M_{P_{\bar{w}}} T_1^N})^{\dagger}  \left|\Psi_{j_1^N}^{\prime}\right\rangle_{I_1^N M_1^N T_1^N}.
\end{aligned}
\end{equation}
Now we can rewrite \cref{eq:not_equivalent} as 
\begin{equation}
\label{eq:equivalent}
\begin{aligned}
&\operatorname{Pr}\left[N_{{\rm key},1}^{{\rm det}, w}<\underline{N}_{{\rm key},1}^{{\rm det}, w}\cup e_{\text{ph},1}^{w}>g_{\epsilon}(\overrightarrow{{n}}^{w}, N^{w})|j_{P_{\bar{w}}}\right]\\
=&\operatorname{Tr}\left[\hat{M}_{A_{P_w} I_{P_w} M_{P_w} B_{P_w}}^{1,w} \Phi^{'''}_{T_{P_w}\rightarrow B_{P_w}}\Big( \ket{\Psi^{'''}_{j_{P_{\bar{w}}}}}\bra{\Psi^{'''}_{j_{P_{\bar{w}}}}}_{A_{P_w} I_{P_w} M_{P_w} T_{P_w}}\Big)\right].
\end{aligned}
\end{equation}

Note that the state given by \cref{eq:triple} has precisely the form of the uncorrelated source-replacement state in \cref{eq:uncorrelated_state} for a protocol with $N^{w}$ rounds where the states $\Big\{\left|\Psi_{j_1^N}\right\rangle_{I_1^N M_1^N T_1^N}\Big\}_{j_{1}^{N}\in \mathcal{J}^{N}}\in S^{N}$ have been replaced by $\Big\{(V^{j_{P_{\bar{w}}}}_{T_{P_w} \rightarrow I_{P_{\bar{w}}} M_{P_{\bar{w}}} T_1^N})^{\dagger}  \left|\Psi_{j_1^N}^{\prime}\right\rangle_{I_1^N M_1^N T_1^N}\Big\}_{j_{P_w}\in \mathcal{J}^{N,w}}\in S^{N,w}.$ Because of this, \cref{eq:equivalent} has the same form as \cref{eq:target}, which by assumption is valid for any $N$ and holds for any CPTP map. Therefore, it follows that $\operatorname{Pr}\left[N_{{\rm key},1}^{{\rm det}, w}<\underline{N}_{{\rm key},1}^{{\rm det}, w}\cup e_{\text{ph},1}^{w}>g_{\epsilon}(\overrightarrow{{n}}^{w},N^{w})|j_{P_{\bar{w}}}\right]\leq \epsilon$ and by applying the law of total probability, we have that
\begin{equation}
\begin{aligned}
&\operatorname{Pr}\left[N_{{\rm key},1}^{{\rm det}, w}<\underline{N}_{{\rm key},1}^{{\rm det}, w}\cup e_{\text{ph},1}^{w}>g_{\epsilon}(\overrightarrow{{n}}^{w},N^{w})\right] = \\
&\sum_{j_{P_{\bar{w}}}}\operatorname{Pr}[j_{P_{\bar{w}}}]\operatorname{Pr}\left[N_{{\rm key},1}^{{\rm det}, w}<\underline{N}_{{\rm key},1}^{{\rm det}, w}\cup e_{\text{ph},1}^{w}>g_{\epsilon}(\overrightarrow{{n}}^{w},N^{w})|j_{P_{\bar{w}}}\right]\leq  \sum_{j_{P_{\bar{w}}}}\operatorname{Pr}[j_{P_{\bar{w}}}]\epsilon \leq\epsilon.
\end{aligned}
\end{equation}

Note that this result applies for the $w$-th phase-error estimation protocol, which we now write explicitly. However, we want to bound the phase-error rate for the original phase-error estimation protocol, and for now we have that
\begin{equation}
\operatorname*{Pr}_{w\text{-th PEEP}}
\left[ N_{{\rm key},1}^{{\rm det}, w}<\underline{N}_{{\rm key},1}^{{\rm det}, w}\cup e_{\text{ph},1}^{w} > g_{\epsilon}(\overrightarrow{n}^{w},N^{w}) \right]
\le \epsilon .
\end{equation}

For the next steps of the proof, we define the full phase-error estimation protocol (PEEP) below

\begin{tcolorbox}[breakable, colback=brown!1,
  colframe=black!30,title=Full phase-error estimation protocol (PEEP)]
\begin{enumerate}
\item[1--5.] Same as in Source-replaced protocol.
\item[6.] \textbf{Measurements in sifted-key rounds:} Partition the rounds $k \in\{1, \ldots, N\}$ into the sets $P_w=\{k: k \equiv w \left.\left(\bmod l_c+1\right)\right\}$. Then:

(a) Phase-error measurements for rounds in $P_w$ : For each round $k \in \mathcal{D}_{\text {key }} \cap P_w$, Alice measures $A_k$ in $\left\{|+\rangle_{A_k},|-\rangle_{A_k}\right\}$ and Bob measures the filtered system using $\left\{G_{+}, G_{-}\right\}$. We denote the single-photon phase-error rate of the $w$-th partition ${e}_{\text{ph},1}^{w}$ as the fraction of events in which their outcomes differ and $m_k=1$.
\end{enumerate}
\end{tcolorbox}

The statistics of the random variables $e_{\text{ph},1}^{w}$, $N_{{\rm key},1}^{{\rm det}, w}$ and $\overrightarrow{n}^{w}$ depend only on the marginal state of the systems $A_{P_w}, I_{P_w}, M_{P_w}$ and $B_{P_w}$ as well as on the measurements performed in the rounds in $P_w$. Importantly, the full PEEP and the $w$-th PEEP only differ in the measurements in $P_{\bar{w}}$, which do not affect the marginal state in $P_w$. This means that the $a$ $priori$ distribution of $e_{\text{ph},1}^{w}$, $N_{{\rm key},1}^{{\rm det}, w}$ and $\overrightarrow{n}^{w}$ must be identical in both scenarios and so 
\begin{equation}
\begin{aligned}
&\operatorname*{Pr}_{\text{Full PEEP}}
\left[
N_{{\rm key},1}^{{\rm det}, w}<\underline{N}_{{\rm key},1}^{{\rm det}, w}\cup e_{\text{ph},1}^{w} > g_{\epsilon}(\overrightarrow{n}^{w},N^{w})
\right] =\\
&\operatorname*{Pr}_{w\text{-th PEEP}}
\left[
N_{{\rm key},1}^{{\rm det}, w}<\underline{N}_{{\rm key},1}^{{\rm det}, w}\cup e_{\text{ph},1}^{w} > g_{\epsilon}(\overrightarrow{n}^{w},N^{w})
\right]
\le \epsilon. 
\end{aligned}
\end{equation}
Next, we define the overall phase-error rate as
\begin{equation}
\label{eq:def_eph_app}
e_{\text{ph},1} := \frac{\sum_w N_{\mathrm{key},1}^{\mathrm{det}, w} e_{\text{ph},1}^{w}}{N_{\mathrm{key},1}^{\mathrm{det}}}
\end{equation}
where $N_{\mathrm{key},1}^{\mathrm{det}}$ denotes the total number of key rounds given by 
\begin{equation}
    N_{\mathrm{key},1}^{\mathrm{det}}:= \sum_{w}N_{\mathrm{key},1}^{\mathrm{det},w}.
\end{equation}
\cref{eq:def_eph_app} can be upper bounded by
\begin{equation}
\label{eq:rea_1}
    e_{{\rm ph},(1)}\leq \underset{w}{\rm max}\{e^{w}_{{\rm ph},(1)}\}.
\end{equation}
Finally, consider the following event for the full PEEP,
\begin{equation}
\label{eq:rea_2}
    \mathcal{B}= \left\{N_{\mathrm{key},1}^{\mathrm{det}}<\sum_{w}\underline{N}_{\mathrm{key},1}^{\mathrm{det},w}\cup e_{\text{ph},1}>\max_{w}g_{\epsilon}(\overrightarrow{n}^{w},N^{w})\right\}.
\end{equation}
It follows that,
\begin{equation}
\label{eq:rea_3}
\mathcal B
\subseteq
\bigcup_{w=0}^{l_c}
\left\{
N_{\mathrm{key},1}^{\mathrm{det},w}<\underline{N}_{\mathrm{key},1}^{\mathrm{det},w}\cup e_{\text{ph},1}^{w} > g_{\epsilon}(\overrightarrow{n}^{w},N^{w})
\right\}.
\end{equation}
Applying the union bound, we obtain
\begin{equation}
\label{eq:rea_4}
\operatorname{Pr}[\mathcal B]
\le
\sum_{w=0}^{l_c}
\operatorname{Pr}\!\left[ N_{\mathrm{key},1}^{\mathrm{det},w}<\underline{N}_{\mathrm{key},1}^{\mathrm{det},w}\cup 
e_{\text{ph},1}^{w} > g_{\epsilon}(\vec n^{w},N^{w})
\right]\leq (l_c+1)\epsilon.
\end{equation}
Therefore, in the full PEEP, we have that 
\begin{equation}
\label{eq:phase_error_final}
     \operatorname{Pr}[N_{\mathrm{key},1}^{\mathrm{det}}<\sum_{w}\underline{N}_{\mathrm{key},1}^{\mathrm{det},w}\cup e_{\text{ph},1}>\max_{w}g_{\epsilon}(\overrightarrow{n}^{w},N^{w})]\leq (l_c+1)\epsilon.
\end{equation}
\end{proof}

\begin{remark}
Provided that a lower bound $\underline{N}_{{\rm key},1}^{\rm det}$ on the number of single-photon detected key rounds and an upper bound $\overline{N}_{{\rm ph},1}$  for the number of phase errors in the uncorrelated case are available, i.e., it holds for that case  
\begin{equation}
\operatorname{Pr}\left[N_{{\rm key},1}^{{\rm det}}<\underline{N}_{{\rm key},1}^{{\rm det}} \cup N_{{\rm ph},1}>\overline{N}_{{\rm ph},1}\right] \leq \epsilon,
\end{equation}
an alternative to \cref{eq:phase_error_final} can be derived by taking the weighted average, in similar fashion to what is done in \cite{curraslorenzo2026rigorousphaseerrorestimationsecurityframework}. For that we combine 
\begin{equation}
e_{{\rm ph},1}:=\frac{N_{{\rm ph},1
    }}{N_{{\rm key},1}^{\rm det}}
\end{equation}
with \cref{eq:def_eph_app} and we follow the same reasoning as in \cref{eq:rea_1,eq:rea_2,eq:rea_3,eq:rea_4,}, which yields
\begin{equation}
    \label{eq:phase_error_improved}
     \operatorname{Pr}\left[N_{\mathrm{key},1}^{\mathrm{det}}<\underline{N}_{\mathrm{key},1}^{\mathrm{det}}\cup e_{{\rm ph},1}> \frac{\sum_{w} \overline{N}_{{\rm ph},1}^{w}}{\underline{N}_{\mathrm{key},1}^{\mathrm{det}}}\right]\leq (l_c+1)\epsilon.
\end{equation}
where we have defined $\underline{N}_{\mathrm{key},1}^{\mathrm{det}} =\sum_{w}\underline{N}_{\mathrm{key},1}^{\mathrm{det},w}$.
\end{remark}

\begin{corollary}\label{coro:1}
Consider a prepare-and-measure decoy-state QKD protocol with an uncorrelated source, where Alice generates a purification $\left|\psi_{j_k}^{k}\right\rangle_{I_k M_k T_k}$ when choosing setting $j_k \in \mathcal{J}$ in round $k$. Suppose there exists an admissibility set $\mathcal{S}$ of state families indexed by $j \in \mathcal{J}$ such that if

\begin{equation}
\left\{\left|\psi_{j_k}^{k}\right\rangle_{I_k M_k T_k}\right\}_{j_k \in \mathcal{J}} \in \mathcal{S}, \quad \forall k ,
\end{equation}
then the phase-error rate bound in \cref{eq:phase_error_basic} holds.

Now consider the analogous protocol with a source exhibiting correlations in bit and basis settings up to length $l_c$. For any sequence of settings $j_{k-l_c}^{k+l_c}$ (interpreted with the boundary conventions $j_{k-l_c}^{k+l_c} \equiv j_{{\rm max}(k-l_c,1)}^{{\rm min}(k+l_c,N)}$), define the joint state emitted in rounds $k$ to $k+l_c$ as
\begin{equation}
|\Psi_{j_{k-l_c}^{k+l_c}}^{k}\rangle_{I_k^{k+l_c}M_k^{k+l_c}T_k^{k+l_c}}=\bigotimes_{l=k}^{k+l_c}|\psi_{j_{l-l_c}^l}^{l}\rangle_{I_l M_l T_l} .
\end{equation}

Suppose that, for every round $k$ and every fixed choice of past and future settings $(j_{k-l_c}^{k-1}, j_{k+1}^{k+l_c})$, there exists an isometry
\begin{equation}
V_{T_k \rightarrow I_{k+1}^{k+l_c}M_{k+1}^{k+l_c}T_{k}^{k+l_c}}^{\left(j_{k-l_c}^{k-1}, j_{k+1}^{k+l_c}\right)}: \mathcal{H}_{T_k} \rightarrow \mathcal{H}_{I_{k+1}^{k+l_c}M_{k+1}^{k+l_c}T_{k}^{k+l_c}},
\end{equation}
such that
\begin{equation}
\left\{\left(V_{T_k \rightarrow I_{k+1}^{k+l_c}M_{k+1}^{k+l_c}T_{k}^{k+l_c}}^{\left(j_{k-l_c}^{k-1}, j_{k+1}^{k+l_c}\right)}\right)^{\dagger}|\Psi_{j_{k-l_c}^{k+l_c}}^{k}\rangle_{I_{k}^{k+l_c}M_{k}^{k+l_c}T_k^{k+l_c}}\right\}_{j_k \in \mathcal{J}} \in \mathcal{S} .
\end{equation}

Then, partitioning the rounds $\{1, \ldots, N\}$ into ($l_c+1$) sets $P_w=\left\{k: k \equiv w \bmod \left(l_c+1\right)\right\}$ with $w=0, \ldots, l_c$, the phase-error rate in the correlated scenario satisfies \cref{eq:phase_error_final}.
\end{corollary}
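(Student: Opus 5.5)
The plan is to reduce the corollary to \cref{thm:2}. I will exhibit, for every $w$ and every $j_{P_{\bar{w}}}$, a global isometry $V^{j_{P_{\bar{w}}}}_{T_{P_w}\rightarrow I_{P_{\bar{w}}}M_{P_{\bar{w}}}T_1^N}$ satisfying \cref{eq:isomety_existance}, built from the local isometries assumed in the corollary. The admissibility set of \cref{thm:2} will be taken to be the product structure induced by $\mathcal{S}$: $S^{N,w}$ is the set of families $\{\bigotimes_{k\in P_w}|\tilde\psi^{k}_{j_k}\rangle\}_{j_{P_w}}$ with $\{|\tilde\psi^{k}_{j_k}\rangle\}_{j_k}\in\mathcal{S}$ for every $k\in P_w$. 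By the hypothesis of the corollary, the phase-error bound \cref{eq:phase_error_basic} holds for each subprotocol of size $N^w$ whenever its states belong to this set, so the conclusion \cref{eq:phase_error_final} follows once the isometry is built.

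The construction rests on the block structure of the global state. Fix $w$. Rounds in $P_w$ are spaced exactly $l_c+1$ apart, so the blocks $\mathcal{B}_k=\{k,k+1,\ldots,k+l_c\}$ for $k\in P_w$ are pairwise disjoint. Together with the initial segment of rounds preceding the first element of $P_w$, they cover $\{1,\ldots,N\}$, with truncation at $N$ handled by the boundary convention in the statement. Because correlations have range $l_c$, the state $|\psi^{l}_{j_{l-l_c}^l}\rangle$ in a round $l\in\mathcal{B}_k$ depends on settings from $P_w$ only through $j_k$. The only element of $P_w$ in the window $[l-l_c,l]$ is $k$: the previous element $k-l_c-1$ lies outside the window, and the next one exceeds $l$. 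Hence, for fixed $j_{P_{\bar{w}}}$, the state factorizes as
\begin{equation}
|\Psi'_{j_1^N}\rangle=|\Phi_{0}\rangle\otimes\bigotimes_{k\in P_w}|\Psi^{k}_{j_{k-l_c}^{k+l_c}}\rangle_{I_k^{k+l_c}M_k^{k+l_c}T_k^{k+l_c}}.
\end{equation}
Here $|\Phi_0\rangle$ is the state of the initial segment, which involves only rounds in $P_{\bar{w}}$ and is therefore independent of $j_{P_w}$. In each block factor, every setting other than $j_k$ belongs to $j_{P_{\bar{w}}}$ and is fixed.

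With this factorization, the global isometry is defined as
\begin{equation}
V^{j_{P_{\bar{w}}}}:=|\Phi_0\rangle\otimes\bigotimes_{k\in P_w}V^{(j_{k-l_c}^{k-1},j_{k+1}^{k+l_c})}_{T_k\rightarrow I_{k+1}^{k+l_c}M_{k+1}^{k+l_c}T_k^{k+l_c}}.
\end{equation}
A tensor product of isometries, together with the appending of a fixed normalized state, is again an isometry. The adjoint factorizes accordingly, and $\langle\Phi_0|\Phi_0\rangle=1$, so
\begin{equation}
(V^{j_{P_{\bar{w}}}})^\dagger|\Psi'_{j_1^N}\rangle=\bigotimes_{k\in P_w}(V^{(\cdot)})^\dagger|\Psi^k_{j_{k-l_c}^{k+l_c}}\rangle.
\end{equation}
Each factor of this product lies in $\mathcal{S}$ as a family indexed by $j_k$, by assumption. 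The registers $I_kM_k$ of round $k$ are untouched by the local isometry, so each factor has the required form on $I_kM_kT_k$. Hence condition \cref{eq:isomety_existance} holds, and \cref{thm:2} gives the result.

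The main obstacle I expect is the bookkeeping in the factorization step. One must check three things: that the windows really isolate a single $P_w$ setting, that the blocks near $k=N$ and the rounds before the first element of $P_w$ are treated consistently with the boundary conventions, and that the image registers of the local isometries, $I_{k+1}^{k+l_c}M_{k+1}^{k+l_c}T_k^{k+l_c}$, are disjoint across blocks. Only then do the local isometries tensor together into a map whose output registers match those required in \cref{thm:2}, up to the fixed factor $|\Phi_0\rangle$ on the initial segment.
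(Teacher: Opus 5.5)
Your proposal is correct and follows essentially the same argument as the paper. You define the same product admissibility set $\mathcal{S}^{N,w}$ and factor $|\Psi'_{j_1^N}\rangle$ into the initial segment before $\min P_w$ and the disjoint blocks $\{k,\ldots,\min(k+l_c,N)\}$ for $k\in P_w$. You then build the global isometry as the fixed initial-segment state tensored with the local isometries, whose adjoint yields a product of families in $\mathcal{S}$, so \cref{thm:2} applies directly.
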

\begin{proof} 
We need to verify that the conditions of \cref{thm:2} are satisfied. That is, we need to show that, for every $w\in\{0,1,...,l_c\}$, there exists an isometry such that
\begin{equation}
    \Big\{(V^{j_{P_{\bar{w}}}}_{T_{P_w} \rightarrow I_{P_{\bar{w}}} M_{P_{\bar{w}}} T_1^N})^{\dagger}  \left|\Psi_{j_1^N}^{\prime}\right\rangle_{I_1^N M_1^N T_1^N}\Big\}_{j_{P_w}\in \mathcal{J}^{N,w}}\in \mathcal{S}^{N,w}.
\end{equation}
For this, let us define,
\begin{equation}
\mathcal{S}^{N^{w}}=\left\{\left\{\bigotimes_{k \in P_w}\left|\varphi_{j_k}^{k}\right\rangle_{I_k M_k T_k}\right\}_{j_{P_w \in \mathcal{J}^{w}}}\Big|\left\{\left|\varphi_{j_k}^{k}\right\rangle_{I_k M_k T_k}\right\}_{j_k \in \mathcal{J}} \in \mathcal{S}, \forall k \in P_w\right\} .
\end{equation}

Now, let us rewrite the global correlated state. In particular, let $k_{\min }^{w}=\min P_w$ denote the smallest round index in $P_w$. Note that $k_{\min }^{w}=w$ for $w \in\left\{1, \ldots, l_c\right\}$ and $k_{\min }^{(0)}=l_c+1$. Since consecutive elements of $P_w$ differ by exactly $l_c+1$ positions, the blocks $\left\{k, k+1, \ldots, \min \left(k+l_c, N\right)\right\}$ for $k \in P_w$ partition the rounds $\{k_{\min }^{w}, \ldots, N\}$. We can therefore write the global emitted state as

\begin{equation}
\begin{aligned}
&\left|\Psi_{j_1^N}^{\prime}\right\rangle_{I_1^N M_1^N T_1^N}=\bigotimes_{k=1}^N\left|\psi_{j_{\max \left(1, k-l_c\right)}^{k}}^{k}\right\rangle_{I_k M_k T_k}\\
&=\left(\bigotimes_{k=1}^{k_{\min }^{w}-1}\left|\psi_{j_{\max \left(1, k-l_c\right)}^k}^{k}\right\rangle_{I_k M_k T_k}\right) \otimes\left(\bigotimes_{k \in P_w} \bigotimes_{m=k}^{\min \left(k+l_c, N\right)}\left|\psi_{j_{\max \left(1, m-l_c\right)}^m}^{m}\right\rangle_{I_m M_m T_m}\right) \\
&=\left(\bigotimes_{k=1}^{k_{\min }^{w}-1}\left|\psi_{j_{\max \left(1, k-l_c\right)}^k}^{k}\right\rangle_{I_k M_k T_k}\right) \otimes\left(\bigotimes_{k \in P_w}\left|\Psi_{j_{\max \left(1, k-l_c\right)}^{\min \left(k+l_c, N\right)}}^{k}\right\rangle_{I_k^{\min \left(k+l_c, N\right)} M_k^{\min \left(k+l_c, N\right)} T_k^{\min \left(k+l_c, N\right)}}\right) .
\end{aligned}
\end{equation}

Let us now define the global isometry for group  $P_w$ conditional on some outcomes of the complementary rounds $j_{P_{\bar{w}}}$,
\begin{equation}
V_{T_{P_w} \rightarrow I_{P_{\bar{w}}}M_{P_{\bar{w}}}T_1^N}^{j_{P_{\bar{w}}}}=\left(\bigotimes_{k=1}^{k_{\min }^{w}-1}\left|\psi_{j_{\max \left(1, k-l_c\right)}^k}^{k}\right\rangle_{I_k M_k T_k}\right) \otimes\left(\bigotimes_{k \in P_w} V_{T_{k} \rightarrow I_{k+1}^{\min \left(k+l_c, N\right)} M_{k+1}^{\min \left(k+l_c, N\right)} T_k^{\min \left(k+l_c, N\right)}}^{\left(j_{\max (1, k-l_c) }^{k-1}, j_{k+1}^{\min \left(k+l_c, N\right)}\right)}\right) .
\end{equation}

Note that for each $k\in P_w$, the settings $\left(j_{\max (1, k-l_c) }^{k-1}, j_{k+1}^{\min \left(k+l_c, N\right)}\right)$ lie in ${P_{\bar{w}}}$. Furthermore, for each $k\in\{1,...,k_{\min }^{w}-1\}$ all indices $\{{\max \left(1, k-l_c\right),...,k\}}$ are strictly less than $k_{\min }^{w}$ and hence belong to the complementary set $P_{\bar{w}}$, while the rounds $\{k+1,...,k+l_c\}$, also belong to the set $P_{\bar{w}}$. Now, if we apply the  adjoint of the isometry to the global emitted state, we obtain
\begin{equation}
\begin{aligned}
&\left(V_{T_{P_w} \rightarrow I_{P_{\bar{w}}}M_{P_{\bar{w}}}T_1^N}^{j_{P_{\bar{w}}}}\right)^{\dagger}\left|\Psi_{j_1^N}^{\prime}\right\rangle_{I_1^N M_1^N T_1^N}=\\
&\bigotimes_{k \in P_w}\left( V_{T_{k} \rightarrow I_{k+1}^{\min \left(k+l_c, N\right)} M_{k+1}^{\min \left(k+l_c, N\right)} T_k^{\min \left(k+l_c, N\right)}}^{\left(j_{\max (1, k-l_c) }^{k-1}, j_{k+1}^{\min \left(k+l_c, N\right)}\right)}\right)^{\dagger}\left|\Psi_{j_{\max \left(1, k-l_c\right)}^{\min \left(k+l_c, N\right)}}^{k}\right\rangle_{I_k^{\min \left(k+l_c, N\right)} M_k^{\min \left(k+l_c, N\right)} T_k^{\min \left(k+l_c, N\right)}}.
\end{aligned}
\end{equation}
Finally, by assumption, for each $k\in P_w$, we have that
\begin{equation}
\left\{\left( V_{T_{k} \rightarrow I_{k+1}^{\min \left(k+l_c, N\right)} M_{k+1}^{\min \left(k+l_c, N\right)} T_k^{\min \left(k+l_c, N\right)}}^{\left(j_{\max (1, k-l_c) }^{k-1}, j_{k+1}^{\min \left(k+l_c, N\right)}\right)}\right)^{\dagger}\left|\Psi_{j_{\max \left(1, k-l_c\right)}^{\min \left(k+l_c, N\right)}}^{k}\right\rangle_{I_k^{\min \left(k+l_c, N\right)} M_k^{\min \left(k+l_c, N\right)} T_k^{\min \left(k+l_c, N\right)}}\right\}_{j_k\in\mathcal{J}}\in \mathcal{S},
\end{equation}
and so it follows that $\left\{\left(V_{T_{P_w} \rightarrow I_{P_{\bar{w}}}M_{P_{\bar{w}}}T_1^N}^{j_{P_{\bar{w}}}}\right)^{\dagger}\left|\Psi_{j_1^N}^{\prime}\right\rangle_{I_1^N M_1^N T_1^N}\right\}_{j_{P_w}\in\mathcal{J}^{N,w}}\in \mathcal{S}^{N,w}$, which concludes the proof.
\end{proof}

\begin{corollary}\label{coro:2}
Consider a decoy-state prepare-and-measure QKD protocol with an uncorrelated source whose state for round $k$ is defined as
\begin{equation}
\ket{\Psi_{j_k}^{k}}_{I_k M_k T_k}=\sum_{\mu_k}\sqrt{p_{\mu_k}}\ket{{\mu}_k}_{I_k}\sum_{m_k=0}^{\infty} \sqrt{p_{m_k|\mu_k}}\ket{m_k}_{M_k}\ket{\psi^{m_k,k}_{j_k}}_{T_k}.
\end{equation}
Suppose there exist a set of reference states $\{|\phi^{1}_j\rangle_T\}_{j\in\mathcal{J}}$ such that the phase-error rate bound in \cref{eq:phase_error_basic} holds as long as the single-photon component of the phase-randomized weak coherent pulse satisfies 
\begin{equation}
\Big|\langle\phi^{m_k=1}_{j_k} \mid \psi_{j_k}^{m_k=1,k}\rangle_{T_k}\Big|^2 \geq 1-\epsilon_{j_k}, \quad \forall k, \forall j_k \in \mathcal{J}.
\end{equation}

Let us now consider an analogous protocol with a source exhibiting correlations in the bit and basis settings up to length $l_c$. Suppose that for every round $k$ and every choice of past and future settings $(j_{k-l_c}^{{k-1}},j_{k+1}^{{k+l_c}})$, there exist a family of states $\{|{\phi}^{{m_k=1, }k}_{j^{k+l_c}_{k-l_c}}\rangle_{T_k}\}_{j_k\in\mathcal{J}}$ with the same Gram matrix as $\{|{\phi}^{{m_k=1}}_{j_k}\rangle_{ T_k}\}_{j_k\in\mathcal{J}}$. Additionally, assume that there exist some states, independent of $j_k$, $|\lambda_{j_{k-l_c+1}^{k-1}, j_{k+1}^{k+l_c}}^{k}\rangle_{I_{k+1}^{k+l_c}M_{k+1}^{k+l_c}T_{k+1}^{k+l_c}}$ such that,

\begin{equation}\label{eq:inner}
\mid\langle{\phi}_{j_{k-l_c}^{k+l_c}}^{{m_k=1},k}|_{T_k} \otimes\langle\lambda_{j_{k-l_c+1}^{k-1}, j_{k+1}^{k+l_c}}^{k}|_{I_{k+1}^{k+l_c}M_{k+1}^{k+l_c}T_{k+1}^{k+l_c}} \mid \Psi_{j_{k-l_c}^{k+l_c}}^{m_{k}=1,k}\rangle_{I_{k+1}^{k+l_c}M_{k+1}^{k+l_c}T_k^{k+l_c}}|^2 \geq 1-\epsilon_{j_k}, \quad \forall j_k.
\end{equation}

Then, the phase-error rate bound in \cref{eq:phase_error_final} holds for the correlated scenario.
\end{corollary}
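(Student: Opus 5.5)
The plan is to reduce Corollary~\ref{coro:2} to Corollary~\ref{coro:1}, by exhibiting for each round $k$ and each choice of past and future settings $(j_{k-l_c}^{k-1},j_{k+1}^{k+l_c})$ an isometry $V^{(j_{k-l_c}^{k-1},j_{k+1}^{k+l_c})}_{T_k\rightarrow I_{k+1}^{k+l_c}M_{k+1}^{k+l_c}T_k^{k+l_c}}$. The admissibility set $\mathcal{S}$ will be the family of single-round purifications whose single-photon component is $\epsilon_{j_k}$-close to the reference states $\{|\phi^{1}_{j}\rangle\}$. We must then check that the pulled-back family $\{V^\dagger|\Psi^k_{j_{k-l_c}^{k+l_c}}\rangle\}_{j_k}$ belongs to $\mathcal{S}$. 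Once this is done, Corollary~\ref{coro:1}, and hence Theorem~\ref{thm:2}, gives \cref{eq:phase_error_final} directly.

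\textbf{Construction of the isometry.} Since $\{|\phi^{1,k}_{j_{k-l_c}^{k+l_c}}\rangle_{T_k}\}_{j_k}$ and $\{|\phi^{1}_{j_k}\rangle_{T_k}\}_{j_k}$ have the same Gram matrix, there is a partial isometry $U$ on $T_k$ mapping $|\phi^{1}_{j_k}\rangle\mapsto|\phi^{1,k}_{j_{k-l_c}^{k+l_c}}\rangle$ for all $j_k$. We extend $U$ to a unitary on the single-photon sector of $T_k$ (enlarging $T_k$ if needed), acting trivially on the other photon-number sectors. We then set
\[
V\ket{\chi}_{T_k}=(U\ket{\chi})_{T_k}\otimes\ket{\lambda^k_{j_{k-l_c+1}^{k-1},j_{k+1}^{k+l_c}}}_{I_{k+1}^{k+l_c}M_{k+1}^{k+l_c}T_{k+1}^{k+l_c}} .
\]
This is an isometry because $|\lambda\rangle$ is normalized. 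Crucially, $V$ does not depend on $j_k$, since $|\lambda\rangle$ is independent of $j_k$ by assumption, and it depends only on settings lying in $P_{\bar w}$. This is exactly what Corollary~\ref{coro:1} requires.

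\textbf{Verifying admissibility.} We compute the single-photon component of $V^\dagger|\Psi^k\rangle$. Projecting onto $|m_k=1\rangle_{M_k}$ and using that $V^\dagger$ acts on registers other than $M_k,I_k$, its overlap with $|\phi^1_{j_k}\rangle_{T_k}$ equals
\[
\bra{\phi^1_{j_k}}V^\dagger\ket{\Psi^{m_k=1,k}}=\big(\bra{\phi^{1,k}_{j_{k-l_c}^{k+l_c}}}\otimes\bra{\lambda^k}\big)\ket{\Psi^{m_k=1,k}_{j_{k-l_c}^{k+l_c}}} .
\]
By \cref{eq:inner}, the squared modulus of this overlap is at least $1-\epsilon_{j_k}$. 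The resulting single-round state keeps the decoy structure: the intensity and photon-number amplitudes $\sqrt{p_{\mu_k}p_{m_k|\mu_k}}$ are untouched, because $V$ acts only on $T_k$ and on registers of later rounds. The pulled-back state may be subnormalized in the $m_k=1$ sector. To handle this, we note that $VV^\dagger$ projects, and closeness to the reference state is preserved when the single-photon component is read as the normalized post-projection vector. If the original admissibility condition requires normalized states, we will instead argue via the fidelity inequality on the projected vector, whose norm is at least the overlap above.

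\textbf{Main obstacle.} The main difficulty is precisely this normalization and sector issue. $V^\dagger$ applied to the full purification mixes in components orthogonal to $|\lambda\rangle$, so the pulled-back single-round state need not be normalized, nor exactly of the product-decoy form. I would try to absorb the discarded components by redefining $V$ on the orthogonal complement, using the freedom to choose an isometry onto the range containing $\ket{\Psi^k}$. Failing that, I would reinterpret $\mathcal{S}$ as a condition only on the $m_k=1$ overlap, which is all that the phase-error bound of \cite{Guille_framework} actually uses.
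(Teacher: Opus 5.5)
Your proposal follows the paper's proof: the paper uses the same isometry $V=V^{(2)}\circ V^{(1)}$ (a Gram-matrix-preserving map on $T_k$ followed by appending $|\lambda\rangle$), defines $\mathcal{S}$ only through the $m_k=1$ overlap condition, and checks membership via the identity $\langle\phi^{1}_{j_k}|V^{\dagger}|\Psi^{m_k=1,k}\rangle=(\langle\phi^{1,k}|\otimes\langle\lambda|)|\Psi^{m_k=1,k}\rangle$ together with \cref{eq:inner}. The paper never addresses the normalization issue you raise. Renormalizing the projected vector would not fix it, because the reduction in \cref{thm:2} needs $VV^{\dagger}$ to act as the identity on the emitted state. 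Your other idea does fix it: since $T_k$ may be taken infinite-dimensional and the two reference families share a Gram matrix, $V$ can be chosen as a unitary onto the whole target space, so $V^{\dagger}$ preserves norms and the same overlap computation goes through.
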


\begin{proof}
We want to check that the conditions in \cref{coro:1} hold. Precisely, we need to prove that for any fixed $(j^{k-1}_{k-l_c},j^{k+l_c}_{k+1})$ there exists an isometry such that \footnote{Here, for simplicity of notation, we remove the boundary conditions},
\begin{equation} \label{eq:S_condition}
\left\{\left(V_{T_k \rightarrow I_{k+1}^{k+l_c}M_{k+1}^{k+l_c}T_{k}^{k+l_c}}^{\left(j_{k-l_c}^{k-1}, j_{k+1}^{k+l_c}\right)}\right)^{\dagger}|\Psi_{j_{k-l_c}^{k+l_c}}^{k}\rangle_{I_{k}^{k+l_c}M_{k}^{k+l_c}T_k^{k+l_c}}\right\}_{j_k \in \mathcal{J}} \in \mathcal{S} .
\end{equation}
The proof for the uncorrelated scenario, defines the per-round admissibility set. Let $\ket{\tilde{\varphi}_{j}}_{I M T}=\sum_{\mu}\sqrt{p_{\mu}}\ket{{\mu}}_{I}\sum_{m=0}^{\infty} \sqrt{p_{m|\mu}}\ket{m}_{M}\ket{\varphi^{m}_{j}}_{T}$, then we have that such set is defined as
\begin{equation} \label{eq:S}
\mathcal{S}=\left\{\left\{\left|\tilde{\varphi}_{j}\right\rangle_{I M T}\right\}_{j\in\mathcal{J}}\Big||\langle\phi^{1}_j|{\varphi}^{1}_j\rangle_T|^2\geq1-\epsilon_j, \quad \forall j\in\mathcal{J}\right\}.
\end{equation}

Now, by assumption, the states $\{|{\phi}^{{m_k=1, }k}_{j^{k+l_c}_{k-l_c}}\rangle_{T_k}\}_{j_k\in\mathcal{J}}$ have the same Gram matrix as $\{|{\phi}^{{m_k=1}}_{j_k}\rangle_{T_k}\}_{j_k\in\mathcal{J}}$. This means that there exists an isometry $V^{(1)}_{T_k\rightarrow T_k}$ such that $V^{(1)}_{T_k\rightarrow T_k}|{\phi}^{{m_k=1}}_{j_k}\rangle_{T_k} = |{\phi}^{{m_k=1, }k}_{j^{k+l_c}_{k-l_c}}\rangle_{ T_k} $. 
Moreover, for any normalized state $|\lambda_{j_{k-l_c+1}^{k-1}, j_{k+1}^{k+l_c}}^{k} \rangle_{I_{k+1}^{k+l_c}M_{k+1}^{k+l_c}T_{k+1}^{k+l_c}}$, there exists an isometry $V^{(2)}_{T_k\rightarrow I_{k+1}^{k+l_c} M_{k+1}^{k+l_c} T_{k}^{k+l_c}}$ such that $V^{(2)}_{T_k\rightarrow I_{k+1}^{k+l_c} M_{k+1}^{k+l_c} T_{k}^{k+l_c}}\ket{\cdot}_{T_k}= \ket{\cdot}_{T_k}|\lambda_{j_{k-l_c+1}^{k-1}, j_{k+1}^{k+l_c}}^{k} \rangle_{I_{k+1}^{k+l_c}M_{k+1}^{k+l_c}T_{k+1}^{k+l_c}}$.  Then, it follows that
\begin{equation}
\begin{aligned}
&V^{(2)}_{T_k\rightarrow I_{k+1}^{k+l_c} M_{k+1}^{k+l_c} T_{k}^{k+l_c}}  \circ V^{(1)}_{T_k\rightarrow T_k}|{\phi}_{j_k}^{{m_k=1}}\rangle_{T_k} = |{\phi}_{j^{k+l_c}_{k-l_c}}^{{m_k=1},k}\rangle_{T_k} |\lambda_{j_{k-l_c+1}^{k-1}, j_{k+1}^{k+l_c}}^{k} \rangle_{I_{k+1}^{k+l_c}M_{k+1}^{k+l_c}T_{k+1}^{k+l_c}}.\\
\end{aligned}
\end{equation}
Note that $V^{(2)}_{T_k\rightarrow I_{k+1}^{k+l_c} M_{k+1}^{k+l_c} T_{k}^{k+l_c}}  \circ V^{(1)}_{T_k\rightarrow T_k}$ depends on $(j_{k-l_c}^{k-1}, j_{k+1}^{k+l_c})$, and so the composition of both isometries has the form required by \cref{eq:S_condition}, and thus can be expressed as $V^{(j_{k-l_c}^{k-1}, j_{k+1}^{k+l_c})}_{T_k\rightarrow I_{k+1}^{k+l_c} M_{k+1}^{k+l_c} T_{k}^{k+l_c}} = V^{(2)}_{T_k\rightarrow I_{k+1}^{k+l_c} M_{k+1}^{k+l_c} T_{k}^{k+l_c}}  \circ V^{(1)}_{T_k\rightarrow T_k}$.

Now, if we apply the hermitian transpose of this isometry to the state $|\Psi_{j_{k-l_c}^{k+l_c}}^{k}\rangle_{I_{k}^{k+l_c}M_{k}^{k+l_c}T_k^{k+l_c}}$, it follows that,
\begin{equation}
\begin{aligned}\label{eq:new_state}
    &\left( V^{(j_{k-l_c}^{k-1}, j_{k+1}^{k+l_c})}_{T_k\rightarrow I_{k+1}^{k+l_c} M_{k+1}^{k+l_c} T_{k}^{k+l_c}}\right)^{\dagger} |\Psi_{j_{k-l_c}^{k+l_c}}^{k}\rangle_{I_{k}^{k+l_c}M_{k}^{k+l_c}T_k^{k+l_c}} \\ &=\sum_{\mu_k}\sqrt{p_{\mu_k}}\ket{{\mu_k}}_{I_k}\sum_{m_k=0}^{\infty} \sqrt{p_{m_k|\mu_k}}\ket{m_k}_{M_k} \left( V^{(j_{k-l_c}^{k-1}, j_{k+1}^{k+l_c})}_{T_k\rightarrow I_{k+1}^{k+l_c} M_{k+1}^{k+l_c} T_{k}^{k+l_c}}\right)^{\dagger} |\Psi_{j_{k-l_c}^{k+l_c}}^{m_k,k}\rangle_{I_{k+1}^{k+l_c}M_{k+1}^{k+l_c}T_k^{k+l_c}},\\
\end{aligned}
\end{equation}
where we have defined $|\Psi_{j_{k-l_c}^{k+l_c}}^{m_k,k}\rangle_{I_{k+1}^{k+l_c}M_{k+1}^{k+l_c}T_k^{k+l_c}} =\ket{\psi^{m_k,k}_{j^k_{k-l_c}}}_{T_k} \bigotimes_{d=k+1}^{k+l_c}\left|\psi_{j_{ d-l_c}^d}^{d}\right\rangle_{I_d M_d T_d}$.
Our goal is to determine whether the state in \cref{eq:new_state} belongs to $\mathcal{S}$. To this end, note that for a state to belong to 
$\mathcal{S}$, it must satisfy the condition in \cref{eq:S}, which for that state reads,
\begin{equation}
\begin{aligned}
   &\Big| \langle{\phi}_{j_k}^{m_k=1}|_{T_k}(V^{(j_{k-l_c}^{k-1}, j_{k+1}^{k+l_c})}_{T_k\rightarrow I_{k+1}^{k+l_c} M_{k+1}^{k+l_c} T_{k}^{k+l_c}})^{\dagger} |\Psi_{j_{k-l_c}^{k+l_c}}^{m_k=1,k}\rangle_{I_{k+1}^{k+l_c}M_{k+1}^{k+l_c}T_k^{k+l_c}} \Big|^2=\\
   & \Big|\langle\phi_{j^{k+l_c}_{k-l_c}}^{m_k=1,k}|_{T_k} \langle\lambda_{j_{k-l_c+1}^{k-1}, j_{k+1}^{k+l_c}}^{k} |_{I_{k+1}^{k+l_c}M_{k+1}^{k+l_c}T_{k+1}^{k+l_c}}|\Psi_{j_{k-l_c}^{k+l_c}}^{m_k=1,k}\rangle_{I_{k+1}^{k+l_c}M_{k+1}^{k+l_c}T_k^{k+l_c}} \Big|^2\geq 1-\epsilon_{j_k},
\end{aligned}
\end{equation}
where in the last inequality we have used the assumption highlighted in \cref{eq:inner}. This means that the family of states $\left\{\left(V_{T_k \rightarrow I_{k+1}^{k+l_c}M_{k+1}^{k+l_c}T_{k}^{k+l_c}}^{\left(j_{k-l_c}^{k-1}, j_{k+1}^{k+l_c}\right)}\right)^{\dagger}|\Psi_{j_{k-l_c}^{k+l_c}}^{k}\rangle_{I_{k}^{k+l_c}M_{k}^{k+l_c}T_k^{k+l_c}}\right\}_{j_k \in \mathcal{J}} \in \mathcal{S},$ as we wanted to prove.
\end{proof}

\section{Concentration inequalities}\label{sec:concentration}

Here, we present the concentration inequalities used in \cref{sec:decoy}. For a more detailed discussion and the corresponding proofs of \cref{th1,th2} below, the reader is referred to \cite{Mannalath_decoy}.

\begin{theorem} \label{th1} {\bf\cite{Mannalath_decoy}}
Let $q$ be the average of $n$ independent Bernoulli variables, with expected value $\mathbb{E}[q]$. Then, for all $\epsilon \in(0,1 / 4]$, we have that $\operatorname{Pr}\left[\mathbb{E}[q]<\mathcal{F}_{n, \epsilon}^{\mathrm{L}}(q)\right] \leq \epsilon$ for

\begin{equation}
\begin{aligned}
\label{eq:F_def}
\mathcal{F}_{n, \epsilon}^{\mathrm{L}}(q)= 
\begin{cases}0 & \text { if } n q=0, \\ q-\frac{1-\epsilon}{n\left[1-\epsilon^*(n q, n)\right]} & \text { if } n q>0, \epsilon^*(n q, n) \leq \epsilon \leq 1, \\ I_\epsilon^{-1}(n q, n-n q+1) & \text { if } n q>0,0 \leq \epsilon \leq \epsilon^*(n q, n),\end{cases}
\end{aligned}
\end{equation}
where $\epsilon^*(n q, n)=I_{(n q-1) / n}(n q, n-n q+1)$ and $I_\epsilon^{-1}(a, b)$ denotes the inverse regularized incomplete beta function, such that $I_{I_\epsilon^{-1}(a, b)}(a, b)=\epsilon$ for
\begin{equation}
I_x(a, b)=\frac{\int_0^x t^{a-1}(1-t)^{b-1} d t}{\int_0^1 t^{a-1}(1-t)^{b-1} d t}.
\end{equation}

Complementarily, for all $\epsilon \in(0,1 / 4]$, we have that $\operatorname{Pr}\left[\mathbb{E}[q]>\mathcal{F}_{n, \epsilon}^{\mathrm{U}}(q)\right] \leq \epsilon$ for $\mathcal{F}_{n, \epsilon}^{\mathrm{U}}(q)=1-\mathcal{F}_{n, \epsilon}^{\mathrm{L}}(1-q)$.
\end{theorem}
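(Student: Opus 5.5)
This bound follows from the exact (Clopper--Pearson) tail of the binomial distribution. Write $X=nq\sim\mathrm{Bin}(n,p)$ with $p=\mathbb{E}[q]$; the $n$ variables must share the same mean for this to hold. The basic identity I will use is that, for integer $k\ge1$,
\begin{equation}
\Pr[X\ge k]=I_p(k,n-k+1),
\end{equation}
which comes from the order-statistic representation of the binomial tail. The right-hand side is continuous and strictly increasing in $p$.

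First I handle the case $nq=0$. There $\mathcal{F}^{\rm L}=0\le p$, so the bad event is empty and the claim is trivial. Next I define, for each possible observed value $k$, the threshold $f(k)=\mathcal{F}^{\rm L}_{n,\epsilon}(k/n)$. The bad event $\{p<f(X)\}$ then equals $\{X\ge k_0\}$, where $k_0$ is the smallest $k$ with $f(k)>p$. For this I need $f$ to be nondecreasing in $k$: the incomplete-beta inverse increases with $a$, and the linear branch has slope $1/n$ in $q$. Monotonicity across the switching point $\epsilon^*$ is something I must check.

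\textbf{Branch $\epsilon\le\epsilon^*(k,n)$.} Here $f(k)=I^{-1}_\epsilon(k,n-k+1)$. The event $p<f(k_0)$ together with monotonicity of $I_\cdot(k_0,n-k_0+1)$ gives
\begin{equation}
\Pr[X\ge k_0]=I_p(k_0,n-k_0+1)<I_{f(k_0)}(k_0,n-k_0+1)=\epsilon .
\end{equation}
This is the standard Clopper--Pearson guarantee.

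\textbf{Branch $\epsilon>\epsilon^*$.} Here the inverse would exceed $(k-1)/n$, and the paper replaces it by a shifted estimate, $q$ minus a correction. I would show that the threshold $p<f(k_0)$ still forces $I_p(k_0,n-k_0+1)\le\epsilon$. To do this I bound the tail via the convexity/log-concavity of $x\mapsto I_x(k,n-k+1)$ on $[0,(k-1)/n]$, or via the ratio bound $\Pr[X\ge k]\le \Pr[X=k]/(1-p/((k+1)/(n-k)))$. The term $(1-\epsilon)/(n[1-\epsilon^*])$ looks designed to come from a tangent-line (secant) argument for the beta CDF between the point $((k-1)/n,\epsilon^*)$ and $(\cdot,1)$.

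The upper bound $\mathcal{F}^{\rm U}$ then follows by symmetry, applying the result to $1-q$ (the complementary Bernoulli variables).

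\textbf{Main obstacle.} The hardest part is the second branch: verifying that this explicit linear correction is a valid lower confidence bound, and that the piecewise definition keeps $f$ monotone. I would try the convexity of the beta CDF below its mode $(k-1)/(n-1)$ first. Since this theorem is imported from \cite{Mannalath_decoy}, in the paper itself I would ultimately defer these details to that reference.
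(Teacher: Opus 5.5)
The paper gives no proof of this statement; it imports it from \cite{Mannalath_decoy}, so your final deferral matches it. Your sketch, however, breaks at the first step. You assert that the $n$ Bernoulli variables must share the same mean, and you then use the identity $\Pr[X\ge k]=I_p(k,n-k+1)$ for $X=nq\sim\mathrm{Bin}(n,p)$.

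The statement is for independent but \emph{not} identically distributed variables; compare \cref{th2}, which is explicitly i.i.d. That generality is exactly what \cref{sec:decoy} uses. In the counterfactual picture, a detected round with photon number $m$ is assigned intensity $\mu_p$ with probability $p_{\mu_p|m}$. Hence $N^{{\rm det},Z}_{Z,\mu_p}$ is a Poisson-binomial sum whose success probabilities vary from round to round, and for such $X$ the identity is false.

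The missing ingredient is Hoeffding's 1956 comparison theorem for independent, non-identical trials: if $np\le k-1$, then $\Pr[X\ge k]\le\Pr[\mathrm{Bin}(n,p)\ge k]=I_p(k,n-k+1)$. This can fail when $np>k-1$. That validity condition is what the branch point encodes. Indeed, $\epsilon\le\epsilon^*(k,n)$ is equivalent to $I^{-1}_\epsilon(k,n-k+1)\le (k-1)/n$. So in the third branch, $p<\mathcal{F}^{\rm L}_{n,\epsilon}(k/n)$ forces $np<k-1$, and your Clopper--Pearson step then goes through with an inequality instead of an identity. In the i.i.d.\ case the Clopper--Pearson bound is valid for every $\epsilon$, and no branch switch would be needed at all. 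Reading the second branch as a shifted estimate to be justified by convexity of the beta CDF therefore misses its role.

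You also leave the branch $\epsilon>\epsilon^*$ open and never use the hypothesis $\epsilon\le 1/4$, which is what makes that branch trivial. For $2\le k\le n$, $\epsilon^*(k,n)=\Pr[\mathrm{Bin}(n,(k-1)/n)\ge k]$ is the probability that a binomial strictly exceeds its integer mean $k-1\ge 1$. This is known to be at least $1/4$, with equality at $n=k=2$. Hence for $\epsilon\le 1/4$ the second branch is active only at $nq=1$, where $\epsilon^*(1,n)=0$ and $\mathcal{F}^{\rm L}_{n,\epsilon}=\epsilon/n$. When $p<\epsilon/n$, the bad event is $\{X\ge 1\}$ with $\mathbb{E}[X]=np<\epsilon$, and Markov's inequality gives $\Pr[X\ge1]<\epsilon$ with no distributional assumption. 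When $p\ge\epsilon/n$, the bad event is $\{X\ge k_0\}$ with $k_0\ge 2$, which the Hoeffding argument above handles.

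This also settles the monotonicity of $k\mapsto\mathcal{F}^{\rm L}_{n,\epsilon}(k/n)$ that your reduction to a single tail event requires. Since $I_{\epsilon/n}(2,n-1)\le\binom{n}{2}\epsilon^2/n^2<\epsilon$, we have $I^{-1}_\epsilon(2,n-1)>\epsilon/n$. Finally, $\mathcal{F}^{\rm U}$ follows by applying the lower bound to the variables $1-Q_i$. These are again independent but not identically distributed, so the same Hoeffding comparison is needed there as well.
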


\begin{theorem}\label{th2} {\bf \cite{Mannalath_decoy}}
Let $q$ be the average of $n$ independent and identically distributed Bernoulli variables, with expected value $\mathbb{E}[q]$. Then for all $\epsilon>0$ we have that
\begin{equation}
\operatorname{Pr}\left[q \geq \mathcal{G}_{n, \epsilon}^{\mathrm{U}}(\mathbb{E}[q])\right] \leq \epsilon,
\end{equation}
for
\begin{equation}
\label{eq:G0}
\mathcal{G}_{n, \epsilon}^{\mathrm{U}}(\mathbb{E}[q])=\frac{1}{n}\min \left\{k\in \{-1,...,n+1\} :  I_{\mathbb{E}[q]}(k, n-k+1) \leq \epsilon\right\},
\end{equation}
where $I_{\mathbb{E}[q]}(k,n-k+1)=\sum_{j=k}^{n}{n\choose j}{\mathbb{E}[q]}^j(1-\mathbb{E}[q])^{n-j}$.

Similarly we have that,
\begin{equation}
\operatorname{Pr}\left[q \leq \mathcal{G}_{n, \epsilon}^{\mathrm{L}}(\mathbb{E}[q])\right] \leq \epsilon,
\end{equation}
where $\mathcal{G}_{n, \epsilon}^{\mathrm{L}}(x)$ is defined as,
\begin{equation}
\label{eq:G1}
\mathcal{G}_{n, \epsilon}^{\mathrm{L}}(x)=\frac{1}{n}\max \left\{k\in\{-1,...,n+1\}: I_{1-\mathbb{E}[q]}(n-k, k+1) \leq \epsilon\right\}.
\end{equation}
\end{theorem}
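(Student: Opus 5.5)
Throughout, write $p:=\mathbb{E}[q]$ and $X:=nq$. Since $q$ is the average of $n$ i.i.d.\ Bernoulli variables, $X\sim\mathrm{Bin}(n,p)$. The proof reduces to one standard identity and a direct evaluation at the extremal index. The identity is
\begin{equation*}
I_p(k,n-k+1)=\sum_{j=k}^{n}\binom{n}{j}p^j(1-p)^{n-j}=\Pr[X\geq k],
\end{equation*}
which is exactly the expression stated in the theorem. For $1\le k\le n$ I would verify that the incomplete-beta form agrees with the binomial sum. One way is to differentiate both sides in $p$: the sum telescopes to $n\binom{n-1}{k-1}p^{k-1}(1-p)^{n-k}$, which is the beta density with parameters $(k,n-k+1)$ times its normalization. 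Both sides also vanish at $p=0$. At the boundary indices $k\in\{-1,0,n+1\}$, where the beta parameters are degenerate, I would simply take the binomial-sum form as the definition: an empty sum gives $0$ for $k=n+1$, and the full sum gives $1$ for $k\le 0$.

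\emph{Upper tail.} Let $k^*:=\min\{k\in\{-1,\dots,n+1\}: I_p(k,n-k+1)\le\epsilon\}$. This set is nonempty because $k=n+1$ gives tail $0\le\epsilon$, so $k^*$ is well defined and $\mathcal{G}^{\mathrm U}_{n,\epsilon}(p)=k^*/n$. Then
\begin{equation*}
\Pr\!\left[q\ge \mathcal{G}^{\mathrm U}_{n,\epsilon}(p)\right]=\Pr[X\ge k^*]=I_p(k^*,n-k^*+1)\le\epsilon
\end{equation*}
directly from the defining property of $k^*$. No monotonicity argument is needed, because we evaluate the tail at the attained index itself. Minimality matters only for tightness, not for validity.

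\emph{Lower tail.} Apply the same reasoning to $n-X\sim\mathrm{Bin}(n,1-p)$:
\begin{equation*}
\Pr[X\le k]=\Pr[n-X\ge n-k]=I_{1-p}(n-k,k+1).
\end{equation*}
Let $k_*$ be the maximal $k$ with $I_{1-p}(n-k,k+1)\le\epsilon$. This set is nonempty since $k=-1$ gives probability $0$. Then $\Pr[q\le \mathcal{G}^{\mathrm L}_{n,\epsilon}(p)]=\Pr[X\le k_*]\le\epsilon$.

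The only step needing care is the boundary bookkeeping. The index range $\{-1,\dots,n+1\}$ includes values for which the beta function has nonpositive parameters. I would handle this by consistently using the binomial-sum interpretation given in the statement, and checking that it guarantees nonemptiness of both extremal sets for every $\epsilon>0$.
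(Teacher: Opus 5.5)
Your proposal is correct and takes essentially the paper's route. Identifying $I_p(k,n-k+1)$ and $I_{1-p}(n-k,k+1)$ with $\Pr[X\ge k]$ and $\Pr[X\le k]$ for $X=nq\sim\mathrm{Bin}(n,p)$, and evaluating these at the extremal index, is exactly the quantile interpretation the paper points to: it gives no proof of its own, defers to \cite{Mannalath_decoy}, and remarks that $\mathcal{G}^{\mathrm U}_{n,\epsilon}$ and $\mathcal{G}^{\mathrm L}_{n,\epsilon}$ are shifted, rescaled binomial quantiles, while your binomial-sum treatment of the degenerate indices (which makes both extremal sets nonempty for every $\epsilon>0$) is also sound.
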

Note that the quantities $G^{\rm U}_{n,\epsilon}$ and $G^{\rm L}_{n,\epsilon}$ respectively correspond (up to a one-unit shift and a rescaling by $n$) to the $(1-\epsilon)$-quantile and the $\epsilon$-quantile of a binomial distribution of parameters $n$ and $\mathbb{E}[q]$.

\begin{theorem}\label{th:hoef}
{\bf (Hoeffding’s inequality \cite{Hoeff})} Let $Q_1, Q_2, \ldots, Q_n$ represent a sequence of independent Bernoulli random variables and let $Q:=\sum_{i=1}^n Q_i$ with expectation value $\mathbb{E}[Q]$. Also, consider the standard notation $L \underset{\epsilon}{\leq}R$ indicating that the event $L \leq R$ occurs except with probability at most $\epsilon$, i.e., $\operatorname{Pr}[L>R] \leq \epsilon$. Then, we have that
\begin{equation}
    Q-\Delta_H(n) \underset{\epsilon_H}{\leq} \mathbb{E}[Q] \underset{\epsilon_H}{\leq} Q+\Delta_H(n),
\end{equation}
where $\Delta_H(n)=\sqrt{n \ln \left(1 / \epsilon_H\right) / 2}$.
\end{theorem}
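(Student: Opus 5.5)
The plan is to prove the two-sided bound by applying the Chernoff--Hoeffding exponential moment method to each tail separately, and then to take a union bound. Throughout, write $\Delta:=\Delta_H(n)$ and $\mu:=\mathbb{E}[Q]=\sum_i p_i$, where $p_i=\Pr[Q_i=1]$. Note that the $Q_i$ are assumed independent but not identically distributed, so the argument must not use a common success probability.

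\textbf{Upper tail.} I first show $\Pr[Q-\mu\geq \Delta]\le \epsilon_H$, which is the statement $Q-\Delta_H(n)\underset{\epsilon_H}{\leq}\mathbb{E}[Q]$. For any $s>0$, Markov's inequality applied to $e^{s(Q-\mu)}$ gives
\begin{equation}
\Pr[Q-\mu\ge \Delta]\le e^{-s\Delta}\,\mathbb{E}\big[e^{s(Q-\mu)}\big]=e^{-s\Delta}\prod_{i=1}^n \mathbb{E}\big[e^{s(Q_i-p_i)}\big],
\end{equation}
where the factorization uses independence. The key ingredient is Hoeffding's lemma: for a random variable $Y$ taking values in $[a,b]$ with $\mathbb{E}[Y]=0$, one has $\mathbb{E}[e^{sY}]\le e^{s^2(b-a)^2/8}$. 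I would prove this by bounding $e^{sy}$ on $[a,b]$ by the chord through its endpoints (convexity). Taking expectations then gives $\mathbb{E}[e^{sY}]\le e^{L(u)}$ with $u=s(b-a)$ and $L(u)=-\theta u+\ln(1-\theta+\theta e^{u})$, where $\theta=-a/(b-a)$. One checks that $L(0)=L'(0)=0$ and $L''(u)\le 1/4$, so Taylor's theorem yields $L(u)\le u^2/8$. I expect this Taylor-remainder step to be the only genuinely delicate computation: the bound $L''\le 1/4$ follows because $L''$ has the form $t(1-t)$ for some $t\in[0,1]$.

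Applying the lemma with $Y=Q_i-p_i\in[-p_i,1-p_i]$, an interval of width $1$, gives $\Pr[Q-\mu\ge\Delta]\le \exp(-s\Delta+ns^2/8)$. Optimizing at $s=4\Delta/n$ gives $\exp(-2\Delta^2/n)$. Substituting $\Delta^2=n\ln(1/\epsilon_H)/2$ makes this equal to $\epsilon_H$.

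\textbf{Lower tail.} The same argument applied to the variables $1-Q_i$, which are again independent Bernoulli variables with sum $n-Q$, gives $\Pr[\mu-Q\ge\Delta]\le\epsilon_H$. This is the statement $\mathbb{E}[Q]\underset{\epsilon_H}{\leq}Q+\Delta_H(n)$.

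The two one-sided statements are exactly the two inequalities in the theorem, each holding with failure probability at most $\epsilon_H$. When both are needed jointly, a union bound gives a total failure probability of at most $2\epsilon_H$. No earlier results of the paper are needed, since this is a self-contained classical concentration bound.
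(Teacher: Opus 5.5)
Your proof is correct: the Markov step on $e^{s(Q-\mu)}$, Hoeffding's lemma applied to ranges of width one, the choice $s=4\Delta/n$ giving $e^{-2\Delta^2/n}=\epsilon_H$, and the reflection $Q_i\mapsto 1-Q_i$ for the other tail all check out, and you rightly note that each one-sided inequality carries its own $\epsilon_H$. The paper gives no proof and simply cites Hoeffding's original work, whose exponential-moment argument is essentially the route you follow; the statement implicitly needs $\epsilon_H\in(0,1]$ so that $\Delta_H(n)$ is real.
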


\begin{theorem}
\label{th:azuma}
{\bf (Azuma’s inequality \cite{Azuma,Guille_framework})}  Consider $Q_1, Q_2, \ldots, Q_n$ to be any sequence of Bernoulli random variables, not necessarily independent, and let $\left\{\mathcal{F}_l\right\}_{l=1}^n$ be a filtration such that $Q_l$ is $\mathcal{F}_{l}$ measurable for all $l$.
Also, let $\Lambda_l=\sum_{u=1}^l Q_u$, with $l \leq n$. 
This bound states that
\begin{equation}
\begin{aligned}
& \operatorname{Pr}\left(\Lambda_n-\sum_{u=1}^n \operatorname{Pr}\left(Q_u=1 \mid \mathcal{F}_{u-1}\right)>b \sqrt{n}\right) \leq \exp \left(-\frac{b^2}{2}\right), \text { and } \\
& \operatorname{Pr}\left(\sum_{u=1}^n \operatorname{Pr}\left(Q_u=1 \mid \mathcal{F}_{u-1}\right)-\Lambda_n>b \sqrt{n}\right) \leq \exp \left(-\frac{b^2}{2}\right),
\end{aligned}
\end{equation}
for any $b>0$. In particular, this means that
\begin{equation}
\Lambda_n-\Delta_A(n) \underset{\epsilon_A}{\leq} \sum_{u=1}^n \operatorname{Pr}\left(Q_u=1 \mid \mathcal{F}_{u-1}\right) \underset{\epsilon_A}{\leq} \Lambda_n+\Delta_A(n),
\end{equation}
where $\Delta_A(n)=\sqrt{2 n \ln \left(1 / \epsilon_A\right)}$.
\end{theorem}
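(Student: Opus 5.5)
The statement to prove is Azuma's inequality for Bernoulli sequences adapted to a filtration (\cref{th:azuma}). The plan is to reduce it to the standard Azuma--Hoeffding bound for martingales with bounded increments.

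\textbf{Step 1: build the martingale.} Define the increments $D_u := Q_u - \Pr(Q_u=1\mid\mathcal{F}_{u-1})$ and the partial sums $Y_l := \sum_{u=1}^l D_u$, with $Y_0=0$.
\begin{itemize}
\item Since $Q_u$ is $\mathcal{F}_u$-measurable and $\mathbb{E}[Q_u\mid\mathcal{F}_{u-1}]=\Pr(Q_u=1\mid\mathcal{F}_{u-1})$, we get $\mathbb{E}[D_u\mid\mathcal{F}_{u-1}]=0$. Hence $\{Y_l\}$ is a martingale with respect to $\{\mathcal{F}_l\}$.
\item Because $Q_u\in\{0,1\}$ and the conditional probability lies in $[0,1]$, each increment satisfies $|D_u|\le 1$.
\item More precisely, conditional on $\mathcal{F}_{u-1}$, the increment $D_u$ lies in an interval of length $1$, namely $[-p_u,\,1-p_u]$ with $p_u := \Pr(Q_u=1\mid\mathcal{F}_{u-1})$.
\end{itemize}

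\textbf{Step 2: Chernoff bound with Hoeffding's lemma.} For $t>0$, Markov's inequality gives $\Pr(Y_n>b\sqrt n)\le e^{-tb\sqrt n}\,\mathbb{E}[e^{tY_n}]$.
\begin{itemize}
\item Apply the conditional version of Hoeffding's lemma round by round: a zero-mean variable confined to an interval of length $c$ satisfies $\mathbb{E}[e^{tD}]\le e^{t^2c^2/8}$.
\item Peeling off $\mathbb{E}[e^{tD_n}\mid\mathcal{F}_{n-1}]$ and iterating by the tower property yields $\mathbb{E}[e^{tY_n}]\le e^{t^2 n c^2/8}$.
\end{itemize}
This is the one step needing care, because the interval $[-p_u,1-p_u]$ is itself $\mathcal{F}_{u-1}$-measurable. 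Conditioning on $\mathcal{F}_{u-1}$ handles this: the interval is then fixed and Hoeffding's lemma applies.

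\textbf{Step 3: the constant.} Using only the crude bound $|D_u|\le 1$, the interval length is $c=2$. Then $\mathbb{E}[e^{tY_n}]\le e^{t^2 n/2}$, and optimizing at $t=b/\sqrt n$ gives $\exp(-b^2/2)$, exactly as stated. The sharper length $c=1$ would give $e^{-2b^2}$, which is stronger, so the stated constant is conservative and either route suffices.

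\textbf{Step 4: the other tail and the interval form.} The lower tail follows by applying the same argument to the martingale $-Y_n$. For the "in particular" statement, set $\exp(-b^2/2)=\epsilon_A$, i.e.\ $b=\sqrt{2\ln(1/\epsilon_A)}$. Then $b\sqrt n=\sqrt{2n\ln(1/\epsilon_A)}=\Delta_A(n)$, and each one-sided inequality holds except with probability at most $\epsilon_A$, in the paper's $\underset{\epsilon_A}{\leq}$ notation.
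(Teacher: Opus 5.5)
Your proof is correct, and it is the standard martingale derivation of the result the paper only cites without proof. The centred sums $Y_l$ form a martingale with increments bounded by $1$. A Chernoff bound combined with Hoeffding's lemma on an interval of length $2$, optimized at $t=b/\sqrt{n}$, gives exactly the stated $\exp(-b^2/2)$, and choosing $b=\sqrt{2\ln(1/\epsilon_A)}$ recovers $\Delta_A(n)$. Your side remark is also right: using the conditional range $[-p_u,1-p_u]$ of length $1$ gives the sharper $\exp(-2b^2)$, so the paper's constant is conservative.
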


\section{Generalization of \cite{misma} to decoy-state protocols }\label{app:detector}

Here, we show how to generalize the main result of \cite{misma} to decoy-state protocols, thereby enabling the incorporation of detector imperfections within a phase-error-rate-based security proof. Although \cite{misma} already considers this type of protocols, applying its results requires solving a highly demanding optimization problem, which may be computationally unfeasible in practice. Crucially, if additional monotonicity properties of the functions $\mathcal{E}^{\rm decoy}_{0,0}$ and $\mathcal{F}^{\rm decoy}_{0,0}$ (see \cite{misma} for its precise definitions) could be established, this optimization could be removed. In this Appendix, we extend the approach of \cite{misma} to the decoy-state setting and show that monotonicity arguments involving only $\mathcal{F}^{\rm decoy}_{0,0}$ are sufficient to remove the need for such costly optimization procedures.

For generality, we employ once again the notation in Appendix~\ref{app:rigurous}, so that the number of single-photon detected key rounds is denoted as $N^{\rm det}_{{\rm key},1}$ with the understanding that in the protocol considered in the main text, these rounds correspond to $N^{{\rm det},Z}_{Z,1}$.  

\begin{theorem}[Variation of Theorem 2 from \cite{misma}]\label{thm:maintheoremdecoy}
  Let $\Sindep$ be the phase-error-estimation protocol of a particular decoy-state scenario when Bob's measurement setup satisfies the basis-independent detection efficiency condition. Let $\Sdep$ be the phase-error estimation protocol of the same decoy-state scenario when Bob's measurement setup suffers from a detection efficiency mismatch parameterized by $\delta_1$ and $\delta_2$, as defined in \cite{misma}. Consider the scenario in which Bob makes an active decision about whether to run $\Sindep$ or $\Sdep$. Let $\Prindep$ ($\Prdep$) be the probability measure conditional on the choice of $\Sindep$ ($\Sdep$). Suppose that  
\begin{equation}
\label{eq:guarantee_nK1_decoy}
\begin{aligned}
\Prindep \left[ N^{\rm det}_{{\rm key},1} < \mathcal{M}_L (\overrightarrow{N}_{Z}) \right]\leq \epsspone^2, \\
\Prindep \left[N^{\rm det}_{{\rm key},1} > \mathcal{M}_U (\overrightarrow{N}_{Z}) \right]\leq \epssptwo^2, \\
\end{aligned}
\end{equation}
where $\overrightarrow{N}_{Z (X)}$ represents all the announced statistics associated to those events in which Bob selects the $Z(X)$ basis, $\mathcal{M}_{L(U)}$ is a function that provides suitable lower (upper) bounds and 
\begin{equation}
\label{eq2:guarantee_BIDE_decoy}
\Prindep( e_{{\rm ph},1}> \Eindep^\mathrm{decoy}({\overrightarrow{N}_X},N^{\rm det}_{{\rm key},1})) \leq \epssrc^2.
\end{equation} 
Also, let 
\begin{equation} \label{eq:F00_def}
\Findepdecoy (\overrightarrow{N}_{X},N^{\rm det}_{{\rm key},1}) := N^{\rm det}_{{\rm key},1} \Eindepdecoy (\overrightarrow{N}_{X}, N^{\rm det}_{{\rm key},1}).
\end{equation}
Then, for any $\epsATb$ and $\epsATc$, we have that

\begin{equation} \label{eq:objective_combined}
\begin{aligned}
   &\Prdep \bigg( N^{\rm det}_{{\rm key},1}< \mathcal{M}_L (\overrightarrow{N}_{Z}) \\
   &\bigcup {e_{\mathrm{ph},1}} > \max_{\mathcal{M}_L (\overrightarrow{N}_{Z})\leq \hat{n}\leq\mathcal{M}_U (\overrightarrow{N}_{Z})}\frac{\max_{n \in \mathcal{W}_{\delta_2}(\hat{n})}   \Findepdecoy (\overrightarrow{N}_{X},n)}{\hat{n}}  + \frac{ \delta_1 + \gamma^{\epsATb}_{\mathrm{bin}}(\mathcal{M}_L (\overrightarrow{N}_{Z}), \deltaone)}{1-\delta_2-\gamma^{\epsATc}_{\mathrm{bin}}(\mathcal{M}_L (\overrightarrow{N}_{Z}), \delta_2)} \bigg)\\
   &\leq \epsspone^2 + \epssptwo^2 + \epssrc^2 + \epsATb^2 + \epsATc^2,
\end{aligned}
\end{equation}
where $\gamma^{\varepsilon}_\mathrm{bin}$ is a finite-size deviation term defined as
\footnote{Equivalence (1) follows from the monotonicity of the binomial tail in its 
lower summation limit, together with the fact that, for integer $k^{\star} 
= n\,\mathcal{G}^{\mathrm{U}}_{n,\varepsilon^{2}}(\delta)$, the condition 
$\lfloor n(\delta+x)\rfloor \geq k^{\star}$ is equivalent to 
$n(\delta+x) \geq k^{\star}$.} 

\begin{equation}
\label{eq:gamma_bin}
    \gamma_{\mathrm{bin}}^{\varepsilon}(n, \delta):=   \max \Big\{0,\mathcal{G}_{n, \varepsilon^2}^{\mathrm{U}}(\delta)-\delta\Big\} \underset{(1)}{\equiv} \min \left\{x \geq 0: \sum_{i=\lfloor n(\delta+x)\rfloor}^n\binom{n}{i} \delta^i(1-\delta)^{n-i} \leq \varepsilon^2\right\},
\end{equation}
where $\mathcal{G}_{n, \varepsilon}^{\mathrm{U}}(\cdot)$ is defined in \cref{eq:G0}
and $\mathcal{W}_{\delta_2}m:=\mathcal{N}_{\leq}\left(\left\lfloor\frac{m}{1-\delta_2-\gamma_{\text {bin }}^{\varepsilon_{{\rm dep }-2}}\left(m, \delta_2\right)}\right\rfloor\right)$ with $\mathcal{N}_{\leq}m:=\{0,1, \ldots, m\}$.
\end{theorem}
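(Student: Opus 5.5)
The plan is to follow the proof of Theorem 2 of \cite{misma}, which compares the dependent and independent protocols through a fictitious active choice between $\Sindep$ and $\Sdep$. We then insert the decoy-state guarantees \cref{eq:guarantee_nK1_decoy,eq2:guarantee_BIDE_decoy} in place of the single-photon ones. The structure of \cite{misma} is a filtering argument. Bob's mismatched POVM can be written as a basis-independent filter followed by a basis-dependent one. On the single-photon detected key rounds this means two things. First, the number of key rounds in $\Sdep$ relates to the number $n$ in a fictitious basis-independent protocol $\Sindep$ through a binomial thinning with loss probability at most $\delta_2$. Second, the phase errors in $\Sdep$ exceed those inherited from $\Sindep$ by at most a binomial number of events with success probability at most $\delta_1$. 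Both statements only concern Bob's measurement conditional on detection. They are therefore unaffected by Alice's photon-number register $M$ and intensity register $I$, since these commute with everything Bob does. The single-photon restriction is applied uniformly because $m_k$ has already been measured in step 4.

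The argument then proceeds in four steps. (1) Couple the two protocols so that the key rounds $\hat n=N^{\rm det}_{{\rm key},1}$ of $\Sdep$ are a subset of the key rounds $n$ of the coupled $\Sindep$ run. Each round is dropped independently with probability at most $\delta_2$. By \cref{th2} and the definition \cref{eq:gamma_bin}, except with probability $\epsATc^2$ we have $\hat n\geq n(1-\delta_2-\gamma^{\epsATc}_{\rm bin}(\cdot,\delta_2))$, i.e.\ $n\in\mathcal{W}_{\delta_2}(\hat n)$. (2) In the same way, the additional phase errors are stochastically dominated by a binomial with parameter $\delta_1$. Except with probability $\epsATb^2$ they number at most $n(\delta_1+\gamma^{\epsATb}_{\rm bin})$. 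Dividing by $\hat n$ and using the bound on $n/\hat n$ from step (1) gives the second term of \cref{eq:objective_combined}. (3) Apply \cref{eq2:guarantee_BIDE_decoy} in the coupled $\Sindep$. Except with probability $\epssrc^2$, the inherited phase errors are at most $\Findepdecoy(\overrightarrow N_X,n)$. Here $\overrightarrow N_X$ is the same in both protocols, because the mismatch only acts on key-basis filtering as parametrized in \cite{misma}; I will check this carefully. Since $n$ is unknown, we replace it by $\max_{n\in\mathcal{W}_{\delta_2}(\hat n)}$. (4) Since $\hat n$ itself is unknown, we use \cref{eq:guarantee_nK1_decoy} to confine it to $[\mathcal M_L,\mathcal M_U]$ except with probability $\epsspone^2+\epssptwo^2$, and take the outer maximum. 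The arguments of $\gamma_{\rm bin}$ must then be replaced by $\mathcal M_L$. This requires $n\,\gamma_{\rm bin}(n,\delta)$ to be controlled when $\gamma$ is evaluated at the smaller argument $\mathcal M_L\leq\hat n\leq n$. I would use the fact that $\gamma_{\rm bin}(\cdot,\delta)$ is essentially nonincreasing in $n$, as already used in \cite{misma}. A union bound over the five failure events gives the stated total.

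The main obstacle is step (3) combined with the maximizations. In \cite{misma}, removing the optimization required monotonicity of both $\mathcal E$ and $\mathcal F$. Here the bound is deliberately stated with $\max_{n\in\mathcal W_{\delta_2}(\hat n)}\Findepdecoy$, so that only the product form \cref{eq:F00_def} is needed. I will therefore bound the number of phase errors, rather than the rate, by $\Findepdecoy(\overrightarrow N_X,n)$ and divide by $\hat n$ only at the end. This avoids any monotonicity assumption on $\Eindepdecoy$. Care is also needed because the events must be evaluated on data that is random in both protocols. The coupling must therefore be constructed so that $\overrightarrow N_Z$ and $\overrightarrow N_X$ are common to both, with only the hidden key/phase quantities differing. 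This is exactly the active-choice construction, which I take over from \cite{misma}.
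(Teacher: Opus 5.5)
Your proposal is correct and takes essentially the same route as the paper. The paper's proof imports the argument of Theorem~2 of \cite{misma} and changes only the range of the outer maximization over $\hat n$: the bound $N^{\rm det}_{{\rm key},1}\le N^{{\rm det},Z}_{Z}$ used there is replaced by the decoy guarantee $N^{\rm det}_{{\rm key},1}\le\mathcal{M}_U(\overrightarrow{N}_{Z})$, at extra cost $\epssptwo^2$, which is your step (4), while the rest of your sketch ($\delta_2$-thinning giving $n\in\mathcal{W}_{\delta_2}(\hat n)$, the $\delta_1$-binomial excess, bounding the number rather than the rate of phase errors, and the five-term union bound) reconstructs what the paper inherits unchanged from \cite{misma}.
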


\begin{proof}
The proof follows exactly the same structure as that of Theorem 2 in \cite{misma}, with a modification to the maximization range. Specifically, in the proof of \cite{misma}, the bound $N^{\rm det}_{{\rm key},1}\leq N^{{\rm det},Z}_{Z}$ , restricts the range of the maximization over $\hat{n}$. This is done because only the first bound in \cref{eq:guarantee_nK1_decoy} is presented in the statement of the theorem. In our case, this argument is replaced by the more immediate observation that
\begin{equation}
    N^{\rm det}_{{\rm key},1}\leq \mathcal{M}_{U}(\overrightarrow{N}_{Z}),
\end{equation}
except with probability $\epssptwo^2$ according to \cref{eq:guarantee_nK1_decoy}. 

The remainder of the proof follows from that of Theorem 2 in \cite{misma}. 
\end{proof}

Next, before stating the result used in the main text, we show that the monotonicity of $\Findepdecoy(\overrightarrow{N}_{X},N^{\rm det}_{{\rm key},1})$ with respect to $N^{\rm det}_{{\rm key},1}$ is typically satisfied in QKD security proofs, since the concentration inequalities involved generally converge sublinearly.

\begin{corollary}
Consider the same setting as in \cref{thm:maintheoremdecoy}, and assume that the function $\Findepdecoy(\overrightarrow{N}_{X},N^{\rm det}_{{\rm key},1})$ is non-decreasing with respect to $N^{\rm det}_{{\rm key},1}$ \footnote{As it is shown in \cite{misma}, this is the case if the concentration inequalities used in the security proof  converge sublinearly}. Then, it follows that
\begin{equation} \label{eq:objective_combined_3}
\begin{aligned}
   &\Prdep \bigg( N^{\rm det}_{{\rm key},1} < \mathcal{M}_L (\overrightarrow{N}_{Z}) \bigcup  {e_{\mathrm{ph},1}} > \frac{ \Findepdecoy \Big(\overrightarrow{N}_{X},   \frac{\mathcal{M}_U (\overrightarrow{N}_{Z})}{1-\delta_2-\gamma^{\epsATc}_{\mathrm{bin}}(\mathcal{M}_L (\overrightarrow{N}_{Z}), \delta_2)}\Big)}{\mathcal{M}_L (\overrightarrow{N}_{Z})}  + \frac{ \delta_1 + \gamma^{\epsATb}_{\mathrm{bin}}(\mathcal{M}_L (\overrightarrow{N}_{Z}), \deltaone)}{1-\delta_2-\gamma^{\epsATc}_{\mathrm{bin}}(\mathcal{M}_L (\overrightarrow{N}_{Z}), \delta_2)} \bigg)\\
   &\leq \epsspone^2 + \epssptwo^2 + \epssrc^2 + \epsATb^2 + \epsATc^2,
\end{aligned}
\end{equation}
which represents a slightly looser bound than that of \cref{eq:objective_combined}, but does not require maximization.
\end{corollary}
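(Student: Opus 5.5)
The plan is to derive the corollary directly from \cref{thm:maintheoremdecoy}. I will show that the threshold in \cref{eq:objective_combined_3} is never smaller than the threshold in \cref{eq:objective_combined}. Then the event inside the probability in \cref{eq:objective_combined_3} is contained in the event of \cref{eq:objective_combined}. Monotonicity of $\Prdep$ under inclusion then transfers the bound $\epsspone^2+\epssptwo^2+\epssrc^2+\epsATb^2+\epsATc^2$ unchanged. The first component of the union, $N^{\rm det}_{{\rm key},1}<\mathcal{M}_L(\overrightarrow{N}_{Z})$, is identical in both statements, and the second (detector-mismatch) term $\frac{\delta_1+\gamma^{\epsATb}_{\rm bin}(\mathcal{M}_L,\delta_1)}{1-\delta_2-\gamma^{\epsATc}_{\rm bin}(\mathcal{M}_L,\delta_2)}$ is also identical. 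So it suffices to bound the double maximum pointwise.

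The first step handles the inner maximum. Fix $\hat n$ with $\mathcal{M}_L\le\hat n\le\mathcal{M}_U$. Every $n\in\mathcal{W}_{\delta_2}(\hat n)$ satisfies
\[
n\le \frac{\hat n}{1-\delta_2-\gamma^{\epsATc}_{\rm bin}(\hat n,\delta_2)}.
\]
Since $\Findepdecoy$ is non-decreasing in its second argument, the inner maximum is at most $\Findepdecoy\big(\overrightarrow{N}_X,\hat n/(1-\delta_2-\gamma^{\epsATc}_{\rm bin}(\hat n,\delta_2))\big)$. This step needs $\Findepdecoy$ to be defined, or extended monotonically, at non-integer arguments, which I would take as part of the assumption.

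The second step replaces the $\hat n$-dependence by the endpoints $\mathcal{M}_L$ and $\mathcal{M}_U$. For the numerator, I bound $\hat n\le\mathcal{M}_U$. I also need $\gamma^{\epsATc}_{\rm bin}(\hat n,\delta_2)\le\gamma^{\epsATc}_{\rm bin}(\mathcal{M}_L,\delta_2)$ for $\hat n\ge\mathcal{M}_L$, i.e., that the finite-size deviation term is non-increasing in the sample size. Together these make the argument of $\Findepdecoy$ at most $\mathcal{M}_U/(1-\delta_2-\gamma^{\epsATc}_{\rm bin}(\mathcal{M}_L,\delta_2))$, and monotonicity of $\Findepdecoy$ applies again. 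For the denominator, $\hat n\ge\mathcal{M}_L$ gives $1/\hat n\le 1/\mathcal{M}_L$. This uses $\Findepdecoy\ge0$, which holds because it is $N^{\rm det}_{{\rm key},1}$ times a rate bound. The resulting bound no longer depends on $\hat n$, so the outer maximum is dominated by the expression in \cref{eq:objective_combined_3}. The event inclusion follows.

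The main obstacle is the monotonicity of $\gamma^{\varepsilon}_{\rm bin}(n,\delta)$ in $n$. Because $\mathcal{G}^{\rm U}_{n,\varepsilon^2}$ is a rescaled binomial quantile, it is integer-valued and can fluctuate slightly, so it need not decrease strictly. I would first try to prove it from the binomial-tail characterization in \cref{eq:gamma_bin}. If that fails, I would replace $\gamma^{\epsATc}_{\rm bin}(\mathcal{M}_L,\delta_2)$ by $\max_{\mathcal{M}_L\le m\le \mathcal{M}_U}\gamma^{\epsATc}_{\rm bin}(m,\delta_2)$. Alternatively, I would bound it by a Hoeffding-type term $\sqrt{\ln(1/\varepsilon^2)/(2n)}$, which is manifestly decreasing in $n$. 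Either replacement still yields a valid bound that is only marginally looser than the stated one.
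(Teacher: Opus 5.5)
Your route is the same as the paper's: bound the theorem's double maximum pointwise using that $\Findepdecoy$ is non-decreasing and non-negative, $\hat n\le\mathcal{M}_U$, $1/\hat n\le 1/\mathcal{M}_L$ and the definition of $\mathcal{W}_{\delta_2}$, then conclude by event inclusion.

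You have also correctly located the one step that does not close, and your first plan for it will fail: $\gamma^{\varepsilon}_{\mathrm{bin}}(n,\delta)$ is not non-increasing in $n$. The integer quantile $k^\star(n)=n\,\mathcal{G}^{\mathrm{U}}_{n,\varepsilon^2}(\delta)$ is non-decreasing in $n$ and grows by $0$ or $1$ at each step. At every increment from $k$ to $k+1$ between $n$ and $n+1$ one has $\frac{k+1}{n+1}-\frac{k}{n}=\frac{n-k}{n(n+1)}>0$, so $\gamma_{\mathrm{bin}}=\max\{0,k^\star(n)/n-\delta\}$ is a sawtooth with upward jumps of order $1/n$. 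For example, with $\delta=\varepsilon^2=0.01$ one gets $k^\star(15)=2$ and $k^\star(16)=3$, so $\gamma_{\mathrm{bin}}$ jumps from about $0.123$ to $0.178$.

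With only monotonicity of $\Findepdecoy$ assumed, nothing absorbs such a jump. Take $\delta_2=0.01$, $\mathcal{M}_L=15$, $\mathcal{M}_U=16$. The corollary evaluates $\Findepdecoy$ at $16/0.867\approx 18.5$, whereas the theorem's maximum contains $\Findepdecoy(\overrightarrow{N}_X,19)/16$, since $\lfloor 16/0.8125\rfloor=19$. A non-decreasing $\Findepdecoy$ that jumps at $19$ makes the corollary's threshold strictly smaller than the theorem's. So the corollary exactly as written cannot be obtained from the theorem by event inclusion without an extra hypothesis.

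The paper's own chain has the same hole, hidden in its step (1). That step asserts $\max_{\hat n}\hat n/(1-\delta_2-\min_{\hat n}\gamma^{\epsATc}_{\mathrm{bin}}(\hat n,\delta_2))\ge\max_{\hat n}\hat n/(1-\delta_2-\gamma^{\epsATc}_{\mathrm{bin}}(\hat n,\delta_2))$. This is false whenever $\gamma_{\mathrm{bin}}$ is non-constant on the range and maximal at $\hat n=\mathcal{M}_U$, exactly as in the example. The valid version has $\max_{\hat n}\gamma$ in place of $\min_{\hat n}\gamma$. Then the chain's first step needs $\gamma^{\epsATc}_{\mathrm{bin}}(\mathcal{M}_L,\delta_2)\ge\gamma^{\epsATc}_{\mathrm{bin}}(\hat n,\delta_2)$ for all $\hat n$ in the range, which is precisely your missing monotonicity.

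Your fallback is therefore the right repair, and with it your proof is complete and more direct than the paper's detour through $\min_{\hat n}\gamma$. You can do either of the following:
\begin{itemize}
\item assume $\gamma_{\mathrm{bin}}$ is non-increasing on $[\mathcal{M}_L,\mathcal{M}_U]$; or
\item replace $\gamma^{\epsATc}_{\mathrm{bin}}(\mathcal{M}_L,\delta_2)$ by $\max_{\mathcal{M}_L\le m\le\mathcal{M}_U}\gamma^{\epsATc}_{\mathrm{bin}}(m,\delta_2)$, or by any non-increasing upper envelope of it.
\end{itemize}
The replacement is needed only inside the argument of $\Findepdecoy$, since the second term is inherited verbatim from the theorem. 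If you use the Hoeffding envelope, note that the ceiling in $k^\star$ costs an extra $1/n$, i.e.\ $\gamma^{\varepsilon}_{\mathrm{bin}}(n,\delta)<\sqrt{\ln(1/\varepsilon^2)/(2n)}+1/n$, which is still decreasing. The resulting loss is $O(1/\mathcal{M}_L)$ and numerically irrelevant. It is, however, a modification of the statement rather than a proof of it verbatim.
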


\begin{proof}
The proof proceeds through the sequence of inequalities presented below. Our goal is to show that the left hand side of \cref{eq:objective_combined_3} is less or equal than that of \cref{eq:objective_combined}.
\begin{equation} \label{eq:objective_combined_2}
\begin{aligned}
    &\Prdep \bigg({e_{\mathrm{ph},1}} > \frac{ \Findepdecoy \Big(\overrightarrow{N}_{X},   \frac{\mathcal{M}_U (\overrightarrow{N}_{Z})}{1-\delta_2-\gamma^{\epsATc}_{\mathrm{bin}}(\mathcal{M}_L (\overrightarrow{N}_{Z}), \delta_2)}\Big)}{\mathcal{M}_L (\overrightarrow{N}_{Z})}  + \frac{ \delta_1 + \gamma^{\epsATb}_{\mathrm{bin}}(\mathcal{M}_L (\overrightarrow{N}_{Z}), \deltaone)}{1-\delta_2-\gamma^{\epsATc}_{\mathrm{bin}}(\mathcal{M}_L (\overrightarrow{N}_{Z}), \delta_2)} \bigg) \\
    &\leq \Prdep \bigg({e_{\mathrm{ph},1}} > \frac{ \Findepdecoy \Big(\overrightarrow{N}_{X},   \frac{\underset{{\mathcal{M}_L (\overrightarrow{N}_{Z})\leq \hat{n}\leq\mathcal{M}_U (\overrightarrow{N}_{Z})}}{\max} \hat n}{1-\delta_2-\underset{{\mathcal{M}_L (\overrightarrow{N}_{Z})\leq \hat{n}\leq\mathcal{M}_U (\overrightarrow{N}_{Z})}}{\min}\gamma^{\epsATc}_{\mathrm{bin}}(\hat n, \delta_2)}\Big)}{\underset{\mathcal{M}_L (\overrightarrow{N}_{Z})\leq \hat{n}\leq\mathcal{M}_U (\overrightarrow{N}_{Z})}{\min}\hat{n}}  + \frac{ \delta_1 + \gamma^{\epsATb}_{\mathrm{bin}}(\mathcal{M}_L (\overrightarrow{N}_{Z}), \deltaone)}{1-\delta_2-\gamma^{\epsATc}_{\mathrm{bin}}(\mathcal{M}_L (\overrightarrow{N}_{Z}), \delta_2)} \bigg) \\
    &\underset{(1)}{\leq} \Prdep \bigg({e_{\mathrm{ph},1}} > \frac{ \Findepdecoy \Big(\overrightarrow{N}_{X},  \underset{{\mathcal{M}_L (\overrightarrow{N}_{Z})\leq \hat{n}\leq\mathcal{M}_U (\overrightarrow{N}_{Z})}} {\max}\frac{\hat n}{1-\delta_2-\gamma^{\epsATc}_{\mathrm{bin}}(\hat n, \delta_2)}\Big)}{\underset{{\mathcal{M}_L (\overrightarrow{N}_{Z})\leq \hat{n}\leq\mathcal{M}_U (\overrightarrow{N}_{Z})}}{\min}\hat{n}}  + \frac{ \delta_1 + \gamma^{\epsATb}_{\mathrm{bin}}(\mathcal{M}_L (\overrightarrow{N}_{Z}), \deltaone)}{1-\delta_2-\gamma^{\epsATc}_{\mathrm{bin}}(\mathcal{M}_L (\overrightarrow{N}_{Z}), \delta_2)} \bigg) \\
    &=\Prdep \bigg({e_{\mathrm{ph},1}} > \frac{ \underset{{\mathcal{M}_L (\overrightarrow{N}_{Z})\leq \hat{n}\leq\mathcal{M}_U (\overrightarrow{N}_{Z})}}{\max} \Findepdecoy \Big(\overrightarrow{N}_{X},  \frac{\hat n}{1-\delta_2-\gamma^{\epsATc}_{\mathrm{bin}}(\hat n, \delta_2)}\Big)}{\underset{\mathcal{M}_L (\overrightarrow{N}_{Z})\leq \hat{n}\leq\mathcal{M}_U (\overrightarrow{N}_{Z})}{\min}\hat{n}}  + \frac{ \delta_1 + \gamma^{\epsATb}_{\mathrm{bin}}(\mathcal{M}_L (\overrightarrow{N}_{Z}), \deltaone)}{1-\delta_2-\gamma^{\epsATc}_{\mathrm{bin}}(\mathcal{M}_L (\overrightarrow{N}_{Z}), \delta_2)} \bigg) \\
    &\leq \Prdep \bigg({e_{\mathrm{ph},1}} > \max_{\mathcal{M}_L (\overrightarrow{N}_{Z})\leq \hat{n}\leq\mathcal{M}_U (\overrightarrow{N}_{Z})}\frac{  \Findepdecoy \Big(\overrightarrow{N}_{X}, \frac{\hat n}{1-\delta_2-\gamma^{\epsATc}_{\mathrm{bin}}(\hat n, \delta_2)}\Big)}{\hat{n}}  + \frac{ \delta_1 + \gamma^{\epsATb}_{\mathrm{bin}}(\mathcal{M}_L (\overrightarrow{N}_{Z}), \deltaone)}{1-\delta_2-\gamma^{\epsATc}_{\mathrm{bin}}(\mathcal{M}_L (\overrightarrow{N}_{Z}), \delta_2)} \bigg) \\
    &\underset{(2)}{\leq}\Prdep \bigg({e_{\mathrm{ph},1}} > \max_{\mathcal{M}_L (\overrightarrow{N}_{Z})\leq \hat{n}\leq\mathcal{M}_U (\overrightarrow{N}_{Z})}\frac{  \Findepdecoy (\overrightarrow{N}_{X}, \underset{N \in \mathcal{W}_{\delta_2}(\hat{n})}{\max} N)}{\hat{n}}  + \frac{ \delta_1 + \gamma^{\epsATb}_{\mathrm{bin}}(\mathcal{M}_L (\overrightarrow{N}_{Z}), \deltaone)}{1-\delta_2-\gamma^{\epsATc}_{\mathrm{bin}}(\mathcal{M}_L (\overrightarrow{N}_{Z}), \delta_2)} \bigg) \\
   &=\Prdep \bigg({e_{\mathrm{ph},1}} > \max_{\mathcal{M}_L (\overrightarrow{N}_{Z})\leq \hat{n}\leq\mathcal{M}_U (\overrightarrow{N}_{Z})}\frac{\underset{N \in \mathcal{W}_{\delta_2}(\hat{n})}{\max}   \Findepdecoy (\overrightarrow{N}_{X},N)}{\hat{n}}  + \frac{ \delta_1 + \gamma^{\epsATb}_{\mathrm{bin}}(\mathcal{M}_L (\overrightarrow{N}_{Z}), \deltaone)}{1-\delta_2-\gamma^{\epsATc}_{\mathrm{bin}}(\mathcal{M}_L (\overrightarrow{N}_{Z}), \delta_2)} \bigg)\\
   &\leq \epsspone^2 + \epssptwo^2 + \epssrc^2 + \epsATb^2 + \epsATc^2.
\end{aligned}
\end{equation}
The inequality (1) follows from
\begin{equation}
    \frac{\underset{{\mathcal{M}_L (\overrightarrow{N}_{Z})\leq \hat{n}\leq\mathcal{M}_U (\overrightarrow{N}_{Z})}}{\max} \hat n}{1-\delta_2-\underset{{\mathcal{M}_L (\overrightarrow{N}_{Z})\leq \hat{n}\leq\mathcal{M}_U (\overrightarrow{N}_{Z})}}{\min}\gamma^{\epsATc}_{\mathrm{bin}}(\hat n, \delta_2)}\geq \underset{{\mathcal{M}_L (N_{Z})\leq \hat{n}\leq\mathcal{M}_U (N_{Z})}} {\max}\frac{\hat n}{1-\delta_2-\gamma^{\epsATc}_{\mathrm{bin}}(\hat n, \delta_2)},
\end{equation}
and the fact that $\Findepdecoy$ is non-decreasing with respect to its second argument. On the other hand, the inequality (2) stems from
\begin{equation}
    \frac{\hat n}{1-\delta_2-\gamma^{\epsATc}_{\mathrm{bin}}(\hat n, \delta_2)} \geq \underset{N \in \mathcal{W}_{\delta_2}(\hat{n})}{\max} N,
\end{equation}
which holds from the definition of $\mathcal{W}_{\delta_2}m$ and the same monotonicity argument on $\Findepdecoy$.
\end{proof}

\section{Parameters for phase-error rate estimation}\label{app:list}
Here we provide the definitions of the parameters required for the estimation of the phase-error rate, which were omitted from the main text for the sake of clarity and readability. For a more detailed description of the quantities introduced below, we refer the reader to \cite{Guille_framework}.

In the following, $p_{t|j}$ denotes the probability of assigning a certain tag $t\in\{{\rm TAR, REF}\}$ given an event $j$,
\begin{equation}
\begin{aligned}
p_{\mathrm{TAR} \mid 0_Z} & =\frac{2 p_{\mathrm{TAR}} c_1 q_{\mathrm{vir} 1}}{p_{Z_A} p_{X_B}\left(1+c_1 q_{\mathrm{vir} 1}\right)}, \quad p_{\mathrm{TAR} \mid \mathrm{vir}}&=\frac{p_{\mathrm{TAR}}}{p_{Z_A} p_{Z_B}\left(1+c_1 q_{\mathrm{vir} 1}\right)}\\
p_{\mathrm{REF} \mid 0_X} & =\frac{2 p_{\mathrm{TAR}}\left(1-q_{\mathrm{vir} 1}\right)}{p_{X_A} p_{X_B}\left(1+c_1 q_{\mathrm{vir} 1}\right)}, \quad
p_{\mathrm{REF} \mid 1_X} & =\frac{2 p_{\mathrm{TAR}} c_3 q_{\mathrm{vir} 1}}{p_{X_A} p_{X_B}\left(1+c_1 q_{\mathrm{vir} 1}\right)},\\
p_{\mathrm{REF} \mid 1_Z} & =\frac{2 p_{\mathrm{TAR}} c_2 q_{\mathrm{vir} 1}}{p_{Z_A} p_{X_B}\left(1+c_1 q_{\mathrm{vir} 1}\right)},
\end{aligned} 
\end{equation}
where  $q_{\mathrm{vir} 1}$ is defined in \cref{eq:qvir} and $c_1$, $c_2$ and $c_3$ are defined in \cref{eq:c_defs}. These probabilities are needed to apply the Chernoff bounds. For that matter, we define the following means 
\begin{equation}
\begin{aligned}
& \mu_{0_Z, \mathrm{TAR}, Z_C,1}^{0_X}=p_{Z_C}  p_{\mathrm{TAR} \mid 0_Z}  \underline{N}_{0_Z,1}^{0_X}, \\
& \mu_{0_Z, \mathrm{TAR}, Z_C,1}^{\mathrm{det}}=p_{Z_C}  p_{\mathrm{TAR} \mid 0_Z}  \overline{N}_{0_Z,1}^{\mathrm{det}}, \\
& \mu_{\mathrm{key}, \mathrm{TAR}, Z_C,1}^{\mathrm{det}}=p_{Z_C}  p_{\mathrm{TAR} \mid \mathrm{vir}}  \hat{N}_{Z,1}^{\mathrm{det},Z}, \\
& \mu_{1_Z, \mathrm{REF}, Z_C,1}^{0_X}=p_{Z_C}  p_{\mathrm{REF} \mid 1_Z}  \overline{N}_{1_Z,1}^{0_X}, \\
& \mu_{0_X, \mathrm{REF}, Z_C,1}^{1_X}=p_{Z_C}  p_{\mathrm{REF} \mid 0_X}  \overline{N}_{0_X,1}^{1_X}, \\
& \mu_{1_X, \mathrm{REF}, Z_C,1}^{0_X}=p_{Z_C}  p_{\mathrm{REF} \mid 1_X}  \overline{N}_{1_X,1}^{0_X}, \\
& \mu_{1_Z, \mathrm{REF}, Z_C,1}^{\mathrm{det}}=p_{Z_C}  p_{\mathrm{REF} \mid 1_Z}  \underline{N}_{1_Z,1}^{\mathrm{det}}, \\
& \mu_{0_X, \mathrm{REF}, Z_C,1}^{\mathrm{det}}=p_{Z_C}  p_{\mathrm{REF} \mid 0_X}  \underline{N}_{0_X,1}^{\mathrm{det}}, \\
& \mu_{1_X, \mathrm{REF}, Z_C,1}^{\mathrm{det}}=p_{Z_C}  p_{\mathrm{REF} \mid 1_X}  \underline{N}_{1_X,1}^{\mathrm{det}}.
\end{aligned}
\end{equation}
With these, we can now compute both the lower bounds,
\begin{equation}
\begin{aligned}
& \underline{N}_{0_Z, \mathrm{TAR}, Z_C,1}^{0_X}=\max \left(\mu_{0_Z, \mathrm{TAR}, Z_C,1}^{0_X}-\sqrt{2 \mu_{0_Z, \mathrm{TAR}, Z_C,1}^{0_X} \ln \left(1 / \varepsilon_{\text {conc}}\right)},0\right), \\
& \underline{N}_{1_Z, \mathrm{REF}, Z_C,1}^{\mathrm{det}}=\max \left(\mu_{1_Z, \mathrm{REF}, Z_C,1}^{\mathrm{det}}-\sqrt{2 \mu_{1_Z, \mathrm{REF}, Z_C,1}^{\mathrm{det}} \ln \left(1 / \varepsilon_{\text {conc}}\right)},0\right), \\
& \underline{N}_{0_X, \mathrm{REF}, Z_C,1}^{\mathrm{det}}=\max \left(\mu_{0_X, \mathrm{REF}, Z_C,1}^{\mathrm{det}}-\sqrt{2 \mu_{0_X, \mathrm{REF}, Z_C,1}^{\mathrm{det}} \ln \left(1 / \varepsilon_{\text {conc}}\right)},0\right), \\
& \underline{N}_{1_X, \mathrm{REF}, Z_C,1}^{\mathrm{det}}=\max \left(\mu_{1_X, \mathrm{REF}, Z_C,1}^{\mathrm{det}}-\sqrt{2 \mu_{1_X, \mathrm{REF}, Z_C,1}^{\mathrm{det}} \ln \left(1 / \varepsilon_{\text {conc}}\right)},0\right),
\end{aligned}
\end{equation}
and the upper bounds,
\begin{equation}
\begin{aligned}
& \overline{N}_{\mathrm{key}, \mathrm{TAR}, Z_C,1}^{\mathrm{det}}=\mu_{\mathrm{key}, \mathrm{TAR}, Z_C,1}^{\mathrm{det}}+\sqrt{3 \mu_{\mathrm{key}, \mathrm{TAR}, Z_C,1}^{\mathrm{det}} \ln \left(1 / \varepsilon_{\mathrm{conc}}\right)}, \\
& \overline{N}_{0_Z, \mathrm{TAR}, Z_C,1}^{\mathrm{det}}=\mu_{0_Z, \mathrm{TAR}, Z_C,1}^{\mathrm{det}}+\sqrt{3 \mu_{0_Z, \mathrm{TAR}, Z_C,1}^{\mathrm{det}} \ln \left(1 / \varepsilon_{\mathrm{conc}}\right)},\\
& \overline{N}_{1_Z, \mathrm{REF}, Z_C,1}^{0_X}=\mu_{1_Z, \mathrm{REF}, Z_C,1}^{0_X}+\sqrt{3 \mu_{1_Z, \mathrm{REF}, Z_C,1}^{0_X} \ln \left(1 / \varepsilon_{\mathrm{conc}}\right)}, \\
& \overline{N}_{0_X, \mathrm{REF}, Z_C,1}^{1 X}=\mu_{0_X, \mathrm{REF}, Z_C,1}^{1_X}+\sqrt{3 \mu_{0_X, \mathrm{REF}, Z_C,1}^{1_X} \ln \left(1 / \varepsilon_{\mathrm{conc}}\right)}, \\
& \overline{N}_{1_X, \mathrm{REF}, Z_C,1}^{0_X}=\mu_{1_X, \mathrm{REF}, Z_C,1}^{0_X}+\sqrt{3 \mu_{1_X, \mathrm{REF}, Z_C,1}^{0_X} \ln \left(1 / \varepsilon_{\mathrm{conc}}\right)}.
\end{aligned}
\end{equation}
Also we define the aggregate quantities required to evaluate for instance \cref{eq:number_errors}, which are given by
\begin{equation}
\begin{aligned}
& \overline{N}_{Z_C=1,1}^{\mathrm{err}}=\overline{N}_{1_Z, \mathrm{REF}, Z_C,1}^{0_X}+\overline{N}_{0_X, \mathrm{REF}, Z_C,1}^{1_X}+\overline{N}_{1_X, \mathrm{REF}, Z_C,1}^{0_X}, \\
& \overline{N}_{Z_C=0,1}^{\mathrm{det}}=\overline{N}_{\mathrm{key}, \mathrm{TAR}, Z_C,1}^{\mathrm{det}}+\overline{N}_{0_Z, \mathrm{TAR}, Z_C,1}^{\mathrm{det}}, \\
& \underline{N}_{Z_C=1,1}^{\mathrm{det}}=\underline{N}_{1_Z, \mathrm{REF}, Z_C,1}^{\mathrm{det}}+\underline{N}_{0_X, \mathrm{REF}, Z_C,1}^{\mathrm{det}}+\underline{N}_{1_X, \mathrm{REF}, Z_C,1}^{\mathrm{det}} .
\end{aligned}
\end{equation}
The only remaining quantity to evaluate \cref{eq:number_errors} is $\overline{N}_{X_C=1,1}$, defined as
\begin{equation}
\overline{N}_{X_C=1,1}=N p_1p_{\text {TAR }}  p_{X_C}  \Delta_{\text {coin }}+\sqrt{3 N  p_1  p_{\text {TAR }}  p_{X_C}  \Delta_{\text {coin }}  \ln \left(1 / \varepsilon_{\text {conc }}\right)} .
\end{equation}
Here, $p_1=\sum_pp_{\mu_p}p_{1|\mu_p}$ represents the probability that Alice generates a single-photon and $\Delta_{\text {coin }}$ represents an upper bound on the coin imbalance parameter \cite{Guille_framework,GLLP}
defined as $\Delta_{\text {coin }}:=1-\operatorname{Re}\left\langle\Psi_{\operatorname{REF}} \mid \Psi_{\operatorname{TAR}}\right\rangle$ which incorporates the effect of the side channels in the security proof.

As proven in \cite{Guille_framework}, an upper bound on such quantity can be found by solving the following semidefinite programming problem 
\begin{equation}
\begin{array}{ll}
\min _G \operatorname{Re}\left\langle\Psi_{\operatorname{REF}} \mid \Psi_{\operatorname{TAR}}\right\rangle& \\
\text { s.t. } G \succeq 0 , & \forall j \in\{1,2,3,4\} , \\
G[j, j]=\left\langle\phi_j \mid \phi_j\right\rangle_B=1, & \forall j \in\{1,2,3,4\} , \\
G[j+4, j+4]=\left\langle\phi_j^{\perp} \mid \phi_j^{\perp}\right\rangle_B=1, & \forall j \in\{1,2,3,4\} , \\
G[j, j+4]=\left\langle\phi_j \mid \phi_j^{\perp}\right\rangle_B=0, & \\
G[1,2]=\left\langle\phi_1 \mid \phi_2\right\rangle_B=\cos (\kappa \pi / 2) , & \\
G[1,3]=\left\langle\phi_1 \mid \phi_3\right\rangle_B=\cos (\kappa \pi / 4) , & \\
G[1,4]=\left\langle\phi_1 \mid \phi_4\right\rangle_B=\cos (3 \kappa \pi / 4) , & \\
G[2,3]=\left\langle\phi_2 \mid \phi_3\right\rangle_B=\cos (\kappa \pi / 4) , & \\
G[2,4]=\left\langle\phi_2 \mid \phi_4\right\rangle_B=\cos (\kappa \pi / 4) , & \\
G[3,4]=\left\langle\phi_3 \mid \phi_4\right\rangle_B=\cos (\kappa \pi / 2) , &
\end{array}
\end{equation}
where for simplicity we have redefined $\left\{0_Z, 1_Z, 0_X, 1_X\right\}$ as $\{1,2,3,4\}$ and we have introduced the state $\ket{\phi_j^{\perp}}_B$, which represents any orthogonal state to $\ket{\phi_j}_B$.  The objective function is given by
\begin{equation}
\begin{aligned}
\operatorname{Re}\left\langle\Psi_{\operatorname{REF}} \mid \Psi_{\operatorname{TAR}}\right\rangle & =\frac{1}{1+q_{\mathrm{vir} 1} c_1}\left[\frac{\sqrt{1-q_{\mathrm{vir} 1}}}{2}\left(G^{\prime}[3,1]+G^{\prime}[3,2]\right)+\sqrt{q_{\mathrm{vir} 1} c_2}\left(a \sqrt{q_{\mathrm{vir} 1} c_1}+\frac{b}{2}\right) G^{\prime}[2,1]\right. \\
& \left.-\frac{b \sqrt{q_{\mathrm{vir} 1} c_2}}{2}+\sqrt{q_{\mathrm{vir} 1} c_3}\left(b \sqrt{q_{\mathrm{vir} 1} c_1}-\frac{a}{2}\right) G^{\prime}[4,1]+\frac{a \sqrt{q_{\mathrm{vir} 1} c_3}}{2} G^{\prime}[4,2]\right],
\end{aligned}
\end{equation}
where the parameters $a= \sqrt{\frac{\cos (\kappa \pi)-\cos (\kappa \pi / 2)}{\cos (\kappa \pi / 2)-1}},$ $b= -\sqrt{-2 \cos (\kappa \pi / 2)}$  and
\begin{equation}
G^{\prime}[i, j]=(1-\epsilon) \operatorname{Re} G[i, j]+\sqrt{(1-\epsilon) \epsilon} \operatorname{Re} G[i, j+4]+\sqrt{\epsilon(1-\epsilon)} \operatorname{Re} G[i+4, j]+\epsilon \operatorname{Re} G[i+4, j+4].
\end{equation}
Here, the quantity $\epsilon = \max_{j_k} \epsilon_{j_k}$ where the terms $\epsilon_{j_k}$ correspond to the conditions
\begin{equation}
\label{eq:general_close}
\left|\langle{\phi}_{j_{k-l_c}^{k+l_c}}^{{m_k=1},k}|_{T_k} \otimes\langle\lambda_{j_{k-l_c+1}^{k-1}, j_{k+1}^{k+l_c}}^{k}|_{I_{k+1}^{k+l_c}M_{k+1}^{k+l_c}T_{k+1}^{k+l_c}} \mid \Psi_{j_{k-l_c}^{k+l_c}}^{m_{k}=1,k}\rangle_{I_{k+1}^{k+l_c}M_{k+1}^{k+l_c}T_k^{k+l_c}}\right|^2 \geq 1-\epsilon_{j_k}.
\end{equation}
These are the conditions needed to apply \cref{coro:2}, and employ the protocol partition strategy. By using the assumptions given by \cref{eq:corre_WCP,eq:close_2,eq:close} from the main text, we obtain that
\begin{equation}
\label{eq:subsection}
\epsilon\leq \sum_{l=1}^{l_c}\epsilon_{l}^{\rm WCP}+\left(\sqrt{\epsilon_{\rm side}}+\sum_{l=1}^{l_c}\sqrt{\epsilon_{l}^{\rm SP}}\right)^2.
\end{equation}
The derivation of this bound is presented in the following section.
\subsection*{Derivation of Eq.~\eqref{eq:subsection}}

First, note that the global purification factorises as
\begin{equation}\label{eq:product}
    |\Psi^{m_k=1,k}_{j^{k+l_c}_{k-l_c}}\rangle_{I^{k+l_c}_{k+1}M^{k+l_c}_{k+1}T^{k+l_c}_k}
    \;=\;
    |\psi^{m_k=1,k}_{j^k_{k-l_c}}\rangle_{T_k}
    \;\otimes\;
    |\Psi_{j^{k+l_c}_{k-l_c+1}}\rangle_{I^{k+l_c}_{k+1}M^{k+l_c}_{k+1}T^{k+l_c}_{k+1}},
\end{equation}
where $|\psi^{m_k=1,k}_{j^k_{k-l_c}}\rangle_{T_k}$ is the single-photon component emitted in round~$k$,
and $|\Psi_{j^{k+l_c}_{k-l_c+1}}\rangle_{I^{k+l_c}_{k+1}M^{k+l_c}_{k+1}T^{k+l_c}_{k+1}}$ represents the purification of the
subsequent pulses.

Substituting Eq.~\eqref{eq:product} into Eq.~\eqref{eq:general_close}  it follows that
\begin{equation}
    \left|\langle\phi^{m_k=1,k}_{j^{k+l_c}_{k-l_c}}|\otimes
    \langle\lambda_{j_{k-l_c+1}^{k-1}, j_{k+1}^{k+l_c}}^{k}|\cdot|\Psi^{m_k=1,k}_{j^{k+l_c}_{k-l_c}}\rangle\right|^2
    \;=\;
    \underbrace{\left|\langle\phi^{m_k=1,k}_{j^{k+l_c}_{k-l_c}}|\psi^{m_k=1,k}_{j^k_{k-l_c}}\rangle
    \right|^2}_{\;\geq\;1-\epsilon_{\mathrm{ideal}}}
    \;\times\;
    \underbrace{\left|\langle\lambda^{k}_{j^{k-1}_{k-l_c+1},j^{k+l_c}_{k+1}}|
    \Psi_{j^{k+l_c}_{k-l_c+1}}\rangle\right|^2}_{\;\geq\;1-\epsilon_{\mathrm{correlations}}},
\end{equation}
where we have removed the indexes of the subsystems for simplicity of notation. Here, $\epsilon_{\mathrm{ideal}}$ quantifies how far $|\psi^{m_k=1,k}_{j^k_{k-l_c}}\rangle$ is from the reference state, while $\epsilon_{\mathrm{correlations}}$ quantifies the information leakage through subsequent pulses. Using $(1-a)(1-b) \geq 1 - a - b$ for $a,b \in [0,1]$, we obtain
\begin{equation}\label{eq:decomp}
    \epsilon_{j_k} \;\leq\; \epsilon_{\mathrm{ideal}} + \epsilon_{\mathrm{correlations}}.
\end{equation}

Now, we use the relation between the fidelity and trace distance, $i.e.$,
$T(|\phi\rangle,|\psi\rangle) = \sqrt{1 - |\langle\phi|\psi\rangle|^2}$,
together with the triangle inequality for the trace distance.
First, let us look at Eq.~\eqref{eq:close_2}. This equation indicates that a distinct setting at round $k-l$  changes the single-photon state, in trace distance, by at most $\sqrt{\epsilon_l^{\mathrm{SP}}}$, that is
\begin{equation}
    T\!\left(|\psi^{m_k=1,k}_{j^k_{k-l_c}}\rangle,\;
    |\psi^{m_k=1,k}_{j^{k-l-1}_{k-l_c},\tilde{j}_{k-l},j^k_{k-l+1}}\rangle\right)
    \;\leq\; \sqrt{\epsilon_l^{\mathrm{SP}}}.
\end{equation}
Applying the triangle inequality across all $l_c$ setting differences
that separate $j^{k-1}_{k-l_c}$ from the fixed reference sequence
$\hat{j}^{k-1}_{k-l_c}:=\{j_{k-1}=\gamma,...,j_{k-l_c}=\gamma\}$, we accumulate at most $l_c$ steps, meaning
\begin{equation}\label{eq:tri1}
    T\!\left(|\psi^{1}_{j^k_{k-l_c}}\rangle,\;
    |\psi^{1}_{\hat{j}^{k-1}_{k-l_c},j_k}\rangle\right)
    \;\leq\; \sum_{l=1}^{l_c}\sqrt{\epsilon_l^{\mathrm{SP}}}.
\end{equation}
Second, Eq.~\eqref{eq:close} guarantees that
the actual emitted states with fixed past settings are close to the ideal reference state, which in trace distance reads
\begin{equation}\label{eq:tri2}
    T\!\left(|\phi^{1}_{j_k}\rangle,\;
    |\psi^{1}_{\hat{j}^{k-1}_{k-l_c},j_k}\rangle\right)
    \;\leq\;\sqrt{\epsilon_{\mathrm{side}}}.
\end{equation}
Applying the triangle inequality to Eqs.~\eqref{eq:tri1}--\eqref{eq:tri2}
and converting back to fidelity we obtain
\begin{equation}\label{eq:ideal_bound}
    \epsilon_{\mathrm{ideal}}
    \;\leq\;
    \left(\sqrt{\epsilon_{\mathrm{side}}} + \sum_{l=1}^{l_c}\sqrt{\epsilon_l^{\mathrm{SP}}}\right)^{\!2}.
\end{equation}

Next, we define the reference state for future rounds by
fixing the current-round setting to a particular value $\gamma$, that is
\begin{equation}
    |\lambda^{k}_{j^{k-1}_{k-l_c+1},j^{k+l_c}_{k+1}}\rangle
    \;:=\;
    |{\Psi}_{j^{k-1}_{k-l_c+1},\gamma,j^{k+l_c}_{k+1}}\rangle.
\end{equation}
Eq.~\eqref{eq:corre_WCP} guarantees that, for each pulse separation~$l$ we have that,
\begin{equation}
    \left|\langle \Psi_{j^{k-l-1}_{k-l_c+1},\gamma,j^k_{k-l+1}} |
    \Psi_{j^k_{k-l_c+1}}\rangle\right|^2
    \;\geq\; 1 - \epsilon_l^{\mathrm{WCP}}.
\end{equation}
The future-pulse subsystem involves $l_c$ such steps. Their combined effect satisfies
\begin{equation}\label{eq:corr_bound}
    \epsilon_{\mathrm{correlations}}
    \;\leq\; 1 - \prod_{l=1}^{l_c}(1-\epsilon_l^{\mathrm{WCP}})
    \;\leq\; \sum_{l=1}^{l_c}\epsilon_l^{\mathrm{WCP}},
\end{equation}
where in the last inequality we have applied the Weierstrass product inequality $i.e.,$
$\prod_l(1-a_l)\geq 1-\sum_l a_l$ for $a_l\in[0,1]$.

Finally, substituting Eqs.~\eqref{eq:ideal_bound} and~\eqref{eq:corr_bound} into Eq.~\eqref{eq:decomp} we have that,
\begin{equation}
    \epsilon_{j_k}
    \;\leq\;
    \sum_{l=1}^{l_c}\epsilon_l^{\mathrm{WCP}}
    \;+\;
    \left(\sqrt{\epsilon_{\mathrm{side}}} + \sum_{l=1}^{l_c}\sqrt{\epsilon_l^{\mathrm{SP}}}\right)^{\!2}.
\end{equation}
Since this bound holds for all $j_k$, taking
$\epsilon = \max_{j_k}\epsilon_{j_k}$ yields Eq.~\eqref{eq:subsection}.

\section{Decoy-state analysis for phase error-rate estimation}\label{app:LP}

As discussed in the main text, extending the framework of \cite{Guille_framework} to estimate the phase-error rate of decoy-state protocols requires to obtain upper and lower bounds on $N_{j,1}^{{\rm det}, b_X}$ for different values of $j$ and $b$ via a decoy-state analysis. This can be accomplished by suitably modifying the linear program in \cref{eq:LP_key}, with the specific modifications detailed below.

First, instead of considering that Alice generates a $Z$-basis state, we now consider that she generates a specific BB84 state $j$. Furthermore, we now condition on Bob obtaining a certain outcome $b_X$ not on his basis choice. For that matter, let us define 
\begin{equation}
\delta_{\mu_p, N^{\rm{det},b_X}_{j}}^{b_X,\Delta}=-N^{{\rm det},b_X}_{j, \mu_p}+ N^{{\rm det},b_X}_{j} \mathcal{F}_{N^{{\rm det},b_X}_{j}, \varepsilon_{\mu_p, N^{{\rm det},b_X}_{j}}^{\Delta}}^{\Delta}\left(N^{{\rm det},b_X}_{j, \mu_p} / N^{{\rm det},b_X}_{j}\right),
\end{equation}
and  
\begin{equation}
\delta_{j, m}^{\Delta}=N_{j} \mathcal{G}_{N_{j}, \varepsilon_{j, m}^{\Delta}}^{\Delta}\left(p_{m}\right)-p_{m} N_{j} \quad\text{for}\quad \Delta\in\{\rm L,U\}.
\end{equation}

Finally, in the second constraint on the LP provided by \cref{eq:LP_key}, we substitute $N_{Z}^{Z}$, by $N_j$ removing the condition on Bob basis choice.

By implementing these changes, the linear program yields,
\begin{equation}
\begin{aligned}\label{eq:LP_beta}
\underline{N}_{j,1}^{{\rm det},b_X}=\text { min } & N^{{\rm det},b_X}_{j,1} \\
\text { s.t. } & N^{{\rm det},b_X}_{j, \mu_p}+\delta^{b_X}_{\mu_p, N^{{\rm det},b_X}_j} \geq \sum_{m=0}^{m_{\rm max}} p_{\mu_p \mid m} N^{{\rm det},b_X}_{j, m} \\
& N^{{\rm det},b_X}_{j, \mu_p}+\delta^{b_X}_{\mu_p, N^{{\rm det},b_X}_j} \leq \sum_{m=0}^{m_{\rm max}} p_{\mu_p \mid m} N^{{\rm det},b_X}_{j, m}+N_j\left(1-\sum_{m=0}^{m_{\rm max}} p_{m}\right)+\delta_{j,> m_{\rm max}} \\
& 0 \leq N^{{\rm det},b_X}_{j, m} \leq \min \left(p_{m} N_j+\delta_{j, m}, N^{{\rm det},b_X}_j\right) \\
& \delta_{\mu_p, N^{{\rm det},b_X}_j}^{b_X,\mathrm{L}} \leq \delta^{ b_X}_{\mu_p, N^{{\rm det},b_X}_j} \leq \delta_{\mu_p, N^{{\rm det},b_X}_j}^{b_X,\mathrm{U}} \\
& \delta_{j,> m_{\rm max}}^{\mathrm{L}} \leq \delta_{j,> m_{\rm max}} \leq \delta_{j,> m_{\rm max}}^{\mathrm{U}} \\
& \delta_{j, m}^{\mathrm{L}} \leq \delta_{j, m} \leq \delta_{j, m}^{\mathrm{U}} \\
& \sum_{p=1}^{\mathcal{P}} \delta^{b_X}_{\mu_p, N^{{\rm det},b_X}_j}=0 \\
& \delta_{j,\leq m_{\rm max}}^{ \mathrm{L}} \leq \sum_{m=0}^{m_{\rm max}} \delta_{j, m} \leq \delta_{j,\leq m_{\rm max}}^{\mathrm{U}} \\
& \sum_{m=0}^{m_{\rm max}} \delta_{j, m}+\delta_{j,> m_{\rm max}}=0 \\
& \forall~1 \leq p \leq \mathcal{P}, \forall~0 \leq m \leq m_{\rm max},
\end{aligned}
\end{equation}
which holds except with probability
\begin{equation}
\varepsilon_{j,1}=\sum_{m=0}^{m_{\rm max}}\left(\varepsilon_{j, m}^{\mathrm{L}}+\varepsilon_{j, m}^{\mathrm{U}}\right)+\sum_{p=1}^\mathcal{P}\left(\varepsilon_{\mu_p, N^{{\rm det},b_X}_j}^{\mathrm{L}}+\varepsilon_{\mu_p, N^{{\rm det},b_X}_j}^{\mathrm{U}}\right)+\varepsilon_{j,\leq m_{\rm max}}^{\mathrm{L}}+\varepsilon_{j,\leq m_{\rm max}}^{\mathrm{U}}+\varepsilon_{j,> m_{\rm max}}^{\mathrm{L}}+\varepsilon_{j,> m_{\rm max}}^{\mathrm{U}}.
\end{equation}

Note that the same linear program could yield an upper bound $\overline{N}_{j,1}^{{\rm det},b_X}$ by changing the minimization for a maximization.

\section{Channel model}\label{app:channel}

We generalize to the decoy-state setting the channel model  introduced in \cite{Guille_framework}. For that matter, let \(j\in\{0_Z,1_Z,0_X,1_X\}\) denote Alice's state choice, with associated angles $\theta_j \in \left\{0,\frac{\pi}{2},\frac{\pi}{4},\frac{3\pi}{4}\right\}$.
Alice emits a phase-randomized weak coherent pulse with average intensity \(\mu_p\in \{\mu_1,\mu_2,\mu_3\}\). To model intensity fluctuations, we assume that the actual intensity \(\mu'_{p}\) is uniformly distributed over the following interval
\begin{equation}
\label{eq:int_dist1}
[\mu_p(1-\delta_{\mu_p}),\,\mu_p(1+\delta_{\mu_p})],
\end{equation}
with density function
\begin{equation}
\label{eq:int_dist2}
g_{\mu_{p}}(\alpha)=\frac{1}{2\mu_p\delta_{\mu_p}},
\end{equation}
even though the proof could straightforwardly accommodate other intensity distribution models.

On Bob's side, he actively chooses a basis \(\xi\in\{Z,X\}\). It is convenient to encode the corresponding basis angle as $\vartheta_\xi=\frac{\pi}{4}\delta_{\xi,X}$, where $\delta_{\xi,X}$ is a Kronecker delta, so that \(\vartheta_Z=0\) and \(\vartheta_X=\pi/4\). We also include a polarization misalignment \(\theta_{\mathrm{mis}}\), as well as a factor $\kappa=1+\delta_{\rm SPF}/\pi$ that serves to model the state preparation flaws for certain parameter $\delta_{\rm SPF}$, and define the effective signal angle at Bob as
\begin{equation}
\tilde{\theta}_j=\kappa\theta_j+\theta_{\mathrm{mis}}.
\end{equation}
This means that, for a fixed basis \(\xi\), the fraction of light directed to Bob's detector associated with the bit \(0\) is $\cos^2(\tilde{\theta}_j-\vartheta_\xi)$, while the orthogonal output, associated with the bit \(1\), carries weight $\sin^2(\tilde{\theta}_j-\vartheta_\xi)$.
Since exchanging the two output ports is equivalent to a shift by \(\pi/2\), both cases can be written compactly by introducing the effective angle
\begin{equation}
\phi_{j,b,\xi}
=
\tilde{\theta}_j-\vartheta_\xi+\frac{\pi}{2}b
=
\kappa\theta_j+\theta_{\mathrm{mis}}-\frac{\pi}{4}\delta_{\xi,X}+\frac{\pi}{2}b,
\end{equation}
which is defined for Bob's outcomes $b\in\{0,1\}$ (in general, $b\in\{0,1,\perp\}$, with $\perp$ being the no detection event). Note that \(\cos^2(\phi_{j,b,\xi})\) gives the fraction of light impinging on detector \(b\), whereas \(\sin^2(\phi_{j,b,\xi})\) gives the fraction impinging on the orthogonal detector.

The total transmittance \(\eta\) is taken to include both channel loss and detector efficiency. For a given actual intensity \(\mu'_p\), the no-click probabilities of detector $b\in\{0,1\}$ and of the orthogonal detector are
\begin{equation}
r_b(j,\mu'_p,\xi)=(1-d_{\rm dc})e^{-\eta\mu'_p\cos^2(\phi_{j,b,\xi})},
\end{equation}
and
\begin{equation}
\bar r_b(j,\mu'_p,\xi)=(1-d_{\rm dc})e^{-\eta\mu'_p\sin^2(\phi_{j,b,\xi})},
\end{equation}
respectively, where \(d_{\rm dc}\) is the dark-count probability of each detector. Thus, assuming threshold detectors and random assignment of double-click events to single-click events, the conditional probability of Bob obtaining the conclusive outcome \(b\in\{0,1\}\) is
\begin{equation}
\begin{split}
P(b\mid j,\mu'_p,\xi)
&=
\Big(1-r_b(j,\mu'_p,\xi)\Big)\bar r_b(j,\mu'_p,\xi)
+\frac{1}{2}\Big(1-r_b(j,\mu'_p,\xi)\Big)\Big(1-\bar r_b(j,\mu'_p,\xi)\Big)
\\
& =
\left[1-(1-d_{\rm dc})e^{-\eta\mu'_p\cos^2(\phi_{j,b,\xi})}\right]
\left[
(1-d_{\rm dc})e^{-\eta\mu'_p\sin^2(\phi_{j,b,\xi})}
+
\frac{1}{2}\left(1-(1-d_{\rm dc})e^{-\eta\mu'_p\sin^2(\phi_{j,b,\xi})}\right)
\right],
\end{split}
\end{equation}
and the probability of an inconclusive event $b=\perp$ is
\begin{equation}
P(\perp\mid j,\mu'_p,\xi)=(1-d_{\rm dc})^2e^{-\eta\mu'_p}.
\end{equation}
Finally, averaging over the intensity fluctuations yields
\begin{equation}
P(b\mid j,\mu_p,\xi)
=
\int_{\mu_p(1-\delta_{\mu_p})}^{\mu_p(1+\delta_{\mu_p})}
g_{\mu_p}(\alpha)\,P(b\mid j,\alpha,\xi)\,d\alpha,
\qquad
b\in\{0,1,\perp\}.
\end{equation}

Lastly, note that the statistics of this channel model are independent of the random global phase of the optical pulses, and therefore for simplicity we have considered a fixed phase.

\section{Numerical simulations}\label{app:numerical_sim}

Here, we present additional numerical simulations to assess the performance of decoy-state QKD as a function of the side-channel leakage $\epsilon_{\rm side}$, state-preparation error $\delta_{\rm SPF}$, correlation strength $\epsilon_{1}$, detector tolerances $\Delta_{\rm det}=\Delta_{\rm dc}$, and total number of rounds $N$. For that matter, we employ the channel model described in \cref{app:channel} together with the parameter values listed in \cref{tab:values}. 

In all the figures in this Appendix, like those in the main text, we plot the finite-key secret key rate as a function of the channel loss, not the overall system transmittance. In particular, the influence of unbounded bit and basis correlations is analysed in \cref{fig:sweep_epsilonl}, where we evaluate the secret-key rate for different values of the correlations strength $\epsilon_1$.  \cref{fig:sweep_deltas} investigates the joint effect of state-preparation flaws $\delta_{\rm SPF}$ and detector tolerances $\Delta_{\rm det}=\Delta_{\rm dc}$. Finally, \cref{fig:sweep_blocks} shows the dependence of the secret-key rate on the block size for different values of $N$. Unless stated otherwise, the simulations are performed with $\epsilon_{\rm side}=10^{-6}$, $\delta_{\rm SPF}=0.063$, $\epsilon_{1}=10^{-4}$, $\Delta_{\eta}=\Delta_{\rm dc}=0.02$, and $N=10^{12}$. Note that in \cref{fig:sweep_deltas}, the effect of state-preparation flaws is magnified by  the detector tolerances. A state-preparation flaw sends a small fraction of each signal to the wrong detector, and this effect is amplified by the misalignment of the channel; thus, even for ideal detectors, increasing $\theta_{\rm mis}$ further separates the solid and dashed curves. In addition, since the detectors are characterized only within finite tolerances, the eavesdropper may bias the detection-efficiency mismatch in favour of that detector, enhancing the one that registers the erroneous component while suppressing the one that registers the correct outcome. The detector mismatch therefore acts as a lever that amplifies the cost of the flaw, so that the two imperfections combine and the penalty associated with the source flaws grows with the detector tolerances.

\begin{figure}[]
\centering
\begin{subfigure}[t]{0.49\linewidth}
\centering
    \includegraphics[width=\linewidth]{sweep_epsilonl.eps}
    \caption{Finite secret-key rate of a decoy-state BB84 protocol incorporating both source and detector imperfections as a function of the channel loss in dB for different values of the correlation strength $\epsilon_1$. We set the state-preparation error parameter to $\delta_{\rm SPF}=0.063$~\cite{Guille_framework,SPF1,SPF2},  the side channel leakage to $\epsilon_{\rm side}=10^{-6}$, the detector tolerances to $\Delta_{\eta}=\Delta_{\rm dc}=0.02$ and we use a block size of $N=10^{12}$.}
    \label{fig:sweep_epsilonl}
\end{subfigure}
\hfill
\begin{subfigure}[t]{0.49\linewidth}
    \centering
    \includegraphics[width=\linewidth]{sweep_Deltas.eps}
    \caption{Finite secret-key rate of a decoy-state BB84 protocol incorporating both source and detector imperfections as a function of the channel loss in dB for different magnitudes of state preparation flaws $\delta_{\rm SPF}$ and detector tolerances $\Delta_{\eta}$ and $\Delta_{\rm dc}$. Solid lines represent $\delta_{\rm SPF}=0$, while dashed lines correspond to $\delta_{\rm SPF}=0.063$~\cite{Guille_framework,SPF1,SPF2}. We set the side channel leakage to $\epsilon_{\rm side}=10^{-6}$, the correlations strength to $\epsilon_1=10^{-4}$ and we use a block size of $N=10^{12}$.}
    \label{fig:sweep_deltas}
\end{subfigure}
\vspace{0.2cm}

\begin{subfigure}[t]{0.49\linewidth}
    \centering
    \includegraphics[width=\linewidth]{sweep_blocks.eps}
    \caption{Finite secret-key rate of a decoy-state BB84 protocol incorporating both source and detector imperfections as a function of the channel loss in dB for different block sizes $N$. We set the state-preparation error parameter to $\delta_{\rm SPF}=0.063$~\cite{Guille_framework,SPF1,SPF2},  the side channel leakage to $\epsilon_{\rm side}=10^{-6}$, the correlations strength to $\epsilon_1=10^{-4}$ and the detector tolerances to $\Delta_{\eta}=\Delta_{\rm dc}=0.02$. }
    \label{fig:sweep_blocks}
\end{subfigure}

\caption{Numerical simulations of the finite secret-key rate of the decoy-state BB84 protocol in the presence of source and detector imperfections. The secret-key rate is shown as a function of the channel loss for different values of the block size $N$,  correlation strength $\epsilon_1$, state-preparation flaws $\delta_{\rm SPF}$ and detector tolerances $\Delta_{\eta}$ and $\Delta_{\rm dc}$.}
\label{fig:combined}
\end{figure}

\clearpage
\end{widetext}

\bibliography{refs}

\end{document}